	\newcommand{\eg}{e.g.,\xspace}
	\newtheorem{lemma}{Lemma}%
	\newtheorem{theorem}{Theorem}%
	\newtheorem{corollary}{Corollary}%
	\newtheorem{example}{Example}
			\newtheorem{remark}{Remark}
\definecolor{light-gray}{gray}{0.95}
	\newcommand{\pref}{\succsim \xspace}
	\newcommand{\nat}{\mathbb{N}}
	\newcommand{\natz}{\nat_0}
	\newcommand{\ceil}[1]{\left\lceil #1 \right\rceil}
	\newcommand{\floor}[1]{\left\lfloor #1 \right\rfloor}
	\newcommand{\card}[1]{\left| #1 \right|}
	\newlength{\wordlength}
	\newcommand{\set}[1]{\{#1\}}
	\newcommand{\midd}{\mathbin{:}}
\begin{document}

	\title{Fair Assignment of Indivisible Objects\\ under Ordinal Preferences}%\tnoteref{t1}} 
	%\tnotetext[t1]{Thanks!} 
	% \tnotetext[t2]{The second title footnote which is a longer longer than the first one and with an intention to fill
	% in up more than one line while formatting.}

		\author{Haris Aziz}\ead{haris.aziz@nicta.com.au}
	\author{Serge Gaspers} \ead{sergeg@cse.unsw.edu.au}
		\author{Simon Mackenzie} \ead{simon.mackenzie@nicta.com.au}
	\author{Toby Walsh} \ead{toby.walsh@nicta.com.au}

\address{NICTA and University of New South Wales, Sydney 2052, Australia} 
% \corref{cor1}
% \cortext[cor1]{Corresponding author}

	%\fntext[fn1]{Thanks!} 

	\begin{abstract}
		We consider the discrete assignment problem in which agents express ordinal preferences over objects and these objects are allocated to the agents in a fair manner. We use the stochastic dominance relation between fractional or randomized allocations to systematically define varying notions of proportionality and envy-freeness for discrete assignments.  The computational complexity of checking whether a fair assignment exists is studied for these fairness notions.
		We also characterize the conditions under which a fair assignment is guaranteed to exist. For a number of fairness concepts, polynomial-time algorithms are presented to check whether a fair assignment exists. Our algorithmic results also extend to the case of unequal entitlements of agents. Our NP-hardness result, which holds for several variants of envy-freeness, answers an open question posed by Bouveret, Endriss, and Lang (ECAI 2010). We also propose fairness concepts that always suggest a non-empty set of assignments with meaningful fairness properties. Among these concepts, optimal proportionality and optimal weak proportionality appear to be desirable fairness concepts.
	\end{abstract}

	\begin{keyword}
		Fair Division
		\sep Resource Allocation
		\sep Envy-freeness 
		\sep Proportionality
	\end{keyword}

\maketitle

%\linenumbers
	\section{Introduction}

	A basic yet widely applicable problem in computer science and economics is to allocate discrete objects to agents given the preferences of the agents over the objects. The setting is referred to as the \emph{assignment problem} or the \emph{house allocation problem}~\citep[see, \eg][]{ACMM05a,BBL14a,DeHi88a,Gard73b, Manl13a,Wils77a,Youn95b}. %ACMM05a Wils77a,
	In this setting, there is a set of agents $N=\{1,\ldots, n\}$, a set of objects $O=\{o_1,\ldots, o_{m}\}$ with each agent $i\in N$ expressing ordinal preferences $\pref_i$ over $O$. The goal is to allocate the objects among the agents in a fair or optimal manner without allowing transfer of money. The assignment problem is a fundamental setting within the wider domain of \emph{fair division} or \emph{multiagent resource allocation}~\citep{CDE+06a}. The model is applicable to many resource allocation or fair division settings where the objects may be public houses, school seats, course enrollments, kidneys for transplant, car park spaces, chores, joint assets of a divorcing couple, or time slots in schedules. Fair division has become a major area in AI research in the last
decade, and especially the last few years~\citep[see, \eg][]{Aziz14a,BEL10a,BoLa08a,BoLa11a,BoLe14a,BFM+12a,CDE+06a,CLPP11a,DHKP11a,KPG13a,Proc09a}. 

	In this paper, we consider the fair assignment of indivisible objects.
	Two of the most fundamental concepts of fairness are \emph{envy-freeness} and \emph{proportionality}. Envy-freeness requires that no agent considers that another agent's allocation would give him more utility than his own. Proportionality requires that each agent should get an allocation that gives him at least $1/n$ of the utility that he would get if he was allocated all the objects. When agents' ordinal preferences are known but utility functions are not given, then ordinal notions of envy-freeness and proportionality need to be formulated.
	We consider a number of ordinal fairness concepts. Most of these concepts are based on the \emph{stochastic dominance (SD)} relation which is a standard way of comparing fractional/randomized allocations. An agent prefers one allocation over another with respect to the SD relation if he gets at least as much utility from the former allocation as the latter for all cardinal utilities consistent with the ordinal preferences.
	Although this paper is restricted to discrete assignments, using stochastic dominance to define fairness concepts for discrete assignments turns out to be fruitful. The fairness concepts we study include \textit{SD envy-freeness, weak SD envy-freeness, possible envy-freeness, SD proportionality,} and \emph{weak SD proportionality.} 
	We consider the problems of computing a discrete assignment that satisfies some ordinal notion of fairness if one exists, and the problems of verifying whether a given assignment satisfies the fairness notions.

	\paragraph{Contributions}

	We present a systematic way of formulating fairness properties in the context of the assignment problem. The logical relationships between the properties are proved. Interestingly, our framework leads to new solution concepts such as weak SD  proportionality that have not been studied before. The motivation to study a range of fairness
	properties is that, depending on the situation, only some of them are
	achievable. In addition, only some of them can be computed efficiently. In order to find fairest achievable assignment, one can start by checking whether there exists a fair assignment for the strongest notion of fairness. If not, one can try the next fairness concept that is weaker than the one already checked. 

	We present a comprehensive study of the computational complexity of computing fair assignments under ordinal preferences. In particular, we present a polynomial-time algorithm to check whether an SD  proportional exists even when agents may express indifferences.
	The algorithm generalizes the main result of \citep{PrWo12a} (Theorem 1) who focused on strict preferences. 
	For the case of two agents, we obtain a polynomial-time algorithm to check whether an SD  envy-free assignment exist. The result generalizes Proposition 2 in \citep{BEL10a} in which preferences over objects were assumed to be strict.
	For a constant number of agents, we propose a polynomial-time algorithm to check whether a weak SD  proportional assignment exists. As a corollary, for two agents, we obtain a polynomial-time algorithm to check whether a weak SD  envy-free or a possible envy-free assignment exists. Even for an unbounded number of agents, if the preferences are strict, we characterize the conditions under which a weak SD  proportional assignment exists. 
	We show that the problems of checking whether possible envy-free, SD envy-free, or weak SD envy-free assignments exist are NP-complete. The result for possible envy-freeness answers an open problem posed in \citep{BEL10a}. Our computational results are summarized in Table~\ref{table:summary-fair}.

We show that our two main algorithms can be extended to the case where agents have different entitlements over the objects or if we additionally require the assignment to be Pareto optimal.
Our study highlights the impacts of the following settings: $i)$ randomized/fractional versus discrete assignments, $ii)$ strict versus non-strict preferences, and $iii)$ multiple objects per agent versus a single object per agent.
	
	Since the fairness concepts we introduce may not be guaranteed to exist, we suggest possible ways to extend the fairness concepts. Firstly, we consider the problem of maximizing the number of agents for whom the corresponding fairness constraint is satisfied. A criticism of this approach is that there can still be agents who are completely dissatisfied. We then consider an alternative approach in which the proportionality constraints is weakened in a natural and gradual manner. We refer to the concepts as \emph{optimal proportionality} and \emph{optimal weak proportionality}. The fairness concepts are not only attractive but we show that an optimal proportional assignment can be computed in polynomial time and an optimal weak proportional assignment can be computed in polynomial time for a constant number of agents.

\begin{figure}
\newlength{\hsep}
\setlength{\hsep}{1cm}
\newlength{\vsep}
\setlength{\vsep}{1.5cm}
\tikzset{
  >=latex,
  equivalent/.style={double},
  inclusion/.style={<-},
}
	\scalebox{0.86}{
\begin{tikzpicture}
  \draw node[draw] (sdef) {SD EF};
  \draw (sdef.east) ++ (\hsep, 0) node[anchor=west, draw] (nef) {Necessary EF};
  \draw (nef.east) ++ (\hsep, 0) node[anchor=west, draw] (ncef) {NC EF};

  \draw (ncef.south) ++ (0, -\vsep) node[draw] (sdprop) {SD Prop};
  \draw (sdprop.east) ++ (\hsep, 0) node[anchor=west, draw] (nprop) {Necessary Prop};

  \draw (sdprop.south) ++ (0, -\vsep) node[draw] (wsdprop) {Weak SD Prop};
  \draw (nprop.south) ++ (0, -\vsep) node[anchor=west, draw] (pprop) {Possible Prop};

  \draw (sdef.south) ++ (-0.5\hsep, -\vsep) node[draw] (pef) {Possible EF};

  \draw (pef.south) ++ (-0.5\hsep, -\vsep) node[anchor=east,draw] (wsdef) {weak SD EF};
  \draw (wsdef.east) ++ (\hsep, 0) node[anchor=west, draw] (pcef) {PC EF};

  \draw[equivalent] (wsdprop) -- (pprop);
  \draw[equivalent] (wsdef) -- (pcef);
  \draw[equivalent] (sdprop) -- (nprop);
  \draw[equivalent] (sdef) -- (nef);
  \draw[equivalent] (nef) -- (ncef);
  \draw[equivalent] (sdef) to[bend left] (ncef);

  \draw[inclusion] (wsdef) -- (pef);
  \draw[inclusion] (pcef) -- (pef);
  \draw[inclusion] (wsdprop) -- (nprop);
  \draw[inclusion] (wsdprop) -- (sdprop);
  \draw[inclusion] (pprop) -- (sdprop);
  \draw[inclusion] (pprop) -- (nprop);
  \draw[inclusion] (sdprop) -- (sdef);
  \draw[inclusion] (sdprop) -- (nef);
  \draw[inclusion] (sdprop) -- (ncef);
  \draw[inclusion] (nprop) -- (sdef);
  \draw[inclusion] (nprop) -- (nef);
  \draw[inclusion] (nprop) -- (ncef);
  \draw[inclusion] (pef) -- (sdef);
  \draw[inclusion] (pef) -- (nef);
  \draw[inclusion] (pef) -- (ncef);
\end{tikzpicture}
}
\caption{Inclusion relationships between fairness concepts. 
Envy-freeness is abbreviated as EF and Proportionality is abbreviated as Prop. Possible completion is abbreviated as PC. Necessary completion is abbreviated as NC. An arrow represents inclusion. For example, every SD envy-free outcome is also SD proportional. Double lines represent equivalence. For example, SD EF and Necessary EF are equivalent.}\label{fig:inclfigure}
\end{figure}
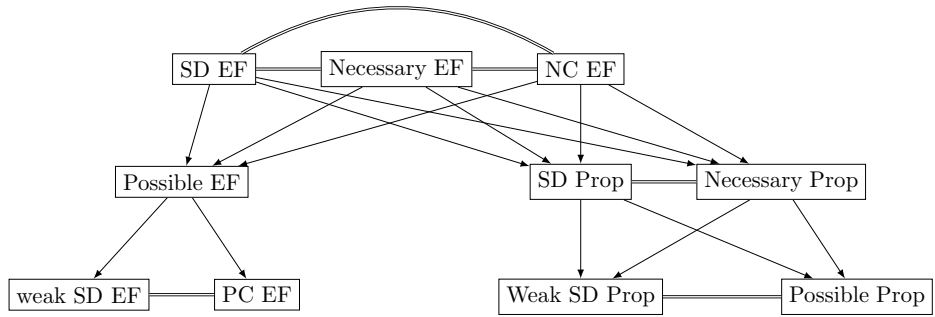

	\begin{table}[h!]
	%	\small
		\centering
			\scalebox{0.9}{
	\centering
	%\scriptsize 
	\begin{tabular}{ll}
	\toprule
	% \midrule
%	All the concepts&in P& \textbf{in P for constant $m$ (Remark~\ref{th:const-objects})}\\
	\multirow{2}{*}{Weak SD proportional}&\textbf{in P for strict prefs~(Th.~\ref{th:strict-weak-SD  prop})}\\
	&\textbf{in P for constant $n$~(Th.~\ref{th:weak-SD  prop-inP})}\\
	\midrule
	SD proportional&\textbf{in P~(Th.~\ref{th:SD  prop-inP})}\\
	\midrule 
	\multirow{3}{*}{Weak SD envy-free}&\multirow{3}{*}\textbf{NP-complete (Th.~\ref{th:envy-free-is-hard})}\\
&in P for strict prefs~\citep{BEL10a}\\
	&\textbf{in P for $n=2$ (Cor.~\ref{cor:weak-envy-2agents})}\\
	\midrule
	\multirow{3}{*}{Possible envy-free}&\textbf{NP-complete (Th.~\ref{th:envy-free-is-hard})}\\
	&\textbf{in P for strict prefs}\\
	&\textbf{in P for $n=2$ (Cor.~\ref{cor:weak-envy-2agents})}\\
	\midrule
	\multirow{2}{*}{SD envy-free}&NP-complete even for strict prefs~\citep{BEL10a}\\%; 
	&\textbf{in P for $n=2$ (Cor.~\ref{cor:2agents-SD  envy})}\\%; 
	%&&\textbf{in P for identical prefs (Th.~\ref{th:envy-identical})}\\%; 
	\bottomrule
	\end{tabular}
	}
	\caption{Complexity of checking the existence of a fair assignment of indivisible goods for $n$ agents and $m$ objects.  The results in bold are from this paper.}
	\label{table:summary-fair}
	\end{table}

%
% \serge{The `Verify' column does not seem informative, since it follows from the 'Exists' column that all those problems are in NP.}

	\section{Related work}

	Proportionality and envy-freeness are two of the most established fairness concepts. Proportionality dates back to at least the work of \citet{Stei48a} in the context of cake-cutting. It is also referred to as \emph{fair share guarantee} in the literature~\citep{Moul03a}.
	A formal study of envy-freeness in microeconomics can be traced back to the work of \citet{Fole67a}.

	The computation of fair discrete assignments has been intensely studied in the last decade within computer science. In many of the papers considered, agents express cardinal utilities for the objects and the goal is to compute fair assignments~\citep[see \eg][]{BeDa05a, BoLa08a, BoLe14a, DeHi88a, Golo05a,LMMS04a, NRR13a, PrWa14a}. %NRR13a
	A prominent paper is that of \citet{LMMS04a} in which algorithms for approximately envy-free assignments are discussed. It follows from \citep{LMMS04a} that even when two agents express cardinal utilities, checking whether there exists a proportional or envy-free assignment is NP-complete. A closely related problem is the \emph{Santa Claus problem} in which the agents again express cardinal utilities for objects and the goal is to compute an assignment which maximizes the utility of the agent that gets the least utility~\citep[see \eg][]{AsSa10a,BeDa05a,FGM14a,NRR13a}. %NRR13a,
	Just as in \citep{BEL10a,PrWo12a}, we consider the setting in which agents only express ordinal preferences over objects. There are some merits of considering this setting. Firstly, ordinal preferences require elicitation of less information from the agents. Secondly, some of the weaker ordinal fairness concepts we consider may lead to positive existence or computational results. Thirdly, some of the stronger ordinal fairness concepts we consider are more robust than the standard fairness concepts. 
	Fourthly, when the exchange of money is not possible, mechanisms that elicit cardinal preferences may be more susceptible to manipulation because of the larger strategy space.
	Finally, it may be the case that cardinal preferences are simply not available.

There are other papers in fair division in which agents  explicitly express ordinal preferences over sets of objects rather than simply expressing preferences over objects. For these more expressive models, the computational complexity of computing fair assignments is either even higher~\citep{CDE+06a,KBKZ09a} or representing preferences require exponential space~\citep{Aziz14a,BKK12a}. In this paper, we %achieve conciseness and expressivity by 
restrict agents to simply express ordinal preferences over objects. 
Some papers assume ordinal preferences but superimpose a cardinal utilities via some scoring function~\citep[see \eg][]{BEF03a}. However, this approach does not allow for indifferences in a canonical way and has led to negative complexity results~\citep{BBL14a,DaSc14a,GKK10a}. \citet{GKK10a} assumed that agents have lexicographic preferences and tried to maximize the lexicographic signature of the worst off agents. However the problem is NP-hard if there are more than two equivalence classes.

% \serge{In principle, indifferences can be implemented by giving the same score to an object. So, the approach of superimposing cardinal utilities DOES allow for indifferences. Maybe those specific papers did not consider indifferences? Haris: mentioned canonical way...}

	The ordinal fairness concepts we consider are SD envy-freeness; weak SD  envy-freeness; possible envy-freeness; SD  proportionality; and weak SD  proportionality. 
	Not all of these concepts are new but they have not been examined systematically for discrete assignments. SD  envy-freeness and weak SD  envy-freeness have been considered in the randomized assignment domain~\citep{BoMo01a} but not the discrete domain. ~\citet{BoMo01a} referred to SD  envy-freeness and weak SD  envy-freeness as envy-freeness and weak envy-freeness.
	SD  envy-freeness and weak SD  envy-freeness have been considered implicitly for discrete assignments but the treatment was axiomatic~\citep{BEF03a,BEF01a}. Mathematically equivalent versions of SD  envy-freeness and weak SD envy-freeness have been considered by \citet{BEL10a} but only for strict preferences. They referred to them as necessary (completion) envy-freeness and possible (completion) envy-freeness.
	A concept equivalent to SD  proportionality was examined by \citet{PrWo12a} but again only for strict preferences. \citet{PrWo12a} referred to weak SD proportionality simply as ordinal fairness.
	 Interestingly, weak SD  or possible proportionality has not been studied in randomized or discrete settings (to the best of our knowledge).

	Envy-freeness  is well-established in fair division, especially cake-cutting. 
	Fair division of goods has been extensively studied within economics but
	in most of the papers, either the goods are divisible or agents are
	allowed to use money to compensate each other~\citep[see \eg][]{Vari74a}.
 In the model we consider, we do not allow money transfers. %In a number of settings, exchange of money may \emph{not} be admissible.

	\section{Preliminaries}

	An assignment problem is a triple $(N,O,\pref)$ such that $N=\{1,\ldots, n\}$ is a set of agents, $O=\{o_1,\ldots, o_m\}$ is a set of objects, and the preference profile $\pref=(\pref_1,\ldots, \pref_n)$ specifies for each agent $i$ his complete and transitive preference $\pref_i$ over $O$.
	Agents may be indifferent among objects.
	We denote $\pref_i: E_i^1,\dots,E_i^{k_i}$ for each agent $i$ with equivalence classes in decreasing order of preferences.
	Thus, each set $E_i^j$ is a maximal equivalence class of objects among which agent $i$ is indifferent, and $k_i$ is the number of equivalence classes of agent $i$. If an equivalence class is a singleton $\{o\}$, we list the object $o$ in the list without the curly brackets. In case each equivalence class is a singleton, the preferences are said to be \emph{strict}.
For any set of objects $O'\subseteq O$, $\max_{\pref_i}(O')=\{o\in O' \midd o \pref_i o' \text{ for each } o'\in O'\}$ and $\min_{\pref_i}(O')=\{o\in O' \midd o' \pref_i o \text{ for each } o'\in O'\}$.

	A fractional assignment $p$ is a $(n\times m)$ matrix $[p(i)(o_j)]$ such that $ p(i)(o_j) \in [0,1]$ for all $i\in N$, and $o_j\in O$, and $\sum_{i\in N}p(i)(o_j)= 1$ for all $j\in \{1,\ldots, m\}$.
	The value $p(i)(o_j)$ represents the probability of object $o_j$ being allocated to  agent $i$. Each row $p(i)=(p(i)(o_1),\ldots, p(i)(o_m))$ represents the allocation of agent $i$. 
	The set of columns correspond to the objects $o_1,\ldots, o_m$.
	A fractional assignment is \emph{discrete} if $p(i)(o)\in \{0,1\}$ for all $i\in N$ and $o\in O$.

\begin{example}\label{example:assign}
	Consider an assignment problem $(N,O,\pref)$ where $|N|=2$, $O=o_1,o_2,o_3,o_4$ and the preferences of the agents are as follows
		\begin{align*}
		1:&\quad o_1,o_2,o_3,o_4\\
		2:&\quad o_2,o_3,o_1,o_4
	\end{align*}
	Then,
	\[p=\begin{pmatrix}
			1&0&0&1\\
		  0&1&1&0
			\end{pmatrix}
		\]
	is a discrete assignment in which agent $1$ gets $o_1$ and $o_4$ and agent $2$ gets $o_2$ and $o_3$.
\end{example}

	A \emph{uniform assignment} is a fractional assignment in which each agent gets $1/n$-th of each object. Although we will deal with discrete assignments, the fractional uniform assignment is useful in defining some fairness concepts. Similarly, we will use the SD relation to define relations between assignments. Our algorithmic focus will be on computing discrete assignments only even though concepts are defined using the framework of fractional assignments.

Informally, an agent `SD prefers' one allocation over another if for each object $o$, the former allocation gives the agent at least as many objects that are  at least as preferred as $o$ as the latter allocation. More formally, given two fractional assignments $p$ and $q$, $p(i) \succsim_i^{SD} q(i)$, i.e., agent $i$ \emph{SD~prefers} allocation $p(i)$ to allocation $q(i)$ if 
	\[\sum_{o_j\in \set{o_k\midd o_k\succsim_i o}}p(i)(o_j) \ge \sum_{o_j\in \set{o_k\midd o_k\succsim_i o}}q{(i)(o_j)} \text{ for all } o\in O.\] He \emph{strictly SD prefers} $p(i)$ to $q(i)$ if $p(i) \succsim_i^{SD} q(i)$ and $\neg [q(i) \succsim_i^{SD} p(i)]$. Although each agent $i$ expresses ordinal preferences over objects, he could have a private cardinal utility $u_i$ consistent with $\pref_i$: $u_i(o)\geq u_i(o') \text{ if and only if } o\pref_i o'.$
	The set of all utility functions consistent with $\pref_i$ is denoted by $\mathcal{U}(\pref_i)$. 	When we consider agents' valuations according to their cardinal utilities, then we will assume additivity, that is $u_i(O')=\sum_{o\in O'}u_i(o)$ for each $i\in N$ and $O'\subseteq O$.

	An assignment $p$ is \emph{envy-free} if the total utility each agent $i$ gets for his allocation is at least the utility he would get if he had any another agent's allocation: 
	\[u_i(p(i))\geq u_i(p(j)) \text{ for all } j\in N.\]
	Note that we sometimes interpret a discrete allocation $p(i)$ as a set, namely the set of objects allocated to agent $i$.
	An assignment is \emph{proportional} if each agent gets at least $1/n$-th of the utility he would get if he got all the objects: \[u_i(p(i))\geq u_i(O)/n.\]
	%~\citep[see \eg][]{DeHi88a}. 

		Note that we require that the assignment is complete, that is, each object is allocated. In the context of fractional assignments, an assignment is complete if no fraction of an object is unallocated.
In the absence of this requirement a null assignment is obviously envy-free. On the other hand a null assignment is not proportional.  %Finally we will assume that all agents have the same entitlement to the objects.

			When allocations are discrete and when agents may get more than one object, we will also consider preference relations over sets of objects. One way of extending preferences over objects to preferences over sets of objects is via the \emph{responsive set extension}~\citep{BBP04a}.
In the \emph{responsive set extension}, preferences over objects are extended to preferences over sets of objects in such a way that a set in which an object is replaced by a more preferred object is more preferred.
Formally, for each agent $i\in N$, his preferences $\pref_i$ over $O$ are extended to his preferences $\pref_i^{RS}$ over $2^O$ via the responsive set extension as follows. For all $S\subset O$, for all $o\in S$, for all $o'\in O\setminus S$,
\begin{align*}
S &\succsim_i^{RS} (S\setminus \{o\})\cup \{o'\} &\text{ if } o'\mathrel{\succsim_i} o,\text{ and}\\
%(S\setminus \{o\})\cup \{o'\} &\succsim_i^{RS} S &\text{ if } o'\mathrel{\succsim_i} o,\text{ and}\\
S &\succ_i^{RS} S\setminus \{o\}.
\end{align*}
Equivalent, we say that $p(i) \mathrel{\pref_i^{RS}} q(i)$ if and only if  there is an injection $f$ from $q(i)$ to $p(i)$ such that for each $o\in q(i)$, $f(o)\pref_i o$.

% For discrete assignments $p$ and $q$, note that $p(i) \mathrel{\pref_i^{SD}} q(i)$ if and only if $p(i) \mathrel{\pref_i^{RS}} q(i)$.% ~\citep[see \eg][]{Aziz14c}.

\begin{theorem}\label{prop:sd,rs,util}
	For discrete assignments $p$ and $q$, the following are equivalent.
	\begin{enumerate}
		\item \label{item:sd}	$p(i) \mathrel{\pref_i^{SD}} q(i)$.
				\item \label{item:util}  $\forall u_i\in \mathcal{U}(\pref_i) \text{, } u_i(p(i))\geq u_i(q(i)).$ 
		\item  \label{item:rs} $p(i) \mathrel{\pref_i^{RS}} q(i)$. 
\end{enumerate}
	\end{theorem}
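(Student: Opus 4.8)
The plan is to prove the three statements equivalent by a cycle of implications, \ref{item:sd}~$\Rightarrow$~\ref{item:rs}~$\Rightarrow$~\ref{item:util}~$\Rightarrow$~\ref{item:sd}. The first move is to recast the SD condition in a purely combinatorial form tailored to discrete allocations. For a set $S\subseteq O$ and a class index $t\in\{1,\dots,k_i\}$, write $N_t(S)=\card{S\cap(E_i^1\cup\cdots\cup E_i^t)}$ for the number of objects of $S$ lying in one of the top $t$ equivalence classes of agent $i$. Since $\set{o_k\midd o_k\succsim_i o}=E_i^1\cup\cdots\cup E_i^t$ for every $o\in E_i^t$, and since $p$ and $q$ are discrete, the defining family of SD inequalities collapses to one inequality per class: $p(i)\succsim_i^{SD} q(i)$ holds if and only if $N_t(p(i))\ge N_t(q(i))$ for all $t\in\{1,\dots,k_i\}$.

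The implication \ref{item:sd}~$\Rightarrow$~\ref{item:rs} is the combinatorial heart of the argument and the step I expect to be the main obstacle. Assuming the cumulative inequalities $N_t(p(i))\ge N_t(q(i))$, I would construct the required injection $f\colon q(i)\to p(i)$ greedily, sweeping the equivalence classes from the most preferred ($t=1$) to the least preferred. I maintain a pool of ``available'' objects of $p(i)$; upon reaching class $t$ I first add the objects of $p(i)\cap E_i^t$ to the pool and then match every still-unmatched object of $q(i)\cap E_i^t$ to an arbitrary pool object, deleting it from the pool. Any object drawn from the pool at stage $t$ sits in some class $E_i^s$ with $s\le t$, hence is at least as preferred as the class-$t$ object it is assigned to, which secures $f(o)\succsim_i o$; moreover such an object remains valid for every later (less preferred) demand, so the choice within the pool is immaterial. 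Feasibility---that the pool never runs dry---amounts exactly to requiring that the cumulative demand through class $t$, namely $N_t(q(i))$, never exceeds the cumulative supply through class $t$, namely $N_t(p(i))$, which is the reformulated SD condition. (Equivalently, these inequalities are precisely the Hall condition for saturating $q(i)$ in the bipartite graph joining each object to the objects of $p(i)$ that are at least as good.)

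For \ref{item:rs}~$\Rightarrow$~\ref{item:util}, I take any $u_i\in\mathcal{U}(\pref_i)$ together with the injection $f$ supplied by \ref{item:rs}, and chain
\[u_i(q(i))=\sum_{o\in q(i)}u_i(o)\le \sum_{o\in q(i)}u_i(f(o))=\sum_{o'\in f(q(i))}u_i(o')\le \sum_{o'\in p(i)}u_i(o')=u_i(p(i)),\]
where the first inequality uses $f(o)\succsim_i o$ with consistency of $u_i$, and the last uses $f(q(i))\subseteq p(i)$ together with nonnegativity of utilities on the leftover objects $p(i)\setminus f(q(i))$. Finally, \ref{item:util}~$\Rightarrow$~\ref{item:sd} I would prove by contraposition: if some class $t^\ast$ violates the reformulated condition, so that $N_{t^\ast}(q(i))>N_{t^\ast}(p(i))$, I choose a consistent utility that is strictly decreasing across classes but assigns every object in $E_i^1\cup\cdots\cup E_i^{t^\ast}$ a value close to a large constant $M$ and every remaining object a value close to $0$; for $M$ large enough the top block dominates, yielding $u_i(q(i))>u_i(p(i))$ and contradicting \ref{item:util}. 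I would flag the one place the standing convention matters: the direction into \ref{item:util} relies on utilities being nonnegative (the same monotonicity that the responsive set extension encodes through $S\succ_i^{RS}S\setminus\{o\}$), since otherwise a larger bundle of indifferent but negatively valued objects would be SD-preferred yet utility-inferior, so the equivalence genuinely requires that assumption.
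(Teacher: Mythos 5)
Your proof is correct, and it follows the same basic skeleton as the paper's --- the same cycle of implications and the same key observation that for discrete allocations the SD condition collapses to one cumulative counting inequality per equivalence class --- but it differs in execution at two points. For \ref{item:sd}~$\Rightarrow$~\ref{item:rs}, the paper argues by contraposition via Hall's theorem: if no saturating matching exists in the bipartite ``at least as good'' graph, a deficient set $O'$ can be taken to be an upper set of $q(i)$, and its deficiency is exactly a violated cumulative inequality. Your greedy pool construction is the direct, constructive counterpart of that argument (as you note parenthetically, your feasibility condition \emph{is} the Hall condition restricted to upper sets); it is slightly longer but self-contained and makes the injection explicit. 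For the link between \ref{item:sd} and \ref{item:util}, the paper simply cites this as a known characterization of the SD relation, whereas you prove \ref{item:util}~$\Rightarrow$~\ref{item:sd} from scratch with the block utility concentrated on the top $t^\ast$ classes; that is the standard argument behind the cited fact and is fine. One small credit to your version: you explicitly flag that the implications into \ref{item:util} require utilities to be nonnegative (so that discarding the leftover objects $p(i)\setminus f(q(i))$ cannot hurt), an assumption the paper's own proof of \ref{item:rs}~$\Rightarrow$~\ref{item:util} uses silently and which is indeed what the monotonicity clause $S\succ_i^{RS}S\setminus\{o\}$ of the responsive extension encodes.
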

	\begin{proof}
Firstly, \ref{item:sd} and \ref{item:util} are known to be equivalent~\citep[see \eg][]{ABS13a,Cho12a,KaSe06a}.

We now show that \ref{item:rs} implies \ref{item:util}.
If $p(i) \mathrel{\pref_i^{RS}} q(i)$, then we know that for each object allocated to $i$ in $q(i)$, there is an injection which maps the object to an object in $p(i)$ which is at least as preferred by $i$. Hence, for each~$u_i\in \mathcal{U}(\pref_i)$, we have that $u_i(p(i))\geq u_i(q(i)).$ 

We now show that \ref{item:sd} implies \ref{item:rs}.
Assume that $p(i) \mathrel{\not\pref_i^{RS}} q(i)$ .
Consider a bipartite graph $G=(q(i)\mathrel{\cup} p(i),E)$ where $\{o,o'\}\in E$ if $o\in q(i)$, $o'\in p(i)$, and $o'\succsim_i o$. Since $p(i) \mathrel{\not\pref_i^{RS}} q(i)$,  $G$ does not have a matching saturating $q(i)$. %\serge{I believe, the previous argument should be: $G$ does not have a matching saturating $q(i)$, so that Hall's theorem can be used on the $q(i)$ side.} 
Then by Hall's theorem, there exists a set $O'\subseteq q(i)$ such that $|N(O')|<|O'|$ where $N(O')$ denote the neighborhood of $O'$. 
Consider an object $o\in \min_{\pref_i}(O')$. We can assume without loss of generality that $O'$ is maximal so that each $o^*\in q(i)$ such that $o^*\pref_i o$ is in $O'$ because this only increases the difference $|O'|-|N(O')|$. Note that $O'$ is then $\{o'\midd o'\pref_i o\} \cap p(i)$ and $N(O')$ is $\{o'\midd o'\pref_i o\} \cap q(i)$.
Since, $|N(O')|<|O'|$, we have that 
			\[ \left|\{o'\midd o'\pref_i o\} \cap p(i)\right| < \left|\{o'\midd o'\pref_i o\} \cap q(i)\right|.\]
%
	% \[\sum_{o_j\in \set{o_k\midd o_k\succsim_i o}}p(i)(o_j) < \sum_{o_j\in \set{o_k\midd o_k\succsim_i o}}p{(-i)(o_j)}.\]
	But then $p(i) \mathrel{\not\pref_i^{SD}} q(i)$.
\end{proof}

	\section{Fairness concepts under ordinal preferences}

	We now define fairness notions that are independent of the actual cardinal utilities of the agents. The fairness concepts are defined for fractional assignments. Since discrete assignments are special cases of fractional assignments, the concepts apply just as well to discrete assignments. For algorithmic problems, we will only consider those assignments that are discrete. The fairness concepts that are defined are with respect to the SD and RS relations as well as by quantifying over the set of utility functions consistent with the ordinal preferences.

	\begin{table}[t!]
	%	\small
		\centering
			\scalebox{0.9}{
	\centering
	%\scriptsize 
	\begin{tabular}{lll}
	\toprule
	SD envy-free&necessary envy-free&necessary completion envy-free\\
	\midrule
	SD proportional&necessary proportional&\\
			\midrule
	&possible envy-free&\\
			\midrule
	weak SD envy-free&&possible completion envy-free\\
			\midrule
		weak SD proportional&possible proportional&\\
	\bottomrule
	\end{tabular}
	}
	\caption{Equivalence between different fairness concepts introduced in the literature. All concepts in the same row are equivalent. For example, weak SD proportional is equivalent to possible proportional.}
	\label{table:summary-equiv}
	\end{table}

		\paragraph{Proportionality}

		\begin{enumerate}
		\item 
		\begin{enumerate}
			\item \emph{Weak SD proportionality}: An assignment $p$ satisfies \emph{weak SD proportionality} if no agent strictly SD prefers the uniform assignment to his allocation: \[\neg[(1/n,\ldots,1/n) \succ_i^{SD} p(i)] \text{ for all } i\in N.\]

			\item \emph{Possible proportionality}: An assignment satisfies \emph{possible proportionality} if for each agent, there are cardinal utilities consistent with his ordinal preferences such that his allocation yields him as at least as much utility as he would get under the uniform assignment:
			\[\text{For each } i\in N, \text{ there exists } u_i\in \mathcal{U}(\pref_i) \text{ such that } u_i(p(i))\geq u_i(O)/n.\]

			\end{enumerate}

			\item 
			\begin{enumerate}
				\item \emph{SD proportionality}: An assignment $p$ satisfies \emph{SD proportionality} if each agent SD prefers his allocation to the allocation under the uniform assignment: \[p(i)\pref_i^{SD} (1/n,\ldots,1/n) \text{ for all } i\in N.\]

				\item \emph{Necessary proportionality}: An assignment satisfies \emph{necessary proportionality} if it is proportional for all cardinal utilities consistent with the agents' preferences.\footnote{\citet{PrWo12a} referred to necessary proportionality as ``ordinal fairness''.}
				% \serge{Can we move this remark to the 2nd-last paragraph of Section 2?} Haris: done}
		\[\text{For each } i\in N, \text{ and for each } u_i\in \mathcal{U}(\pref_i), u_i(p(i))\geq u_i(O)/n.\]
			\end{enumerate}

			\end{enumerate}

			\paragraph{Envy-freeness}
			\begin{enumerate}

				\item
					\begin{enumerate}

					\item \emph{Weak SD envy-freeness}: An assignment $p$ satisfies \emph{weak SD envy-freeness} if no agent strictly SD prefers someone else's allocation to his: \[\neg[p(j)\succ_i^{SD} p(i)] \text{ for all } i,j\in N.\]

					\item \emph{Possible envy-freeness}: An assignment satisfies \emph{possible envy-freeness} if for each agent, there are cardinal utilities consistent with his ordinal preferences such that his allocation yields him as at least as much utility as he would get if he was given any other agent's allocation. 
			\[\text{For all } i\in N, \exists u_i\in \mathcal{U}(\pref_i \nolinebreak) \text{ such that } u_i(p(i))\geq u_i(p(j)) \text{ for all } j\in N\]

						\item \emph{Possible completion envy-freeness}: An assignment satisfies \emph{possible completion envy-freeness}~\citep{BEL10a} 						if for each agent, there exists a preference relation of the
												agent over sets of objects that is a weak order consistent with the
												responsive set extension such that the agent weakly prefers his allocation over
						the allocations of other agents.
%						\serge{I don't think one can say `prefers at least as much as'. How about `weakly prefers'?Haris: done}
												The concept has also been referred to as not ``envy-ensuring''~\citep{BEF01a}.
						% \serge{Move alternative names of the concepts all into one place (2nd-last paragraph of Section 2)? Haris: it's okay i think}
						
% 						if for each agent, there exists a preference relation of the
% 						agent over sets of objects that is the transitive closure of the
% 						responsive set extension such that the agent prefers his allocation at least as much as
% the allocations of other agents.
						\end{enumerate}

					\item
								\begin{enumerate}
					\item \emph{SD envy-freeness}: An assignment $p$ satisfies \emph{SD envy-freeness} if each agent SD prefers his allocation to that of any other agent: 
	\[p(i) \pref_i^{SD} p(j) \text{ for all } i,j\in N.\]

							\item \emph{Necessary envy-freeness}: An assignment satisfies \emph{necessary envy-freeness} if it is envy-free for all cardinal utilities consistent with the agents' preferences.
			\[\text{For each } i,j\in N, \text{ and for each } u_i\in \mathcal{U}(\pref_i) \text{, } u_i(p(i))\geq u_i(p(j)).\]

										\item \emph{Necessary completion envy-freeness}: An assignment satisfies \emph{necessary completion envy-freeness}~\citep{BEL10a} if for each agent, and each total order consistent with 
the \emph{responsive set extension} of the agents, each agent weakly prefers his allocation to any other agents' allocation.
										% \serge{Isn't the transitive closure of a responsive set extension the responsive set extension itself? Maybe what is meant is `each total order consistent with the responsive set extension'? Haris: Yes}
										The concept has also been referred to as \emph{not envy-possible}~\citep{BEF01a}.
								\end{enumerate}
								
							We consider the assignment problem in Example~\ref{example:assign} to illustrate some of the fairness notions. 
								
								\begin{example}
									Consider an assignment problem $(N,O,\pref)$ where $|N|=2$, $O=o_1,o_2,o_3,o_4$ and the preferences of the agents are as follows
										\begin{align*}
										1:&\quad o_1,o_2,o_3,o_4\\
										2:&\quad o_2,o_3,o_1,o_4
									\end{align*}
	
									Consider the discrete assignment $p$ in which agent $1$ gets $o_1$ and $o_4$ and agent $2$ gets $o_2$ and $o_3$.
The assignment $p$ is not SD proportional or SD envy-free because the fairness constraints for agent $1$ are not satisfied. However, $p$ is weak SD proportional, possible envy-free, and weak SD envy-free.
			\end{example}

				\end{enumerate}

		Possible completion envy-freeness and necessary completion envy-freeness were simply referred to as possible and necessary envy-freeness in \citep{BEL10a}. We will use the former terms to avoid confusion.

	\section{Relations between fairness concepts}

	In this section, we highlight the inclusion relationships between fairness concepts (see Figure~\ref{fig:inclfigure}).
% We first note the following well-known fact~\citep[see \eg][]{ABS13a, Cho12a}:
%
% 	\begin{fact}\label{fact:sd}
% For any agent $i\in N$,	$p(i) \succsim_i^{SD} p(j)$ if and only if for all utility functions $u_i$ compatible with $\pref_i$, $u_i(p(i))\geq u_i(p(j))$.
% \end{fact}
% 	Based on this view of the SD relation we obtain the following equivalences.
Based on the connection between the SD relation and utilities (Theorem~\ref{prop:sd,rs,util}), we obtain the following equivalences. The equivalences are also summarized in Table~\ref{table:summary-equiv}.

		% \begin{theorem}\label{thm:equiv}
		% 			The following equivalences hold between the fairness concepts defined.
		% 			\begin{enumerate}%[i)]
		% 					\item  Weak SD proportionality and possible proportionality;
		% 			\item \label{item:sdprop-necprop} SD proportionality and necessary proportionality;
		% 					% \item possible envy-freeness and possible completion envy-freeness;
		% 												\item \label{item:wssdef-compl} weak SD envy-freeness and possible completion envy-freeness;
		% 										\item \label{item:sdeff-neceff} SD envy-freeness, necessary envy-freeness and necessary completion envy-freeness.
		% 			\end{enumerate}
		%
		% \end{theorem}
		\begin{theorem}\label{thm:equiv}
					For any number of agents and objects, 
					\begin{enumerate}%[i)]
							\item  Weak SD proportionality and possible proportionality are equivalent;
					\item \label{item:sdprop-necprop} SD proportionality and necessary proportionality are equivalent;
							% \item possible envy-freeness and possible completion envy-freeness;
														\item \label{item:wssdef-compl} weak SD envy-freeness and possible completion envy-freeness are equivalent;
												\item \label{item:sdeff-neceff} SD envy-freeness, necessary envy-freeness and necessary completion envy-freeness are equivalent.
					\end{enumerate}

		\end{theorem}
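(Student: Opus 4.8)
The plan is to use Theorem~\ref{prop:sd,rs,util} as a dictionary that translates each SD statement into an equivalent statement about all consistent cardinal utilities (condition~\ref{item:util}) or about the responsive set extension (condition~\ref{item:rs}), and then to read off the target concepts. The four claims split into two easy pairs, items~\ref{item:sdprop-necprop} and~\ref{item:sdeff-neceff}, where an SD notion is matched against the corresponding \emph{necessary} notion, and two harder pairs, the first item and item~\ref{item:wssdef-compl}, where a \emph{weak}/\emph{possible} notion is involved and the strict relation $\succ^{SD}$ together with an existential quantifier must be handled with care.

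For item~\ref{item:sdprop-necprop} the equivalence of~\ref{item:sd} and~\ref{item:util} is the standard fact that SD dominance coincides with dominance of expected utility for every consistent utility; this holds for arbitrary fractional allocations, in particular for comparing a discrete allocation with the uniform one. Since $u_i$ evaluated at $(1/n,\dots,1/n)$ equals $u_i(O)/n$, the condition $p(i)\succsim_i^{SD}(1/n,\dots,1/n)$ for all $i$ is literally $u_i(p(i))\ge u_i(O)/n$ for all $i$ and all $u_i\in\mathcal{U}(\succsim_i)$, which is necessary proportionality. For item~\ref{item:sdeff-neceff} the same equivalence of~\ref{item:sd} and~\ref{item:util}, applied to the discrete allocations $p(i)$ and $p(j)$, gives SD envy-freeness $\iff$ necessary envy-freeness; for necessary completion envy-freeness I would instead invoke the equivalence of~\ref{item:sd} and~\ref{item:rs}, so that SD envy-freeness reads $p(i)\succsim_i^{RS}p(j)$ for all $i,j$, while necessary completion envy-freeness asks that $p(i)$ be weakly preferred to $p(j)$ in \emph{every} weak order completing $\succsim_i^{RS}$. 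These coincide because a preorder is exactly the intersection of its total completions (the order-extension principle): a domination in $\succsim_i^{RS}$ survives every completion, and a non-domination can be witnessed by a completion placing $p(j)$ strictly above $p(i)$.

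For the first item I would unfold the strict relation. Writing $L(u_i)=u_i(p(i))-u_i(O)/n$, the equivalence of~\ref{item:sd} and~\ref{item:util} turns $(1/n,\dots,1/n)\succ_i^{SD}p(i)$ into the conjunction ``$L(u_i)\le 0$ for all $u_i$'' and ``$L(u_i)<0$ for some $u_i$''. Negating, weak SD proportionality for agent $i$ becomes ``$L(u_i)>0$ for some $u_i$, or $L(u_i)\ge 0$ for all $u_i$'', whereas possible proportionality is just ``$L(u_i)\ge 0$ for some $u_i$''. One direction is immediate; for the converse the only delicate case is when the best any utility can do is $L(u_i)=0$. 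Here I would use that $L$ is linear in $u_i$ and that $\mathcal{U}(\succsim_i)$, parameterized by the strictly decreasing values $v_1>\dots>v_{k_i}$ on the equivalence classes, is a relatively open, full-dimensional convex set. A nonzero linear functional attaining $0$ on such a set also attains strictly positive values nearby, so either $L\equiv 0$ (giving the ``for all'' disjunct) or some utility has $L>0$ (giving the ``for some'' disjunct), closing the equivalence.

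The analogue for item~\ref{item:wssdef-compl} (weak SD envy-freeness $\iff$ possible completion envy-freeness) is where I expect the main obstacle. Using~\ref{item:sd}$\iff$\ref{item:rs}, the relation $p(j)\succ_i^{SD}p(i)$ is equivalent to $p(j)\succ_i^{RS}p(i)$, so weak SD envy-freeness says that for all $i,j$ it is never the case that $p(j)$ strictly responsive-set dominates $p(i)$. Possible completion envy-freeness asks, for each $i$, for a \emph{single} weak order extending $\succsim_i^{RS}$ in which $p(i)$ is weakly above \emph{all} $p(j)$ simultaneously; the forward direction is easy, and the content is in building this completion from the pairwise non-domination hypotheses. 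I would adjoin to $\succsim_i^{RS}$ the weak edges $p(i)\succsim p(j)$ for every $j$ and verify that the result is strict-cycle-free, hence extendable to a total preorder. Since every added edge emanates from $p(i)$ and is weak, any strict cycle must use such an edge and therefore close along a responsive-set chain from some $p(j)$ back to $p(i)$, and it is strict only if that chain witnesses $p(j)\succ_i^{RS}p(i)$ --- precisely what weak SD envy-freeness forbids. The delicate points to get right are the exact reduction of $\succ^{SD}$ to $\succ^{RS}$ and the verification that a single strict responsive-set domination is the only obstruction to a simultaneous completion.
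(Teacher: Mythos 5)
Your proposal is correct and follows the same overall route as the paper: Theorem~\ref{prop:sd,rs,util} serves as a dictionary between the SD relation, universally quantified cardinal utilities, and the responsive set extension, and items~\ref{item:sdprop-necprop} and~\ref{item:sdeff-neceff} then read off exactly as you describe (including the order-extension step for necessary completion envy-freeness). Where you differ is in the amount of detail supplied for the two ``weak/possible'' items, and your extra care is warranted rather than redundant. For the first item the paper simply asserts that the equivalence ``follows directly from the characterization of the SD relation,'' whereas the negation of $(1/n,\ldots,1/n)\succ_i^{SD}p(i)$ really is the disjunction you write down, and matching it against the single existential quantifier of possible proportionality genuinely needs your observation that $L$ is linear on the open convex cone $v_1>\cdots>v_{k_i}$ parameterizing $\mathcal{U}(\succsim_i)$, so that $L$ attaining its maximum value $0$ there forces $L\equiv 0$. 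For item~\ref{item:wssdef-compl} the paper's backward direction says only that ``in case of incomparability, the relation can be completed with the agent's own allocation being more preferred,'' leaving implicit that all the pairwise non-dominations can be resolved in a single weak order simultaneously; your cycle check (every added edge is weak and emanates from $p(i)$, so a strict cycle would exhibit $p(j)\succ_i^{RS}p(i)$, contradicting weak SD envy-freeness) is precisely the missing verification. Nothing in your proposal needs repair; it proves the same statements by the same mechanism while closing two small gaps the paper leaves to the reader.
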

		\begin{proof}
			We deal with each case separately.
			\begin{enumerate}
	\item The statement follows directly from the characterization of the SD relation.
	\item The statement follows directly from the characterization of the SD relation.

	\item If an assignment is weak SD envy-free, then each agent either SD prefers his allocation over another agent's allocation or finds them incomparable. In case of incomparability, the relation can be completed with the agent's own allocation being more preferred. Thus the assignment is also possible completion envy-free. If an assignment is possible completion envy-free, then either an agent prefers his allocation over another agent's allocation with respect to the responsive set extension or finds them incomparable with respect to the responsive set extension. Hence each agent either SD prefers his allocation over another agent's allocation or finds them incomparable. Thus the assignment is also weak SD envy-free.

	\item It follows from Theorem~\ref{prop:sd,rs,util} that SD envy-freeness and necessary envy-freeness are equivalent. We now prove that SD envy-freeness and necessary completion envy-freeness are equivalent.
Note that an agent SD prefers his allocation over other agents' allocation if and only if he prefers his allocation with respect to the responsive set extension over other agents' allocation. 
\end{enumerate}
		\end{proof}

		It is well-known that when an allocation is complete and utilities are additive, envy-freeness implies proportionality. Assume that an assignment $p$ is envy-free. Then for each $i\in N$, $u_i(p(i))\geq u_i(p(j))$ for all $j\in N$.  Thus, $n\cdot u_i(p(i))\geq \sum_{j\in N}u_i(p(j))=u_{i}(O).$ Hence $u_i(p(i))\geq u_{i}(O)/n$. We can also get similar relations when we consider stronger and weaker notions of envy-freeness and proportionality.

		\begin{theorem}\label{remark:imply}
		The following relations hold between the fairness concepts defined. 
	\begin{enumerate}
		\item SD envy-freeness implies SD proportionality.
		\item SD proportionality implies weak SD proportionality.
		\item \label{item:pefwsdp} Possible envy-freeness implies weak SD proportionality.
		\item Possible envy-freeness implies weak SD envy-freeness.
	
			\end{enumerate}
		\end{theorem}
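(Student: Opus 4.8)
The plan is to prove each of the four implications separately, unfolding the definitions and, wherever possible, reducing to the classical ``envy-freeness implies proportionality'' averaging trick together with the equivalences already established in \thmref{thm:equiv} and the SD/utility characterization of \thmref{prop:sd,rs,util}. Throughout I will use that the assignment is complete, \ie $\sum_{j\in N} p(j)(o)=1$ for every object $o$.

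For the first implication I would argue directly from the definition of the SD relation rather than through utilities, since the uniform assignment is fractional whereas \thmref{prop:sd,rs,util} is stated only for discrete assignments. SD envy-freeness gives, for every agent $i$, every $j\in N$, and every object $o$,
\[\sum_{o'\in\set{o''\midd o''\succsim_i o}} p(i)(o')\ \ge\ \sum_{o'\in\set{o''\midd o''\succsim_i o}} p(j)(o').\]
Summing over all $j\in N$ and using completeness, for each $o'$ the inner sum $\sum_{j\in N}p(j)(o')$ equals $1$, so we obtain $n\sum_{o'\in\set{o''\midd o''\succsim_i o}} p(i)(o')\ge \card{\set{o''\midd o''\succsim_i o}}$; dividing by $n$ is precisely the defining inequality of $p(i)\succsim_i^{SD}(1/n,\ldots,1/n)$ at the object $o$, and as $o$ was arbitrary, $p$ is SD proportional. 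The second implication is immediate from the definition of strict SD preference: SD proportionality asserts $p(i)\succsim_i^{SD}(1/n,\ldots,1/n)$, which is exactly the negation of the second conjunct of $(1/n,\ldots,1/n)\succ_i^{SD}p(i)$, so the latter cannot hold and $p$ is weak SD proportional.

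For the third implication I would first invoke \thmref{thm:equiv}, by which weak SD proportionality is equivalent to possible proportionality; it then suffices to show that possible envy-freeness implies possible proportionality. Fix an agent $i$ and let $u_i\in\mathcal{U}(\pref_i)$ be a utility witnessing possible envy-freeness, so $u_i(p(i))\ge u_i(p(j))$ for all $j\in N$. Summing over $j$ and using additivity together with completeness, namely $\sum_{j\in N}u_i(p(j))=u_i(O)$, gives $n\,u_i(p(i))\ge u_i(O)$, \ie $u_i(p(i))\ge u_i(O)/n$. The same $u_i$ thus witnesses possible proportionality for agent $i$, and since $i$ was arbitrary the claim follows.

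The fourth implication is where I expect the only genuine subtlety. Via \thmref{thm:equiv} it is equivalent to show that possible envy-freeness implies possible completion envy-freeness. Given the witness $u_i\in\mathcal{U}(\pref_i)$ from possible envy-freeness, I would take the weak order $\succsim_i^{u}$ on $2^O$ defined by $S\succsim_i^{u}T$ iff $u_i(S)\ge u_i(T)$. The point to check is that $\succsim_i^{u}$ is consistent with the responsive set extension, that is $S\succsim_i^{RS}T$ implies $u_i(S)\ge u_i(T)$; but this is exactly the implication \ref{item:rs}$\Rightarrow$\ref{item:util} proved inside \thmref{prop:sd,rs,util}, so $\succsim_i^{u}$ is a legitimate completion of $\succsim_i^{RS}$, and under it agent $i$ weakly prefers $p(i)$ to every $p(j)$. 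Hence $p$ is possible completion envy-free, and \thmref{thm:equiv} yields weak SD envy-freeness. I would flag as the main obstacle the temptation to argue this part purely cardinally: contradicting a hypothetical $p(j)\succ_i^{SD}p(i)$ against the single non-envying witness $u_i$ requires the sharper fact that strict SD dominance forces \emph{strictly} larger utility for \emph{every} consistent utility, not merely for some, which in turn needs a summation-by-parts argument over the contour sets exploiting the strict monotonicity of consistent utilities across equivalence classes. Routing through the responsive set extension sidesteps exactly this strictness-for-all step.
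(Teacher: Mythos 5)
Your proof is correct, and while item (ii) coincides with the paper's argument, the other three take a genuinely different route. For items (i) and (iii) the paper works entirely through utilities: it invokes \thmref{thm:equiv} to translate SD envy-freeness and SD proportionality into their ``necessary'' counterparts, applies the classical averaging argument for each consistent utility separately, and phrases (iii) contrapositively; your SD-level summation for (i) (summing the upper-contour-set inequalities over $j\in N$ and using $\sum_{j\in N}p(j)(o')=1$) and your direct averaging for (iii) prove the same facts, with the small advantage for (i) of not needing any utility characterization against the fractional uniform allocation. The substantive divergence is item (iv). The paper argues by contraposition: from $p(j)\succ_i^{SD}p(i)$ it concludes, citing \thmref{prop:sd,rs,util}, that $u_i(p(j))>u_i(p(i))$ for \emph{every} $u_i\in\mathcal{U}(\pref_i)$ --- exactly the ``strict SD dominance forces strictly larger utility for all consistent utilities'' step you flag; as you observe, \thmref{prop:sd,rs,util} literally yields only weak dominance for all $u_i$ plus strict inequality for \emph{some} $u_i$, so the paper is tacitly using the Abel-summation strengthening. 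Your detour through possible completion envy-freeness and \thmref{thm:equiv}\ref{item:wssdef-compl} sidesteps that, at the price of verifying that the weak order induced by the witness utility is a legitimate completion of $\succsim_i^{RS}$: if ``consistent with the responsive set extension'' is read as also preserving strict RS-preferences (which the paper's proof of \thmref{thm:equiv}\ref{item:wssdef-compl} implicitly needs), the induced order can fail when the witness assigns utility $0$ to some object, since then $S\succ_i^{RS}S\setminus\{o\}$ but $u_i(S)=u_i(S\setminus\{o\})$. A sentence justifying that the witness may be taken with strictly positive values, or a direct appeal to the same summation-by-parts fact, would close this; the residual gap is of the same boundary-case flavour as the one already present in the paper's own argument, so on balance your route is at least as careful as the original.
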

		\begin{proof}
				We deal with the cases separately.
			\begin{enumerate}
				\item \emph{SD envy-freeness implies SD proportionality}. 
				Assume an assignment $p$ satisfies SD envy-freeness.
				Then, by Theorem~\ref{thm:equiv}\ref{item:sdeff-neceff}, it satisfies envy-freeness for all utilities consistent with the ordinal preferences. If an assignment satisfies envy-freeness for particular cardinal utilities, it satisfies proportionality for the same cardinal utilities. Therefore, $p$ satisfies proportionality for all cardinal utilities consistent with the ordinal preferences. Hence, due to Theorem~\ref{thm:equiv}\ref{item:sdprop-necprop}, it implies that $p$ satisfies SD proportionality.
				\item \emph{SD proportionality implies weak SD proportionality}.
				Assume an assignment $p$ does not satisfy weak SD proportionality. Then, there exists some agent $i\in N$ such that $(1/n,\ldots, 1/n)\succ_i^{SD} p(i)$. But this implies that $\neg[p(i)\pref_i^{SD} (1/n,\ldots, 1/n)]$. Hence $p$ is not SD proportional.
				\item \emph{Possible envy-freeness implies weak SD proportionality}.
				Assume an assignment $p$ is not weak SD proportional. By Theorem \ref{thm:equiv}, $p$ is not possible proportional. Let $i\in N$ be an agent such that for all $u_i\in \mathcal{U}(\pref_i)$ we have that $u_i(p(i)) < u_i(O)/n$. But then, for each $u_i\in \mathcal{U}(\pref_i)$ there exists an agent $j\in N$ such that $u_i(p(i)) < u_i(p(j))$, otherwise $n\cdot u_i(p(i))\ge u_i(O)$. Hence $p$ is not possible envy-free.
				\item \emph{Possible envy-freeness implies weak SD envy-freeness}.
				Assume that an assignment $p$ is not weak SD envy-free. Therefore there exist $i,j\in N$ such that
$p(j)\succ_i^{SD} p(i).$ Due to Theorem~\ref{prop:sd,rs,util}, we get that for each
				$u_i\in \mathcal{U}(\pref_i) \text{, } u_i(p(j))> u_i(p(i)).$ Hence $p$ is not possible envy-free. 	
\end{enumerate}	
		\end{proof}

		We also highlight certain equivalences for the special case of two agents.
		
			\begin{theorem}\label{remark:imply2}
	For two agents, 
	\begin{enumerate}
		\item proportionality is equivalent to envy-freeness; 
		\item SD proportionality is equivalent to SD envy-freeness; 
		\item weak SD proportionality and possible envy-freeness are equivalent; and
		% \item possible envy-freeness and weak SD envy-freeness are equivalent; 
		\item weak SD envy-freeness and weak SD proportionality are equivalent.
	%	\item weak SD envy-freeness and weak SD proportionality are equivalent.
			%	\item % and weak SD proportionality, possible envy-freeness and weak SD envy-freeness are equivalent. 
	\end{enumerate}
\end{theorem}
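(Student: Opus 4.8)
The plan is to build everything on the one structural fact special to two agents: in a complete assignment $p(1)(o)+p(2)(o)=1$ for every object $o$, so each agent's allocation is the pointwise complement of the other's and the uniform assignment $(1/2,\ldots,1/2)$ is exactly their midpoint. Every one of the four equivalences will then be obtained by rewriting a comparison of agent $i$ against the other agent $j$ as a comparison of agent $i$ against the uniform assignment. I will treat the four parts separately but with a common computation; throughout, let $j$ denote the agent other than $i$ and write $C_i(o)=\set{o' \midd o' \succsim_i o}$ for the weak upper contour set of $o$.

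For the first part, I fix arbitrary cardinal utilities and use additivity together with completeness to get $u_i(p(i))+u_i(p(j))=u_i(O)$. Substituting $u_i(p(j))=u_i(O)-u_i(p(i))$ into the envy-freeness inequality $u_i(p(i))\geq u_i(p(j))$ turns it, after trivial algebra, into the proportionality inequality $u_i(p(i))\geq u_i(O)/2$, and the implication is reversible. Since this holds agent by agent, proportionality and envy-freeness coincide. The second part is the same argument carried out one contour at a time at the level of the SD relation: from $\sum_{o_k\in C_i(o)}p(j)(o_k)=\card{C_i(o)}-\sum_{o_k\in C_i(o)}p(i)(o_k)$ I get that the SD-envy inequality against $j$ at $o$, namely $\sum_{o_k\in C_i(o)}p(i)(o_k)\geq\sum_{o_k\in C_i(o)}p(j)(o_k)$, is equivalent to $\sum_{o_k\in C_i(o)}p(i)(o_k)\geq\card{C_i(o)}/2$, which is precisely the SD-proportionality inequality against uniform at $o$. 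Quantifying over all $o$ and all $i$ gives the equivalence of SD proportionality and SD envy-freeness.

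For the third part, one direction (possible envy-freeness $\Rightarrow$ weak SD proportionality) is already Theorem~\ref{remark:imply}\ref{item:pefwsdp} and holds for any number of agents, so only the converse needs the two-agent structure. I first invoke Theorem~\ref{thm:equiv} to replace weak SD proportionality by the equivalent possible proportionality, and then observe that the midpoint identity makes the two existential statements agree witness-by-witness: by the computation of the first part, a utility $u_i$ witnessing $u_i(p(i))\geq u_i(O)/2$ is exactly a utility witnessing $u_i(p(i))\geq u_i(p(j))$. Hence possible proportionality and possible envy-freeness share witnesses and coincide.

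For the fourth part, I establish the midpoint identity at the level of the strict SD relation. The key lemma, proved by the same contour computation as in the second part, is the pair of equivalences $p(j)\succsim_i^{SD}p(i)\iff(1/2,\ldots,1/2)\succsim_i^{SD}p(i)$ and $p(i)\succsim_i^{SD}p(j)\iff p(i)\succsim_i^{SD}(1/2,\ldots,1/2)$. Combining them and unfolding the definition of strict preference gives $p(j)\succ_i^{SD}p(i)\iff(1/2,\ldots,1/2)\succ_i^{SD}p(i)$, and negating over all $i$ shows weak SD envy-freeness and weak SD proportionality coincide. I do not expect any deep obstacle: the whole theorem is a consequence of complementarity. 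The only step needing care is this last one, where I must track both components of the strict relation — ``weakly prefers'' and ``not weakly preferred back'' — and confirm that the midpoint equivalence transports the negation correctly, rather than accidentally proving only the non-strict version.
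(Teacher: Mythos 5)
Your proposal is correct; every step checks out, including the contour-level complementarity identity $\sum_{o_k\succsim_i o}p(j)(o_k)=\card{\set{o_k \midd o_k\succsim_i o}}-\sum_{o_k\succsim_i o}p(i)(o_k)$ on which everything rests, and the careful transport of both components of the strict SD relation in part four. The underlying idea — that for two agents each allocation is the complement of the other and the uniform assignment is their midpoint — is also the engine of the paper's proof, but the organization differs in two places. For part two the paper does not argue at the level of the SD definition at all: it passes through the cardinal-utility characterization (SD proportionality equals necessary proportionality, SD envy-freeness equals necessary envy-freeness, via Theorem~\ref{prop:sd,rs,util} and Theorem~\ref{thm:equiv}) and then applies the part-one equivalence utility by utility; your direct contour computation avoids that detour entirely and is self-contained. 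For part four the paper proves only the direction ``weak SD envy-free implies weak SD proportional'' by the contour computation (written out in terms of the equivalence classes $E_i^j$), obtaining the converse by chaining through possible envy-freeness via parts three and the earlier implication theorem; you instead prove a single biconditional midpoint lemma for the weak SD relation in both orientations and read off the strict version, which gives both directions of part four at once. Neither route is stronger in conclusion, but yours localizes all four parts to one lemma, while the paper's leans more on its previously established equivalences; both are valid.
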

\begin{proof}
	We deal with the cases separately while assuming $n=2$. Since $n=2$, for any agent $i$, we will denote by $-i$ the other agent.
	\begin{enumerate}
		\item \emph{Proportionality is equivalent to envy-freeness}. Since envy-freeness implies proportionality, we only need to show that for two agents proportionality implies envy-freeness. Assume that an assignment is not envy-free. Then,
			\begin{align*}
			u_i(p(i)) &< u_i(p(-i)) \quad\Leftrightarrow\\
			2 \cdot u_i(p(i)) &< u_i(p(i))+u_i(p(-i)) \quad\Leftrightarrow\\
			u_i(p(i)) &< \frac{u_i(p(i))+u_i(p(-i))}{2}=u_i(O)/2.
		\end{align*}
		\item \emph{SD proportionality is equivalent to SD envy-freeness}.
	We note that for $n=2$, if an assignment satisfies envy-freeness for particular cardinal utilities, it satisfies proportionality for those cardinal utilities. Moreover, if an assignment is SD proportional, it satisfies proportionality for all cardinal utilities, hence it satisfies envy-freeness for all cardinal utilities and hence it satisfies SD envy-freeness.
		\item  \emph{Weak SD proportionality and possible envy-freeness are equivalent}.
		 By Theorem~\ref{remark:imply}\ref{item:pefwsdp}, possible envy-freeness implies weak SD proportionality. If an assignment satisfies weak SD proportionality, then there exist cardinal utilities consistent with the ordinal preferences for which proportionality is satisfied. Hence for $n=2$, there exist cardinal utilities consistent with the ordinal preferences for which envy-freeness is satisfied, which means that the assignment satisfies possible envy-freeness.
		% \item 
		% 
		% Finally, we show that for $n=2$, possible envy-freeness and weak SD envy-freeness are equivalent. It is enough to show that weak SD envy-freeness implies possible envy-freeness. Assume that an assignment is not possible envy-free. 
		% Then for all cardinal utilities consistent with the ordinal preferences, it is the case that $u_1(p(2))< u_1(p(2))$ or $u_2(p(1))< u_2(p(1))$. But this means that either 
		% $p(2)\succ_1^{SD} p(1)$	 or $p(1)\succ_2^{SD} p(2)$. Hence $p$ is not weak SD envy-free.
	\item \emph{Weak SD envy-freeness and weak SD proportionality are equivalent}. We have already shown that weak SD proportionality implies possible envy-freeness for $n=2$, and that possible envy-freeness implies weak SD envy-freeness. Therefore, it is sufficient to prove that weak SD envy-freeness implies weak SD proportionality.
	Assume that an assignment $p$ is not weak SD proportional.
	Then, there exists at least one agent $i\in \{1,2\}$ such that 
		\[\frac{\card{\bigcup_{j=1}^kE_i^j}}{2} \geq \card{(\bigcup_{j=1}^kE_i^j)\cap p(i)}\] 
for all $k\in \{1,\ldots, k_i\}$ and
		\[\frac{\card{\bigcup_{j=1}^kE_i^j}}{2} > \card{(\bigcup_{j=1}^kE_i^j)\cap p(i)}\] 
for some $k\in \{1,\ldots, k_i\}$.
But this implies that 
		\[{\card{\bigcup_{j=1}^kE_i^j \cap p(-i)}} \geq \card{(\bigcup_{j=1}^kE_i^j)\cap p(i)}\] 
for all $k\in \{1,\ldots, k_i\}$ and
		\[{\card{\bigcup_{j=1}^kE_i^j \cap p(-i)}} > \card{(\bigcup_{j=1}^kE_i^j)\cap p(i)}\] 
for some $k\in \{1,\ldots, k_i\}$.
Thus $p(-i) \succ_i^{SD} p(i)$ and hence $p$ is not weak SD envy-free.
	\end{enumerate}
\end{proof}

	In the next examples, we show that some of the inclusion relations do not hold in the opposite direction and that some of the solution concepts are incomparable. Firstly, we show that SD proportionality does not imply weak SD envy-freeness.

	\begin{example}%[SD proportionality does not imply weak SD envy-freeness]
		SD proportionality does not imply weak SD envy-freeness.
		Consider the following preference profile:
		\begin{align*}
		&1:\quad \{a, b, c\}, \{d, e, f\}\\
		&2:\quad\{a,b,c,d,e,f\}\\
		&3:\quad\{a,b,c,d,e,f\}
	\end{align*}
		The allocation that gives $\{a,d\}$ to agent $1$, $\{b,c\}$ to agent $2$ and $\{e,f\}$ to agent $3$ is SD  proportional. However it is not weak SD  envy-free since agent $1$ is envious of agent $2$. 
		Hence it also follows that SD proportionality does not imply possible envy-freeness or SD envy-freeness.

		% As we have established that SD proportionality does not imply weak SD envy-freeness. Hence 

	\end{example}

Next, we show that 	weak SD envy-freeness neither implies possible envy-freeness nor weak SD proportionality.

	\begin{example}%[Weak SD envy-freeness does not imply possible envy-freeness]
	\label{example:weaknotpossible}
	% Weak SD envy-freeness does not imply possible envy-freeness.
	Weak SD envy-freeness neither implies possible envy-freeness nor weak SD proportionality.
	Consider an assignment problem in which $N=\{1,2,3\}$, and there are $4$ copies of $A$, $6$ copies of $B$, $1$ copy of $C$ and $1$ copy of $D$. Let the preference profile be as follows. 
		\begin{align*}
		&1:\quad A,B,C,D\\
		&2:\quad\{A\}, \{B,C,D\}\\
		&3:\quad\{B\}, \{A,C,D\}.
	\end{align*}

	\begin{table}[h!]
	\centering	
	\begin{tabular}{l|llll}
	\small
	  %\toprule
	&$A$&$B$&$C$&$D$\\\midrule
	$1$&$1$&$1$&$1$&$1$\\
	$2$&$3$&$0$&$0$&$0$\\
	$3$&$0$&$5$&$0$&$0$\\
	%\bottomrule
	\end{tabular}
		\caption{Discrete assignment $p$ in Example~\ref{example:weaknotpossible}}
	\label{table:assignment}
	\end{table}

	Clearly $p$, the assignment specified in Table~\ref{table:assignment} is weak SD envy-free. Assume that $p$ is also possible envy-free. Let $u_1$ be the utility function of agent $1$ for which he does not envy agent $2$ or $3$. Let $u_1(A)=a$; $u_1(B)=b$; $u_1(C)=c$; and $u_1(D)=d$. Since $A\succ_1 B \succ_1 C \succ_1 D$, we get that 
	\begin{equation}\label{eq:abcd}
	a>b>c>d.
	\end{equation}
	Since $p$ is possible envy-free, $u_1(p(1))\geq u_1(p(2)) \text{ iff } a+b+c+d\geq 3a 
	\text{ iff } a\leq \frac{b+c+d}{2} \text{which implies } a<\frac{3b}{2}$.
	% \begin{align*}
	% 	&u_1(p(1))\geq u_1(p(2))\iff a+b+c+d\geq 3a \\ 
	% 	&\iff a\leq \frac{b+c+d}{2} \implies a<\frac{3b}{2}.
	% 	\end{align*}
	% $$u_1(p(1))\geq u_1(p(2))\iff a+b+c+d\geq 3a \iff a\leq \frac{b+c+d}{2} \implies a<\frac{3b}{2}.$$
	Since $p$ is possible envy-free, 
	$u_1(p(1))\geq u_1(p(3))$ iff  $a+b+c+d\geq 5b$ iff $a+c+d\geq 4b$. Since $a>b>c>d$, it follows that $a> 2b.$
	% \begin{align*}
	% 	&u_1(p(1))\geq u_1(p(3)) \iff a+b+c+d\geq 5b \\
	% 	&\iff a+c+d\geq 4b \implies a> 2b.
	% \end{align*}
	% 
	% \[u_1(p(1))\geq u_1(p(3)) \iff a+b+c+d\geq 5b \iff \\
	%  a+c+d\geq 4b \implies a> 2b.\]
	This is a contradiction since both $a<\frac{3b}{2}$ and $a\geq 2b$ cannot hold.

	Now we show that weak SD envy-freeness does not even imply weak SD proportionality. Assignment $p$ is weak SD envy-free. If it were weak SD proportional then there exists a utility function $u_1$ such that $u_1(a)+u_1(b)+u_1(c)+u_1(d)\geq \frac{4u_1(a)+6u_1(b)+u_1(c)+u_1(d)}{3}$ which means that $\frac{u_1(a)}{3}+u_1(b)\leq \frac{2u_1(c)}{3}+\frac{2u_1(d)}{3}$ which is equivalent to $\frac{a}{3}+b\leq \frac{2c}{3}+\frac{2d}{3}$. But this is not possible because of \eqref{eq:abcd}.
	\end{example}

Since, we have shown that weak SD envy-freeness is not equivalent to possible envy-freeness, and since we showed in Theorem~\ref{thm:equiv}\ref{item:wssdef-compl} that weak SD envy-freeness is equivalent to possible completion envy-freeness, 
this means that possible envy-freeness and possible completion envy-freeness are also not equivalent to each other.
We now point out that possible envy-freeness does not imply SD proportionality.

	\begin{example}%[Possible envy-freeness does not imply SD proportionality]
		Possible envy-freeness does not imply SD proportionality.
	Consider an assignment problem with two agents with preferences 
	$1: \{a\},\{b,c\}$ and $2: \{a,b,c\}$.
	% 	\begin{align*}
	% 	1:\quad \{a\},\{b,c\}&&
	% 	2:\quad\{a,b,c\}
	% \end{align*}
	Then the assignment in which $1$ gets $a$ and $2$ gets $b$ and $c$ is possible envy-free. However it is not SD  proportional, because agent $1$'s allocation does not SD  dominate the uniform allocation. 
	%A necessary condition for SD proportionality of a discrete assignment is that $m$ is a multiple of $n$ and each agent gets exactly $m/n$ objects.
	\end{example}

			Finally, we note that all notions of proportionality and envy-freeness are trivially satisfied if randomized assignments are allowed by giving each agent $1/n$ of each object. As we show here, achieving any notion of proportionality is a challenge when outcomes need to be discrete.

	%\section{Existence and computation of fair assignments}

	Next, we study the existence and computation of fair assignments.
			Even the weakest fairness concepts like weak SD proportionality may not be possible to achieve: consider two agents with identical and strict preferences over two objects. This problem remains even if $m$ is a multiple of $n$.

			\begin{example}%[A weak SD proportional assignment may not exist]
			%	A weak SD proportional assignment may not exist.
			A discrete weak SD proportional assignment may not exist even if $m$ is a multiple of $n$.
			Consider the following preferences: 
				\begin{align*}
				1: \quad \{a_1,a_2,a_3,a_4\}, \{b_1,b_2\}\\
				2: \quad \{a_1,a_2,a_3,a_4\}, \{b_1,b_2\}\\
				3: \quad \{a_1,a_2,a_3,a_4\}, \{b_1,b_2\}
			\end{align*}
	If all agents get 2 objects, then
	those agents that have to get at least one object from $\{b_1,b_2\}$ will get an allocation that is strictly SD  dominated by $(1/3,\ldots, 1/3)$.
	Otherwise, at least one agent gets at most one object, and is therefore strictly SD dominated by the uniform assignment.

	If $m$ is not a multiple of $n$, then an even simpler example shows that a weak SD proportional assignment may not exist. Consider the case when all agents are indifferent among all objects. Then the agent who gets less objects than $m/n$ will get an allocation that is strictly SD  dominated by $(1/n,\ldots, 1/n)$. 

	\end{example}

%			\section{Computational issues: some initial comments}
			
			\section{Computational Complexity}

In this paper, we consider the natural computational question of checking whether a discrete fair assignment exists and if it does exist then to compute it. The problem of verifying whether a (discrete or fractional) assignment is fair is easy for all the notions we defined.

		\begin{remark}\label{remark:verify}
	% It can be verified in polynomial time whether an assignment is fair or not for all notions of fairness considered in the paper. 
	It can be verified in time polynomial in $n$ and $m$ whether an assignment is fair for all notions of fairness considered in the paper.
	For possible envy-freeness, a linear program can be used to find the `witness' cardinal utilities of the agents. 
	\end{remark}

	\begin{remark}\label{th:const-objects}
	For a constant number of objects, it can be checked in polynomial time whether a fair discrete assignment exists for all notions of fairness considered in the paper. 
	This is because the total number of discrete assignments is $n^m$.
	\end{remark}
	
	We note that if the assignment is not required to be discrete, then even SD envy-freeness can be easily achieved~\citep{KaSe06a}. Finally, we have the following necessary condition for SD proportional and hence for SD envy-free assignments.
	
	\begin{theorem}\label{th:m=nc}
		If $p$ is a discrete SD proportional assignment, then $m$ is a multiple of $n$ and each agent gets $m/n$ objects. 
		\end{theorem}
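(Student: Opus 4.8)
The plan is to derive a lower bound on the size of each agent's allocation directly from SD proportionality, and then to pin down the exact sizes by a counting argument that forces all the inequalities to be tight.

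First, I would fix an agent $i$ and pick an object $o$ in the least preferred equivalence class $E_i^{k_i}$, so that every object is weakly preferred to $o$; that is, $\set{o_k\midd o_k\succsim_i o}=O$. Instantiating the SD proportionality condition $p(i)\succsim_i^{SD}(1/n,\dots,1/n)$ at this particular $o$ gives
\[
\sum_{o_j\in O} p(i)(o_j)\;\ge\;\sum_{o_j\in O}\tfrac1n.
\]
Because $p$ is discrete, each $p(i)(o_j)\in\{0,1\}$, so the left-hand side is exactly $\card{p(i)}$, the number of objects allocated to agent $i$, while the right-hand side is $m/n$. Hence $\card{p(i)}\ge m/n$ for every agent $i\in N$.

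The core of the argument is then a double-counting step. Since the assignment is complete, each object is allocated to exactly one agent, so $\sum_{i\in N}\card{p(i)}=m$. Combining this identity with the bound just obtained yields
\[
m=\sum_{i\in N}\card{p(i)}\;\ge\;n\cdot\frac{m}{n}=m,
\]
which forces every inequality $\card{p(i)}\ge m/n$ to hold with equality. Thus $\card{p(i)}=m/n$ for each $i$. Finally, since each $\card{p(i)}$ is a nonnegative integer equal to the common value $m/n$, that value must itself be an integer, so $m$ is a multiple of $n$.

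The argument is short and presents no real obstacle; the only point requiring care is the instantiation at a least preferred object, where one must check that its upper contour set $\set{o_k\midd o_k\succsim_i o}$ is all of $O$ (immediate, as $o$ lies in the final equivalence class) and that summing $1/n$ over all $m$ objects gives exactly $m/n$ on the right-hand side.
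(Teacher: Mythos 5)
Your proof is correct and follows essentially the same route as the paper's: instantiate the SD proportionality constraint at a least preferred object to get $\card{p(i)}\ge m/n$, then use completeness of the assignment to force equality and integrality. You simply spell out the double-counting and tightness steps that the paper leaves implicit.
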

		\begin{proof}
			If $p$ is an SD proportional assignment, then the following constraint is satisfied for each agent $i\in N$.
			\begin{equation*}
			\left|p(i)\cap O\right|\geq \frac{|O|}{n}= \frac{m}{n}.
			\end{equation*}
		Each agent must get $m/n$ objects. If $p$ is discrete, each agents gets $m/n$ objects only if $m$ is a multiple of $n$.
			\end{proof}

	\subsection{SD proportionality}

	In this subsection, we show that it can be checked in polynomial time whether a discrete SD  proportional assignment exists even in the case  of indifferences. The algorithm is via a reduction to the problem of checking whether a bipartite graph admits a feasible $b$-matching.

	%For the algorithm, we also require the machinery of $b$-matchings. %We define the notion of a $b$-matching. 
	Let $H=(V_H,E_H)$ be an undirected graph with vertex capacities $b:V_H\rightarrow \natz$ and edge capacities $c:E_H\rightarrow \natz$ where $\natz$ is the set of natural numbers including zero. 
	Then, a \emph{$b$-matching} of~$H$ is a function $m:E_H\rightarrow\natz$ such that $\sum_{e\in \{e'\in E_H \midd v\in e'\}}m(e)\leq b(v)$ for each $v\in V_H$, and $m(e)\leq c(e)$ for all $e\in E_H$. 
	The \emph{size} of the $b$-matching $m$ is defined as $\sum_{e\in E_H}m(e)$. We point out that if $b(v)=1$ for all $v\in V_H$, and $c(e)=1$ for all $e\in E_H$ then a maximum size $b$-matching is equivalent to a maximum cardinality matching. 
	In a $b$-matching problem with upper \emph{and} lower bounds, there further is a %lower bound
	function $a:V_H\rightarrow\natz$. A feasible $b$-matching then is a function $m:E_H \rightarrow \natz$ such that $a(v)\leq\sum_{e\in\{e'\in E_H \midd v\in e'\}}m(e)\leq b(v)$ for each $v\in V_H$, and $m(e)\leq c(e)$ for all $e\in E_H$. 
	If $H$ is bipartite, then the problem of computing a maximum weight feasible $b$-matching with lower and upper bounds can be solved in strongly polynomial time~\citep[Chapter 35,][]{Schr03a}.

	%$O({|V_H|}^3\log(|V_H|))$

	\begin{theorem}\label{th:SD  prop-inP}
	It can be checked in polynomial time whether a discrete SD  proportional assignment exists even if agents are allowed to express indifference between objects.
	\end{theorem}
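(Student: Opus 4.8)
The plan is to convert SD proportionality into a system of lower bounds on prefix sums and then realize those bounds by a feasible $b$-matching. First I would restate the condition combinatorially. For a discrete assignment $p$, the defining inequality of $\succsim_i^{SD}$ against the uniform allocation only needs to be tested at the boundaries of agent $i$'s equivalence classes: writing $U_i^k = E_i^1 \cup \cdots \cup E_i^k$ for the union of the top $k$ classes of agent $i$ (so indifferences are handled automatically, with strict preferences the case of singleton classes), $p$ is SD proportional if and only if
\[ \card{p(i) \cap U_i^k} \ge \frac{\card{U_i^k}}{n} \quad \text{for all } i\in N \text{ and all } k\in\set{1,\ldots,k_i}. \]
As the left-hand side is an integer, this is equivalent to $\card{p(i)\cap U_i^k}\ge \ceil{\card{U_i^k}/n}=:d_i^k$, where the thresholds $d_i^k$ are nondecreasing in $k$. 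By Theorem~\ref{th:m=nc} I would first test whether $n$ divides $m$, answering ``no'' immediately otherwise; in the remaining case every agent must receive exactly $m/n$ objects, so $d_i^{k_i}=m/n$.

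Next I would build a bipartite $b$-matching instance with lower and upper bounds. On one side I place a vertex for each object $o\in O$ with lower and upper bound both $1$, forcing each object to be allocated exactly once. On the other side, for each agent $i$ and each level $k$ I create a quota vertex $z_i^k$ joined by a unit-capacity edge to every object that agent $i$ ranks in class $k$ or better, i.e.\ to every $o\in U_i^k$; assigning an object to $z_i^k$ is read as giving it to agent $i$ and counting it toward the $k$-th prefix. I impose on $z_i^k$ the lower and upper bound $d_i^k-d_i^{k-1}$ (with $d_i^0=0$), so that agent $i$ collects $\sum_k (d_i^k-d_i^{k-1})=m/n$ objects in total and, cumulatively, exactly $d_i^k$ of them across levels $\le k$.

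It then remains to show this instance has a feasible $b$-matching if and only if an SD proportional assignment exists. The easy direction reads the allocation off a feasible $b$-matching: since every filled slot of level $j\le k$ draws its object from $U_i^j\subseteq U_i^k$, agent $i$ obtains at least $\sum_{j\le k}(d_i^j-d_i^{j-1})=d_i^k$ objects from $U_i^k$, which is exactly the required bound. The converse is the crux and is what I expect to be the main obstacle: given an SD proportional $p$, I must partition each $p(i)$ into groups of sizes $d_i^k-d_i^{k-1}$ with the $k$-th group contained in $U_i^k$. I would establish this by a Hall-type argument on the bipartite graph between the $m/n$ objects of $p(i)$ and agent $i$'s quota slots, where an object of class $k'$ may fill any slot of level $\ge k'$. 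For any set $T$ of slots, its neighbourhood is $p(i)\cap U_i^{k^*}$ where $k^*$ is the largest level occurring in $T$, while $\card{T}\le d_i^{k^*}$; the prefix inequality $\card{p(i)\cap U_i^{k^*}}\ge d_i^{k^*}$ is then precisely Hall's condition, yielding the desired decomposition. The difficulty is exactly that a single vertex capacity bounds only the objects of one class, whereas SD proportionality constrains cumulative prefixes; routing all of $U_i^k$ into $z_i^k$ together with the exact-size quotas is the device that linearizes these nested constraints. Finally, since the constructed graph is bipartite with polynomially many vertices and edges, a maximum-weight feasible $b$-matching with lower and upper bounds can be computed in strongly polynomial time by the cited result of \citet[Chapter 35]{Schr03a}, which completes the decision procedure.
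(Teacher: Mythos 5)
Your proposal is correct and follows essentially the same route as the paper: reduce the prefix conditions $\card{p(i)\cap U_i^k}\ge\ceil{\card{U_i^k}/n}$ to a bipartite feasible $b$-matching with one vertex per (agent, prefix level) carrying the telescoped lower bound $d_i^k-d_i^{k-1}$, after first rejecting instances where $n$ does not divide $m$. The only differences are cosmetic --- you impose exact quotas where the paper uses upper bound $\infty$, and your explicit Hall-type argument for decomposing $p(i)$ into the level quotas actually supplies a detail the paper's converse direction passes over quickly.
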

	\begin{proof}
	Consider $(N,O,\pref)$. 
	If $m$ is not a multiple of $n$, then by Theorem~\ref{th:m=nc}, no discrete SD  proportional assignment exists.
	In this case, in each discrete assignment $p$, there exists some agent $i\in N$ who gets less than $m/n$ objects. Thus, the following does not hold: $p(i) \succsim_i^{SD} (1/n,\ldots, 1/n)$.  Hence we can now assume that $m$ is a multiple of $n$ i.e., $m=nc$ where $c$ is a constant.
	We reduce the problem to checking whether a feasible $b$-matching exists for a graph $G=(V,E)$. Recall that $k_i$ is the number of equivalence classes of agent $i$.
	For each agent $i$, and for each $\ell \in \{1,\ldots, k_i\}$ 
	we introduce a vertex $v_i^{\ell}$. 
	For each $o\in O$, we create a corresponding vertex with the same name. Now,
	$V=\{ v_i^1,\ldots, v_i^{k_i} \midd i\in N\} \cup O.$
	The graph $G$ is bipartite with independent sets $O$ and $V\setminus O$. 
	Let us now specify the edges of $G$:

			\begin{itemize}
				\item for each $i\in N$, $\ell\in \{1,\ldots, k_i\}$ and $o\in O$ we have that $\{v_i^{\ell},o\}\in E$ if and only if $o\in \bigcup_{j=1}^{\ell} E_i^j$.
			\end{itemize}

			We specify the lower and upper bounds of each vertex:

			\begin{itemize}
				\item $a(v_i^{\ell})=\ceil{\frac{\sum_{j=1}^\ell \card{E_i^{j}}}{n}} - \sum_{j=1}^{\ell-1} a(v_i^j)$
				and $b(v_i^\ell) = \infty$ for each $i\in N$ and $\ell\in \{1,\ldots, k_i\}$;
				\item $a(o)=b(o)=1$ for each $o\in O$.
			\end{itemize}

				For each edge $e\in E$, $c(e)=1$.

	Now that $(V,E)$ has been specified, we check whether a feasible $b$-matching exists.
	If so, we allocate an object $o$ to an agent $i$ if the edge incident to $o$ that is included in the matching is incident to a vertex corresponding to an equivalence class of agent $i$.
	We claim that a discrete SD proportional assignment exists if and only if a feasible $b$-matching exists.
	If a feasible $b$-matching exists, then each $o\in O$ is matched so we have a complete assignment. For each agent $i\in N$, and for each $E_i^\ell$, an agent is allocated at least $\ceil{{\sum_{j=1}^\ell \card{E_i^{j}}}/{n}}$ objects of the same or more preferred equivalence class. Thus, the assignment is SD proportional. 

	On the other hand if a discrete SD proportional assignment $p$ exists, then $p(i)\pref_i^{SD} (1/n,\ldots, 1/n)$ implies that for each equivalence class $E_i^\ell$, an agent is allocated at least $\ceil{{\sum_{j=1}^\ell \card{E_i^{j}}}/{n}}$ objects from the same or more preferred equivalence class as $E_i^\ell$. Hence there is a $b$-matching in which the lower bound of each vertex of the type $v_i^\ell$ is met. For any remaining vertices $o\in O$ that have not been allocated, they may be allocated to any agent. Hence a feasible $b$-matching exists.
	\end{proof}

	\subsection{Weak SD proportionality}

	In the previous subsection, we examined the complexity of checking the existence of SD proportional discrete assignments. In this section we consider weak SD proportionality. 

	\begin{theorem}%[Characterization of existence of weak SD proportional assignments for strict preferences]
		\label{th:strict-weak-SD  prop}
		% It can be checked in polynomial time whether a weak SD  proportional assignment exists or not.
	For strict preferences, a weak SD proportional discrete assignment exists if and only if one of two cases holds:
	\begin{enumerate}
\item	$m=n$ and  it is possible to allocate to each agent an object that is not his least preferred object;
\item $m>n$.
	\end{enumerate}

	% 	\begin{enumerate}
	% \item $m=n$ and each agent gets an object that is not the least preferred;
	% \item $m>n$.
	% 	\end{enumerate}
		Moreover, it can be checked in polynomial time whether a weak SD proportional discrete assignment exists when agents have strict preferences.
	\end{theorem}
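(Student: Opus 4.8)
The plan is to translate weak SD proportionality for a single agent, under strict preferences, into a transparent counting condition, and then treat the three regimes $m<n$, $m=n$ and $m>n$ in turn. Fix an agent $i$ and list his objects in preference order as $o_{i,1}\succ_i\cdots\succ_i o_{i,m}$; for a discrete assignment $p$ set $t_i(k)=\card{p(i)\cap\{o_{i,1},\ldots,o_{i,k}\}}$, the number of his top $k$ objects he receives. Because the upper contour set $\{o:o\succsim_i o_{i,k}\}$ carries probability $k/n$ under the uniform assignment and $t_i(k)$ under $p(i)$, the relation $(1/n,\ldots,1/n)\succsim_i^{SD}p(i)$ says exactly that $t_i(k)\le k/n$ for all $k$, while $p(i)\succsim_i^{SD}(1/n,\ldots,1/n)$ says exactly that $t_i(k)\ge k/n$ for all $k$. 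For $n\ge 2$ the former already forces $t_i(1)=0$, which precludes the latter, so $(1/n,\ldots,1/n)\succ_i^{SD}p(i)$ reduces to the single condition $t_i(k)\le k/n$ for all $k$. Negating, agent $i$ is weak SD proportional if and only if $n\,t_i(k)>k$ for some $k\in\{1,\ldots,m\}$ (the case $n=1$ is trivial, as the lone agent receives every object and is always weak SD proportional; I assume $n\ge2$ henceforth). Two special cases of this inequality will do all the work: holding a single object of rank at most $n-1$ (then $t_i(k)=1>k/n$ at that rank $k$), and holding a bundle of size strictly more than $m/n$ (take $k=m$).

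Next I would dispatch the two regimes whose answer is forced. If $m<n$, then in every assignment at least one agent receives the empty bundle, for which $t_i(k)=0<k/n$ at every $k$; that agent is not weak SD proportional, so no weak SD proportional assignment exists, and indeed neither case of the statement holds. If $m>n$, I would exhibit one explicitly: run serial dictatorship in the order $1,\ldots,n$, each agent taking his most preferred remaining object, and then hand all $m-n$ leftover objects to agent $n$. When agent $i\le n-1$ picks, fewer than $n-1$ objects have been removed, so he receives an object of rank at most $i\le n-1$ and is satisfied by the first special case. Agent $n$ finishes with $1+(m-n)=m-n+1$ objects, and the inequality $n(m-n+1)>m$ rearranges to $(n-1)(m-n)>0$, which holds since $n\ge 2$ and $m>n$; so agent $n$ is satisfied by the second special case. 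Hence case (2) always admits a weak SD proportional assignment.

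It remains to analyze the boundary $m=n$, which carries the substance of the characterization. In any weak SD proportional assignment every agent must receive at least one object (an empty bundle is dominated as above), and since $m=n$ this forces a perfect matching in which each agent holds exactly one object. If agent $i$ holds the object of rank $r_i$, his only nonzero counts are $t_i(k)=1$ for $k\ge r_i$, so $n\,t_i(k)>k$ holds for some $k$ precisely when $r_i<n$, i.e. when the object is not his least preferred. Therefore a weak SD proportional assignment exists exactly when the agents admit a perfect matching to objects in which no agent gets his least preferred object, which is condition (1): the ``only if'' direction is immediate, since a witness is itself such a matching, and the ``if'' direction holds because such a matching is weak SD proportional by the rank computation. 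Polynomial-time checkability then follows at once: answer ``yes'' if $m>n$, ``no'' if $m<n$, and if $m=n$ test for a perfect matching in the bipartite graph joining each agent to every object other than his least preferred one, a routine bipartite matching computation. I expect the only delicate step to be the single-agent reformulation---in particular the strict-preference bookkeeping comparing the integer $t_i(k)$ with $k/n$ and the elimination of the degenerate equality branch---after which each regime is short.
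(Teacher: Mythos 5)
Your proof is correct and follows essentially the same route as the paper's: the same trichotomy on $m$ versus $n$, the same reduction to bipartite perfect matching on the graph that excludes each agent's least preferred object when $m=n$, and an explicit picking-sequence construction when $m>n$ (your single-agent counting reformulation is just the paper's Lemma on weak SD proportionality specialized to strict preferences). The only difference is cosmetic: the paper uses the balanced order $1,\ldots,n,n,n-1,\ldots,1$ so that agent $n$ ends with two objects of ranks at most $n$ and $n+1$, whereas you run plain serial dictatorship and hand agent $n$ all $m-n+1$ leftovers, verifying the single inequality $n(m-n+1)>m$ at $k=m$ --- both witnesses work.
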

	\begin{proof}
		%From Lemma \ref{lemma:weaksdprop-1} and \ref{lemma:weaksdprop-2}.

	If $m<n$, at least one agent will not get any object. Hence there exists no  weak SD  proportional discrete assignment. Hence $m\geq n$ is a necessary condition for the existence of a weak SD proportional discrete assignment.

	% If $m=n$, it can be checked in polynomial time whether there exists a weak SD  proportional assignment.

	Let us consider the case of $m=n$. Clearly each agent needs to get one object. If an agent $i$ gets an object that is not the least preferred object $o'$, then his allocation $p(i)$ is weak SD  proportional. The reason is that $\sum_{o\succ o'}p(i)(o)=1>\card{\{o\midd o\succ o'\}}/n.$
	Hence the following does not hold: $(1/n,\ldots, 1/n) \succsim_i^{SD} p(i)$. On the other hand, if $i$ gets the least preferred object, his allocation is not weak SD  proportional since $(1/m,\ldots, 1/m) \succ_i^{SD} p(i)$.	 
	Hence, we just need to check whether there exists a discrete assignment in which each agent gets an object that is not least preferred.
	This can be solved as follows. We construct a graph $(V,E)$ such that $V=N\cup O$ and for all $i\in N$ and $o\in O$, $\{i,o\}\in E$ if and only if $o\notin\min_{\pref_i}(O)$. We just need to check whether $(V,E)$ has a perfect matching. If it does, the matching is a weak SD  proportional discrete assignment.

	If  $m\geq n+1$, we show that a weak SD proportional discrete assignment exists. 
	% Note that a weak SD proportional discrete assignment is least likely to exist when all agents have identical preferences.
	Allocate the most preferred object to the agents in the following order
	$1,2,3,\ldots n, n ,n-1\ldots, 1, \dots$.
	Then each agent $i\in \{1,\ldots, n-1\}$ gets in the worst case his $i$-th most preferred object. This worst case occurs if agents preceding $i$ pick the $i-1$ most preferred objects of agent $i$. Even in this worst case, since $1>i/n$, we have that the allocation of agents in $\{1,\ldots, n-1\}$ is weak SD  proportional. As for agent $n$, in the worst case he get his $n$-th and $n+1$st most preferred objects. Since $2\geq \frac{n}{n} + \frac{1}{n}$,  by Lemma~\ref{lemma:weaksdprop} we get that the allocation of agent $n$ is also weak SD  proportional. 
	This completes the proof.
	\end{proof}

	% We will later generalize this result to the case of indifference which is more technically involved. However we have provided a simple proof sketch for the case of strict preferences to build further intuition.

	Indifferences result in all sorts of challenges. Some arguments that we used for the case for strict preferences do not work for the case of indifferences. 
	%Lemma~\ref{lemma:weaksdprop-2} 
	The case of strict preferences may lead one to wrongly assume that given a sufficient number of objects, a weak SD proportional discrete assignment is guaranteed to exist. However, if agents are allowed to express indifference, this is not the case. Consider the case where $m=nc+1$ and each agent is indifferent among each of the objects. Then there exists no weak SD  proportional discrete assignment because some agent will get fewer than $m/n$ objects.
	We first present a helpful lemma which follows directly from the definition of weak SD  proportionality.

	\begin{lemma}%[Characterization of weak SD proportionality]\
	\label{lemma:weaksdprop}
	An assignment $p$ is weak SD proportional if and only if for each $i\in N$,
	\begin{enumerate}%[i)]
		\item $\sum_{o'\succsim o} p(i)(o')> \card{\{o'\midd o'\succsim o\}}/n$ for some $o\in O$; or 
			\item  $\sum_{o'\succsim o} p(i)(o')\geq \card{\{o'\midd o'\succsim o\}}/n$ for all $o\in O$.
	\end{enumerate}
	\end{lemma}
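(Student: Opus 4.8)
The plan is to prove the lemma by simply unfolding the definition of weak SD proportionality one agent at a time and matching the resulting logical statement against the disjunction of conditions~(i) and~(ii). First I would recall that $p$ is weak SD proportional exactly when $\neg[(1/n,\ldots,1/n) \succ_i^{SD} p(i)]$ holds for every agent $i\in N$, so it suffices to fix an arbitrary agent $i$ and characterize this negation. Using the definition of strict SD preference, the statement $(1/n,\ldots,1/n) \succ_i^{SD} p(i)$ expands into the conjunction of
\[(1/n,\ldots,1/n) \succsim_i^{SD} p(i) \quad\text{and}\quad \neg[\,p(i) \succsim_i^{SD} (1/n,\ldots,1/n)\,].\]

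Next I would substitute the definition of the SD relation, relying on the key observation that under the uniform assignment every upper-contour sum equals $\sum_{o'\succsim_i o}(1/n) = \card{\{o'\midd o'\succsim_i o\}}/n$. This rewrites the first conjunct as ``$\card{\{o'\midd o'\succsim_i o\}}/n \ge \sum_{o'\succsim_i o} p(i)(o')$ for all $o\in O$'' and, after pushing the negation through the universal quantifier, rewrites the second conjunct as ``there exists $o\in O$ with $\sum_{o'\succsim_i o} p(i)(o') < \card{\{o'\midd o'\succsim_i o\}}/n$''.

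Finally I would apply De Morgan's law to negate this conjunction. The negation of the first conjunct is exactly condition~(i), namely that $\sum_{o'\succsim_i o} p(i)(o') > \card{\{o'\midd o'\succsim_i o\}}/n$ for some $o$; the negation of the second conjunct is exactly condition~(ii), namely that $\sum_{o'\succsim_i o} p(i)(o') \ge \card{\{o'\midd o'\succsim_i o\}}/n$ for all $o$. Their disjunction is precisely $\neg[(1/n,\ldots,1/n) \succ_i^{SD} p(i)]$, and quantifying over all $i\in N$ yields the claimed characterization. There is no genuine mathematical obstacle here, which is why the statement is flagged as following directly from the definition; the only point demanding care is the bookkeeping of quantifiers and of inequality directions as one passes through the strict-preference definition and the De Morgan step, so I would present those two passages explicitly to make the correspondence with~(i) and~(ii) transparent.
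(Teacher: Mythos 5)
Your proposal is correct: it is exactly the definition-unfolding argument the paper has in mind, since the paper itself offers no written proof and simply notes that the lemma ``follows directly from the definition of weak SD proportionality.'' Your explicit expansion of the strict SD relation into its two conjuncts and the De Morgan step matching their negations to conditions (i) and (ii) is sound and fills in the routine bookkeeping the paper omits.
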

	% \begin{proof}
	% Recall that an assignment $p$ is weak SD proportional if 	
	% \[\neg[(1/n,\ldots,1/n) \succ_i^{SD} p(i) \text{ for all } i\in N.\]
	% The condition is satisfied if for all $i\in N$,
	% 	\begin{enumerate}
	% 		\item $\neg[(1/n,\ldots, 1/n)\succsim_i^{SD} p(i)]$ or 
	% 		\item $p(i) \sim_i^{SD} (1/n,\ldots, 1/n)$.
	% 	\end{enumerate}
	% \end{proof}

	We will use Lemma~\ref{lemma:weaksdprop} in designing an algorithm to check whether a weak SD proportional discrete assignment exists when agents are allowed to express indifference. 

	\begin{theorem}\label{th:weak-SD  prop-inP}
	For a constant number of agents, it can be checked in polynomial time whether a weak SD  proportional discrete assignment exists even if agents are allowed to express indifference between objects.
	\end{theorem}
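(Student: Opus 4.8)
The plan is to reduce the existence question, for each of polynomially many guesses, to a feasible $b$-matching test of exactly the kind used in Theorem~\ref{th:SD  prop-inP}. The starting point is Lemma~\ref{lemma:weaksdprop}: an assignment is weak SD proportional precisely when every agent $i$ is satisfied in one of two ways — the \emph{strict way}, in which $\sum_{o'\succsim o}p(i)(o')>|\{o'\mid o'\succsim o\}|/n$ holds for some object $o$, or the \emph{uniform way}, in which the corresponding weak inequality holds for all $o$. Compared with SD proportionality this condition is disjunctive rather than conjunctive, so a single $b$-matching instance no longer captures it; this is the crux of the difficulty.

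First I would use that $n$ is constant to dispose of the disjunction by brute force. Since the relevant inequalities only change value at the boundaries of an agent's equivalence classes, the strict way is witnessed, if at all, at one of the levels $\ell\in\{1,\dots,k_i\}$. Hence the manner in which agent $i$ is satisfied can be summarised by a \emph{mode}: either the uniform way, or the strict way witnessed at one particular level $\ell$ — a choice among $k_i+1\le m+1$ options. I enumerate all joint mode vectors; there are $\prod_{i=1}^n(k_i+1)\le (m+1)^n$ of them, which is polynomial in $m$ for constant $n$.

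Next, for a fixed mode vector I reuse the bipartite graph of Theorem~\ref{th:SD  prop-inP} — the class vertices $v_i^\ell$ together with the object vertices, the edges $\{v_i^\ell,o\}$ for $o\in\bigcup_{j=1}^\ell E_i^j$, and the capacities $a(o)=b(o)=1$, $b(v_i^\ell)=\infty$, $c\equiv 1$ — changing only the lower bounds $a(v_i^\ell)$. For an agent satisfied in the uniform way I keep the telescoping bounds of Theorem~\ref{th:SD  prop-inP}, which force $|p(i)\cap\bigcup_{j=1}^\ell E_i^j|\ge\lceil(\sum_{j\le\ell}|E_i^j|)/n\rceil$ at every level. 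For an agent whose mode is the strict way at level $\ell$ I instead set $a(v_i^\ell)=\lfloor(\sum_{j\le\ell}|E_i^j|)/n\rfloor+1$ and $a(v_i^{\ell'})=0$ for all $\ell'\ne\ell$; because $v_i^\ell$ is adjacent only to $\bigcup_{j\le\ell}E_i^j$, meeting this single bound forces $|p(i)\cap\bigcup_{j\le\ell}E_i^j|$ to exceed $(\sum_{j\le\ell}|E_i^j|)/n$, i.e.\ the strict inequality. Here I use that for an integer quantity $\ge x/n$ is the same as $\ge\lceil x/n\rceil$ and $>x/n$ the same as $\ge\lfloor x/n\rfloor+1$. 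Each instance is then tested for a feasible $b$-matching in strongly polynomial time.

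Finally I would check the equivalence. If some mode vector admits a feasible $b$-matching, then reading off the assignment gives each agent either all the uniform-way inequalities or one strict inequality, so by Lemma~\ref{lemma:weaksdprop} the assignment is weak SD proportional. Conversely, if a weak SD proportional assignment $p$ exists, Lemma~\ref{lemma:weaksdprop} lets me pick, for each agent, a mode it satisfies, and then $p$ itself witnesses feasibility of the associated $b$-matching instance: the routing of $p(i)$'s objects to the vertices $v_i^\ell$ is exactly as in Theorem~\ref{th:SD  prop-inP}, and for a strict-way agent one simply routes the required number of objects of $\bigcup_{j\le\ell}E_i^j$ to $v_i^\ell$. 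Since there are polynomially many instances and each is solved in polynomial time, the whole procedure runs in polynomial time for constant $n$. I expect the only delicate bookkeeping to be verifying that a single-vertex lower bound faithfully encodes one strict inequality while leaving every remaining object free to be assigned; the disjunction itself is handled entirely by the constant-$n$ enumeration.
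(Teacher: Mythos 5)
Your proposal is correct and follows essentially the same route as the paper: invoke Lemma~\ref{lemma:weaksdprop}, enumerate the $\prod_{i=1}^n(k_i+1)$ ways the agents can be satisfied (polynomially many for constant $n$), and reduce each combination to a feasible $b$-matching test with lower bounds $\lfloor\cdot\rfloor+1$ for a strict witness. The only differences are cosmetic: the paper encodes your ``uniform way'' via SD-equivalence with exact per-class bounds $a(B_i^j)=b(B_i^j)=|E_i^j|/n$ rather than the telescoping ceiling bounds, and your retention of zero-lower-bound level vertices for strict-mode agents makes the routing of an agent's leftover objects in the forward direction somewhat more transparent than in the paper's construction, which adds only the single witness vertex.
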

	\begin{proof}
	% Check $M+1$ cases for each agent. There are $n^{M+1}$ cases like these. For each case solve a $b$-matching.

		Consider $(N,O,\pref)$. We want to check whether a weak SD proportional discrete assignment exists. 
		By Lemma~\ref{lemma:weaksdprop}, this is equivalent to checking whether there exists a discrete assignment $p$, where for each $i\in N$, one of the following $k_i$ conditions holds: for $l\in \{1,\ldots, k_i\}$, 
					\begin{align}
		&\sum_{o\in \bigcup_{j=1}^{l} E_i^j} p(i)(o)> \frac{\card{\bigcup_{j=1}^{l} E_i^j}}{n}
							\end{align}

	\noindent		
			or the following  $(k_i+1)$-st condition holds
				\begin{align}
		p(i)\sim_i^{SD} (1/n,\ldots, 1/n).
						\end{align}

					The $(k_i+1)$-st condition only holds if each $\card{E_i^j}$ is a multiple of $n$ for $j\in \{1,\ldots, k_i\}$.

						%Since the number of agents is a constant, 

						We need to check whether there exists a discrete assignment in which for each agent one of the $k_i+1$ conditions is satisfied. In total there are $\prod_{i=1}^n (k_i+1)$ different ways in which the agents could be satisfied. We will now present an algorithm to check if there exists a feasible weakly SD  proportional discrete assignment in which for each agent $i$, a certain condition among the $k_i+1$ conditions is satisfied. Since $n$ is a constant,  the total number of combinations of conditions is polynomial.

	We define a bipartite graph $G=(V,E)$ whose vertex set is initially empty.
	For each agent $i$, if the condition number is $\ell \in \{1,\ldots, k_i\}$ 
	then we add a vertex $v_i^{\ell}$. If the condition number is $k_i+1$, then we add $k_i$ vertices --- $B_i^j$ for each $E_i^j$ where $j\in \{1,\ldots, k_i\}$. 
	For each $o\in O$, we add a corresponding vertex with the same name.
	The sets $O$ and $V\setminus O$ will be independent sets in $G$. We now specify the edges of $G$.

			\begin{itemize}
				\item $\{v_i^{\ell},o\}\in E$ if and only if $o\in \bigcup_{j=1}^{\ell} E_i^j$ for each $i\in N$, $\ell\in \{1,\ldots, k_i\}$ and $o\in O$.
				\item $\{B_i^j,o\}\in E$ if and only if $o\in E_i^j$ for each $i\in N$, $j\in \{1,\ldots, k_i\}$, and $o\in O$.
			\end{itemize}

			We specify the lower and upper bounds of each vertex.

			\begin{itemize}
				\item $a(v_i^{\ell})=\floor{\frac{\card{\bigcup_{j=1}^{\ell} E_i^j}}{n}}+1$ and $b(v_i^{\ell})=\infty$ for each $i\in N$ and $\ell\in \{1,\ldots, k_i\}$;
				\item $a(B_i^j)=b(B_i^j)=\frac{\card{E_i^j}}{n}.$ for each $B_i^j$;
				\item $a(o)=b(o)=1$ for each $o\in O$.
			\end{itemize}

	For each edge $e\in E$, $c(e)=1$.
	For each $n$-tuple of satisfaction conditions, we construct the graph as specified above and then check whether there exists a feasible $b$-matching. A weak SD proportional discrete assignment exists if and only if a feasible $b$-matching exists for the graph corresponding to at least one of the $\prod_{i=1}^n (k_i+1)$ combinations of conditions. Since $\prod_{i=1}^n (k_i+1)$ is polynomial if $n$ is a constant and since a feasible $b$-matching can be checked in strongly polynomial time, we can check the existence of a weak SD  proportional discrete assignment in polynomial time.
	\end{proof}

	\subsection{Envy-freeness}

In this section, we examine the complexity of checking whether an envy-free assignment exists or not. Our positive algorithmic results for SD proportionality and weak SD proportionality help us obtain algorithms for SD envy-freeness and weak SD envy-freeness when $n=2$. 

From Theorem~\ref{th:SD  prop-inP}, we get the following corollary.

	\begin{corollary}\label{cor:2agents-SD  envy}
		For two agents, it can be checked in polynomial time whether a discrete SD envy-free assignment exists even if agents are allowed to express indifference between objects.
	\end{corollary}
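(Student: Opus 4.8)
The plan is to reduce the case of two agents to the already-established polynomial-time result for SD proportionality, Theorem~\ref{th:SD  prop-inP}, by exploiting the equivalence proved earlier for $n=2$. First I would invoke Theorem~\ref{remark:imply2}, whose second part states that for two agents SD proportionality is equivalent to SD envy-freeness. This equivalence is the crucial ingredient: it means that an assignment $p$ is SD envy-free if and only if it is SD proportional, so the two existence questions coincide exactly when $n=2$.

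Given this equivalence, the remaining argument is immediate. A discrete SD envy-free assignment exists if and only if a discrete SD proportional assignment exists, and by Theorem~\ref{th:SD  prop-inP} the latter can be checked in polynomial time even when agents express indifferences. Hence the same algorithm decides SD envy-freeness for two agents. No new construction is needed; the corollary is a direct consequence of combining the structural equivalence with the algorithmic result.

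The only point requiring any care is to make sure the equivalence in Theorem~\ref{remark:imply2} is genuinely an \emph{iff} at the level of a fixed assignment $p$, not merely at the level of existence, and that it holds in the presence of indifferences. Inspecting the proof of that theorem, the equivalence is established assignment-by-assignment via the cardinal-utility characterization (an SD proportional assignment is proportional for all consistent utilities, hence for $n=2$ envy-free for all consistent utilities, hence SD envy-free, and conversely), so it applies verbatim regardless of whether preferences are strict. Thus there is no real obstacle: the proof is a two-line appeal to \thmref{remark:imply2} and \thmref{th:SD  prop-inP}. If I wanted to be fully self-contained I might instead restate why, for two agents, $u_i(p(i)) \ge u_i(O)/2$ is equivalent to $u_i(p(i)) \ge u_i(p(-i))$, but since that computation is already carried out in the earlier theorem, citing it is the cleanest route.
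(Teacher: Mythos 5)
Your proof is correct and follows essentially the same route as the paper: the paper's own argument likewise combines the two-agent equivalence of SD proportionality and SD envy-freeness (from Theorem~\ref{remark:imply2} together with Theorem~\ref{remark:imply}) with the polynomial-time algorithm of Theorem~\ref{th:SD  prop-inP}. Your added remark that the equivalence holds assignment-by-assignment and in the presence of indifferences is a sensible sanity check, and it is indeed borne out by the cited proofs.
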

	\begin{proof}
		For two agents, SD  proportionality implies SD  envy-freeness, and by Theorem \ref{remark:imply},
		SD envy-freeness implies SD proportionality.
	\end{proof}

	Corollary~\ref{cor:2agents-SD  envy} generalizes Proposition~10 of \citep{BEL10a} which stated that for two agents and \emph{strict} preferences, it can be checked in polynomial time whether a necessary envy-free discrete assignment exists.
	
Similarly, from Theorem~\ref{th:weak-SD  prop-inP}, we get the following corollary.
	
	\begin{corollary}\label{cor:weak-envy-2agents}
			For two agents, it can be checked in polynomial time whether a weak SD  envy-free or a possible envy-free discrete assignment exists.
	\end{corollary}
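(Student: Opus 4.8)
The plan is to reduce both existence questions to the weak SD proportionality question already settled in Theorem~\ref{th:weak-SD  prop-inP}. The crucial inputs are the two-agent equivalences recorded in Theorem~\ref{remark:imply2}: for $n=2$, an assignment is possible envy-free if and only if it is weak SD proportional (item~3), and it is weak SD envy-free if and only if it is weak SD proportional (item~4). Because these are equivalences at the level of individual assignments, the set of weak SD envy-free assignments, the set of possible envy-free assignments, and the set of weak SD proportional assignments all coincide when $n=2$. In particular, a weak SD envy-free (respectively, possible envy-free) discrete assignment exists if and only if a weak SD proportional discrete assignment exists.

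First I would invoke Theorem~\ref{remark:imply2} to replace each target property by weak SD proportionality. Then, since $n=2$ is a constant, I would apply Theorem~\ref{th:weak-SD  prop-inP}, which decides in polynomial time whether a weak SD proportional discrete assignment exists, even when agents express indifferences between objects. Composing the equivalence with this decision procedure yields a polynomial-time test for each of the two envy-freeness notions, so the corollary follows immediately.

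There is no substantive obstacle here, since both ingredients are already established; the only point requiring care is that the equivalences in Theorem~\ref{remark:imply2} hold assignment-by-assignment, rather than merely asserting that one notion is achievable whenever the other is. This is precisely what that theorem proves, so the existence of a fair assignment transfers verbatim between the notions and the polynomial running time is inherited directly from Theorem~\ref{th:weak-SD  prop-inP}.
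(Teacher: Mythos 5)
Your proposal is correct and follows exactly the paper's own route: it invokes the two-agent equivalences of Theorem~\ref{remark:imply2} to identify weak SD envy-freeness and possible envy-freeness with weak SD proportionality, and then applies the polynomial-time algorithm of Theorem~\ref{th:weak-SD  prop-inP} for constant $n$. Nothing is missing.
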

	\begin{proof}
		For two agents, weak SD  proportional is equivalent to weak SD  envy-free and possible envy-free (Theorem~\ref{remark:imply2}).
	\end{proof}

We prove that checking whether a (weak) SD  envy-free or possible envy-free discrete assignment exists is NP-complete. The complexity of the second problem was mentioned as an open problem in \citep{BEL10a}.
	\citet{BEL10a} showed that the problem of checking whether a necessary envy-free discrete assignment exists is NP-complete. The statement carries over to the more general domain that allows for ties. We point out that if agents have identical preferences, it can be checked in linear time whether an SD  envy-free discrete assignment exists even when preferences are not strict. Identical preferences have received special attention within fair division~\citep[see \eg][]{BrFi00a}.

	\begin{theorem}\label{th:envy-identical}
		For agents with identical preferences, an SD  envy-free discrete assignment exists if and only if each equivalence class is a multiple of $n$.
	\end{theorem}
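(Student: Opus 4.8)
The plan is to exploit the symmetry forced by identical preferences. Since every agent shares the common preference relation $\pref$, I would write $E^1,\dots,E^k$ for the common equivalence classes and drop the agent subscript. First I would observe that an SD envy-free assignment $p$ must satisfy $p(i) \succsim_i^{SD} p(j)$ for \emph{every ordered} pair $i,j\in N$. Applying the definition to the pair $(i,j)$ and again to the reversed pair $(j,i)$, and combining the two resulting inequalities, forces equality: for all $i,j\in N$ and all $o\in O$,
\[\sum_{o_k\pref o}p(i)(o_k)=\sum_{o_k\pref o}p(j)(o_k).\]
In other words, the existence of any SD envy-free assignment under identical preferences actually forces all agents to be mutually SD-indifferent, receiving exactly the same cumulative number of objects from every upper contour set $\{o_k\midd o_k\pref o\}$.

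From here the forward direction is bookkeeping with cardinalities. The relevant upper contour sets are exactly the unions $E^1\cup\dots\cup E^\ell$, so the equality above says every agent receives the same number of objects from $E^1\cup\dots\cup E^\ell$ for each $\ell$; subtracting consecutive values shows each agent receives the same count, say $c_\ell$, from each single class $E^\ell$. Because the $n$ agents' allocations partition the $\card{E^\ell}$ objects of that class, I get $n\cdot c_\ell=\card{E^\ell}$, so $\card{E^\ell}$ is a multiple of $n$. For the converse, if each $\card{E^\ell}$ is a multiple of $n$, I would simply allocate $\card{E^\ell}/n$ arbitrary objects of $E^\ell$ to each agent, for every $\ell$. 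This is a complete discrete assignment in which every agent's allocation has identical cumulative counts on every upper contour set, so every pair of agents is SD-indifferent and the assignment is trivially SD envy-free.

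The only genuinely substantive step — the ``main obstacle,'' such as it is — is the passage from the one-directional SD inequalities to equality of cumulative counts. This is immediate once one notes that under identical preferences the defining sums for $\succsim_i^{SD}$ are symmetric in the two agents, so SD envy-freeness is equivalent here to SD-indifference among all agents; everything else reduces to an elementary divisibility argument on the sizes of the equivalence classes, and in particular the construction in the converse shows no additional difficulty arises from indifferences.
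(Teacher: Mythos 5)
Your proof is correct, and since the paper states Theorem~\ref{th:envy-identical} without giving a proof, there is nothing to diverge from: the symmetry argument (identical preferences turn the two one-directional SD inequalities for the ordered pairs $(i,j)$ and $(j,i)$ into equality of cumulative counts on every upper contour set, whence divisibility by $n$ via completeness, with the converse by splitting each class evenly) is evidently the intended one.
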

	% \begin{proof}
	% 	Assume a discrete assignment $p$ is SD  envy-free. Then $p(i)\succsim_i p(j)$ and  $p(j)\succsim_j p(j)\iff p(j)\succsim_i p(j)$. Hence $p(i)\sim (1/n, \ldots, 1/n)$ for each $i\in N$. This is only possible if each equivalence class $E_i^j$ is a multiple of $n$ and each agent $i$ gets $|E_i^j|/n$ objects from class $E_i^j$.
	% 	\end{proof}

	Even $n$ is not constant but preferences are strict, it can be checked in time linear in $n$ and $m$ whether a complete weak SD envy-free discrete assignment exists. This follows from an equivalent result in \citep{BEL10a} for possible completion envy-freeness and the fact that weak SD envy-freeness is equivalent to possible completion envy-freeness (Theorem~\ref{thm:equiv}\ref{item:wssdef-compl}). We use similar arguments as \citet{BEL10a} for possible envy-freeness.

	% Next, we show that when preferences are strict, weak SD envy-freeness is computationally more tractable than SD envy-freeness.
	% 
	% 
	% \begin{theorem}\label{th:strict-weak-SD  envy}
	% For strict preferences, it can be checked in time linear in $n$ and $m$ whether a complete weak SD envy-free discrete assignment exists or not. 
	% \end{theorem}
	% \begin{proof}
	% Let the number of distinct top-ranked objects be $k$. 
	% If $m<2n-k$, then there is at least one agent who receives an object that is not his top-ranked $o$ and no further items. Thus he necessarily envies the agent who received $o$. 
	% If $m\geq 2n-k$, then there exists a possible envy-free discrete assignment~\citep{BEL10a}. A linear-time algorithm to compute such a discrete assignment was outlined in \citep{BEL10a}. By Remark~\ref{remark:imply}, we know that possible envy-freeness implies weak SD  envy-freeness. Hence there exists a weak SD  envy-free assignment which can be computed by the same algorithm.
	% \end{proof}
	
	\begin{theorem}\label{th:strict-possible-SD  envy}
	For strict preferences, it can be checked in time linear in $n$ and $m$ whether a possible envy-free discrete assignment exists. 
	\end{theorem}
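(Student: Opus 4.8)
The plan is to adapt the constructive characterization that \citet{BEL10a} give for possible completion envy-freeness, translating every step from ``weak order consistent with the responsive set extension'' to ``additive utility consistent with $\pref_i$''. First I would record the necessary conditions. Since the objects are goods, every consistent utility assigns each object positive value, so an agent holding the empty bundle values it at $0$ and would possibly envy any agent with a non-empty bundle; hence every agent must receive at least one object, forcing $m\ge n$. In the boundary case $m=n$ each agent receives a single object, and for singletons all envy notions collapse to the ordinal comparison of the two objects, so agent $i$ can possibly avoid envy only if he receives his most preferred object; thus a possible envy-free assignment exists for $m=n$ if and only if the $n$ agents have pairwise distinct top objects, which is a linear-time check.

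The engine for the case $m>n$ is a lemma I would prove first: if agent $i$ receives his most preferred object $o^*_i$, then a sufficiently steep consistent utility, e.g.\ $u_i$ with value $M^{m-k}$ on his $k$-th ranked object for $M>m$, gives $u_i(S_i) > u_i(S_j)$ for every bundle $S_j$ not containing $o^*_i$, since one copy of $o^*_i$ outweighs any bundle drawn from the remaining objects. As no other bundle contains $o^*_i$, agent $i$ is then possibly non-envious no matter how the remaining objects are split. Consequently, whenever the agents can be matched to pairwise distinct top objects, giving each his top object and distributing the leftover $m-n$ objects arbitrarily already produces a possible envy-free assignment. The real work is the case of colliding top objects, where a losing agent must instead be defended by a \emph{flatter} utility that leans on the bulk of a larger bundle (as with agent $2$ taking $\{o_2,o_3\}$ when both agents rank $o_1 \succ o_2 \succ o_3$).

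To resolve collisions I would run a sequential allocation in the spirit of \citet{BEL10a}: scan the objects from most to least preferred under a fixed processing order, award each contested top object to one claimant, and compensate the remaining claimants with additional lower-ranked objects, maintaining the invariant that every partially built bundle admits a consistent utility under which it beats all competing bundles built so far. The forward (soundness) direction then follows from the favorite-object lemma together with an explicit construction of the defending utility for each compensated agent, and the running time is linear because each object is placed exactly once with constant bookkeeping.

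The step I expect to be the main obstacle is the converse: proving that when the greedy reports failure, no possible envy-free assignment exists. The subtlety is precisely what separates this theorem from the weak SD envy-freeness result the paper imports from \citet{BEL10a}: possible envy-freeness requires a \emph{single} utility $u_i$ that simultaneously beats all other bundles, which is strictly stronger than the pairwise ``not strictly SD-dominated'' condition (Example~\ref{example:weaknotpossible}). I would therefore argue the converse by an exchange argument that rewrites any hypothetical possible envy-free assignment into one consistent with the greedy's choices, using that pushing a more-preferred object into agent $i$'s bundle never hurts his defendability while the agent who loses it can be re-defended, and I would invoke the forced-partition instances (identical preferences, where two singleton bundles are always SD-comparable, cf.\ Theorem~\ref{th:envy-identical}) to confirm that the failure condition is genuine. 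Making the ``re-defended simultaneously'' part precise, so that one global utility per agent survives all the exchanges, is where the additive structure makes the analysis harder than in the completion setting.
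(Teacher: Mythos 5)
Your proposal correctly disposes of the easy cases ($m<n$, the $m=n$ boundary, and the ``favorite-object lemma'' for agents who receive their top object), but it does not actually contain a proof of the theorem: the two steps that carry all the weight are left as plans, and you have misjudged where the difficulty sits. The paper's argument (adapting Proposition~4 of \citet{BEL10a}) hinges on the exact criterion $m\ge 2n-k$, where $k$ is the number of distinct top-ranked objects, and your proposal never isolates this threshold. The converse, which you flag as ``the main obstacle'' and propose to attack via a global exchange argument, is in fact a two-line counting argument: in any assignment every agent needs at least one object, at most $k$ agents can hold their own top object, and any agent holding a single non-top object necessarily envies whoever holds his top object (for \emph{every} consistent utility, one inferior object is worth less than a bundle containing the top object); hence $m\ge k+2(n-k)=2n-k$ is necessary. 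No rewriting of a hypothetical fair assignment into a greedy-consistent one is needed.

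Conversely, the forward direction, which you treat as routine ``compensation with lower-ranked objects,'' is where the specific combinatorial choice matters and where your invariant is not established. The paper gives each of the $n-k$ agents without a free top object exactly two objects via an ascending pass followed by a descending pass (a snake draft). This ordering is what guarantees that each such agent $i$ strictly prefers his own second pick to the second pick of every agent $j$ with an earlier first turn, so that a single utility for $i$ --- nearly flat down to and including his second object, negligible below it --- simultaneously defends $i$ against \emph{all} other bundles, including arbitrary allocations of the leftover objects. Your concern that possible envy-freeness demands one utility beating all rivals at once (unlike the pairwise condition of weak SD envy-freeness) is exactly right, but your proposal does not exhibit the picking order that makes such a utility exist, nor the utility itself; ``maintaining the invariant that every partially built bundle admits a defending utility'' is asserted, not proved. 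As written, the proposal is a plausible outline with an acknowledged hole at the converse and an unsubstantiated claim at the heart of the soundness direction.
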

	\begin{proof}
		We reuse the arguments in the proof of \citep[][Proposition 4]{BEL10a}.
	Let the number of distinct top-ranked objects be $k$. 
	If $m<2n-k$, then there is at least one agent who receives one object that is not his top-ranked $o$ and no further items. Thus he necessarily envies the agent who received $o$ and hence there cannot exist a possible envy-free discrete assignment.
	If $m\geq 2n-k$, then we run the following algorithm.
	(1) For each of the $k$ top-ranked objects, allocate it to an agent that ranks it first. Denote by $N'$ the set of agents that have not yet received an object, and order them arbitrarily. (2) Go through the $n - k$ agents in $N'$ in ascending order and ask them to claim their most preferred item from those still available. (3) Go through the agents in $N'$ again, but this time in descending order, and ask them to claim their most preferred item from those still available. 
	The agents who got their most preferred object do not envy any other agent if they have sufficiently high utility for their most preferred object.
For the remaining agents, who have received two objects each, no agent $i$ strictly SD prefers another agent $j$'s current allocation: even if $j$ (who had an earlier first turn) received a more preferred first object, $i$ strictly prefers his second object to $j$'s second object (in case $j$ received a second object). 
Therefore, there exist cardinal utilities consistent with the ordinal preferences where the agents in $N'$ put high enough utility for the second object they get so that they are not envious of other agents even if the other agent gets all the unallocated objects. Therefore the unallocated objects can be allocated in an arbitrary manner among the remaining agents and the resulting complete discrete assignment is still possible envy-free.
	\end{proof}

	\citet{BEL10a} mentioned the complexity of possible completion envy-freeness for the case of indifferences as an open problem. We present a reduction to prove that for \emph{all} notions of envy-freeness considered in this paper, checking the existence of a fair discrete assignment is NP-complete. %The reason why it applies to all envy-freeness notions is that each agent has only two equivalence classes in his preferences.

	\begin{theorem}\label{th:envy-free-is-hard}
		The following problems are NP-complete:
		\begin{enumerate}%[i)]
			\item check whether there exists a weak SD  envy-free (equivalently possible completion envy-free) discrete assignment,
			\item check whether there exists a possible envy-free discrete assignment, and
			\item check whether there exists an SD envy-free discrete assignment.
		\end{enumerate}
	\end{theorem}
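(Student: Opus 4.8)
The plan is to show that all three decision problems lie in NP and then to prove NP-hardness for all three by means of a \emph{single} reduction. Membership in NP is immediate: by Remark~\ref{remark:verify}, given a discrete assignment one can verify weak SD envy-freeness and SD envy-freeness directly, and possible envy-freeness through the linear program described there, all in time polynomial in $n$ and $m$. The interesting direction is hardness, and the key observation driving the approach is the inclusion hierarchy $\text{SD envy-free} \Rightarrow \text{possible envy-free} \Rightarrow \text{weak SD envy-free}$, which follows from Figure~\ref{fig:inclfigure} together with Theorem~\ref{remark:imply}. Because the strongest of the three notions implies the other two, and the weakest is implied by the other two, I can settle all three problems at once provided the reduction forces the strongest notion in the ``yes'' case and rules out even the weakest notion in the ``no'' case.

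Concretely, from an instance $I$ of a suitable NP-complete problem (for instance a restricted satisfiability or exact-cover problem) I would construct in polynomial time an assignment instance $(N,O,\pref)$ such that: (i) if $I$ is a yes-instance, then $(N,O,\pref)$ admits a discrete \emph{SD} envy-free assignment; and (ii) if $I$ is a no-instance, then $(N,O,\pref)$ admits no discrete \emph{weak SD} envy-free assignment. Implication (i) then guarantees an assignment satisfying the strongest notion, hence also possible envy-free and weak SD envy-free ones; implication (ii) rules out the weakest notion, hence also the possible envy-free and SD envy-free ones. Thus the same construction simultaneously proves NP-hardness of all three existence questions, and recalling that weak SD envy-freeness coincides with possible completion envy-freeness (Theorem~\ref{thm:equiv}\ref{item:wssdef-compl}) it answers the open problem of \citet{BEL10a} as well.

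The reduction must crucially exploit indifferences. For strict preferences, weak SD and possible envy-freeness are decidable in linear time (\thmref{th:strict-possible-SD  envy}), so the hardness cannot survive under strict preferences for those two notions; I would therefore encode the combinatorial choices of $I$ in the way the objects of a common indifference class are distributed among the agents, while a small supply of highly ranked, deliberately scarce objects creates competition. The gadgets should be designed so that a solution of $I$ corresponds to a ``balanced'' distribution in which every agent can be given a bundle that SD-dominates each rival bundle, whereas the absence of a solution forces some agent into a bundle that is strictly SD-dominated by another agent's bundle.

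The main obstacle is establishing direction (ii). Weak SD envy-freeness is the most permissive notion considered: an agent objects only if some other agent's allocation \emph{strictly} SD-dominates his own, i.e., weakly wins on every upper-contour set and strictly wins on at least one. Ruling this out for \emph{every} discrete assignment arising from a no-instance---so that at least one agent is unavoidably trapped in such a strictly dominated position, with no tie-breaking escape afforded by the indifference classes---is the delicate core of the argument. The construction therefore has to calibrate the scarcity of the top objects against the sizes of the indifference classes so precisely that strict domination is forced exactly when $I$ has no solution, while yes-instances still admit the far stronger SD envy-free assignment required by direction (i).
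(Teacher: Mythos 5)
Your overall strategy is exactly the one the paper uses: membership in NP via Remark~\ref{remark:verify}, and a single ``sandwich'' reduction whose yes-instances admit an SD envy-free assignment and whose no-instances exclude even a weak SD envy-free one, so that all three problems (and, via Theorem~\ref{thm:equiv}\ref{item:wssdef-compl}, possible completion envy-freeness) are settled at once. The paper indeed reduces from Exact Cover by 3-Sets, and your observation that indifferences must be essential (because of the linear-time algorithms for strict preferences) is also on target. The implication chain you invoke is sound: SD envy-freeness is envy-freeness for \emph{all} consistent utilities, hence for \emph{some}, giving possible envy-freeness, which in turn implies weak SD envy-freeness by Theorem~\ref{remark:imply}.

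However, what you have written is a proof plan, not a proof: the reduction itself --- which is the entire mathematical content of the hardness argument --- is missing. You state that the gadgets ``should be designed'' so that balanced distributions correspond to exact covers and that calibrating scarcity against class sizes is ``the delicate core,'' but you never exhibit the construction or verify either direction. The paper's reduction is far from generic: from an X3C instance with $|S|=3q$ and $|C|=l$ it builds $40l$ agents partitioned into $N_1$ (one agent per 3-set), $N_2$ ($30l$ filler agents), and $N_3$ ($9l$ checker agents, nine per 3-set), and $120l$ objects partitioned into $O_1=O_1^S\cup O_1^B$ (element objects plus a buffer of $3l-3q$ objects), $O_2$, and $O_3$, with each agent having exactly two indifference classes. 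The proof of direction (ii) then proceeds through a chain of six structural observations forcing every assignment that is even weak SD envy-free to give all of $O_2$ to $N_2$, exactly three objects to every agent, no $O_3$ objects to $N_1$, and no mixing of $O_1^S$ with $O_1^B$ for any $N_1$ agent --- which is precisely what makes the surviving assignments encode exact covers. None of this counting, nor the carefully tuned function $f$ distributing buffer objects among the checker agents, can be recovered from your description, so the proposal as it stands does not establish hardness.
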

	\begin{proof}
	Membership in NP is shown by Remark~\ref{remark:verify}. To show hardness we use a reduction from X3C (Exact Cover by 3-sets).
	In X3C, the input is a ground set $S$ and a collection $C$ containing 3-sets of elements from $S$, and the question is whether there exists
	a subcollection $X\subseteq C$ such that each element of $S$ is contained in exactly one of the 3-sets in $X$. X3C is known to be NP-complete~\citep{Karp72a}.
	Consider an instance $(S,C)$ of X3C where $S=\{s_1,...,s_{3q}\}$ and $C=\{c_1,...,c_l\}$. Without loss of generality, $l\geq q$. % An easy way to see this is that if $l < q$ then elements from $S$ will not be included in any of the 3-sets, implying there is no exact cover.
	We construct the following assignment problem $(N,O,\pref)$ where
	$N=\{a_1,...,a_{40l}\}$ is partitioned into three sets $N_1, N_2$ and $N_3$ with $|N_1|=l$, $|N_2|=30l$, $|N_3|=9l$ and
	$O=\{o_1,...,o_{120l}\}$ is partitioned into three sets $O_1, O_2$ and $O_3$ with $|O_1|=3l$, $|O_2|=90l$ and $|O_3|=27l$.
	The set $O_1$ is partitioned into two sets, $O^S_1$ and $O^B_1$, the first one corresponding to the set of elements of $S$ in the X3C instance and the second being a `buffer' set. We have $|O^S_1|=3q$ and $|O^B_1|=3l-3q$.
	We associate each $c_j \in C$ with the $j$-{th} agent in $N_1$. With each $c_j \in C$ we also associate nine consecutive agents in $N_3$. 
	The preferences of the agents are defined as follows:\\

	\noindent
	$i:O_2 \cup c_i,(O_1\backslash c_i) \cup O_3$ for $i \in N_1$\\
	$i:O_2,O_1 \cup O_3$ for all $i \in N_2$\\
	$i:f(i),O\backslash f(i)$ for $i \in N_3$\\

	The function $f:N_3 \rightarrow 2^O$ is such that it ensures the following properties: 
	For each of the three elements $e$ of $c_j$, three out of the nine agents associated with $c_j$ list $e$ as a second choice object, and list $c_j\backslash \{e\}$ as first choice objects. Let us label these three agents $a_1$, $a_2$ and $a_3$.
	The sets of objects $f(a_1)$, $f(a_2)$ and $f(a_3)$ each exclude a distinct $\frac{1}{3}$ of the buffer objects $O^B_1$.
	$f(a_1)$,  $f(a_2)$ and $f(a_3)$ each contain $2/3$ of the elements of $O^B$. The $1/3$  elements that are excluded from each of these sets must be distinct i.e., elements that $f(a_1)$ does not contain are contained in $f(a_2)$ and $f(a_3)$, and vice versa.

	For each $i \in N_3$, $f(i) \cap (O_2 \cup O_3)=O_f$.
	Let $O_f$ contain $\frac{2}{3}$ of the elements of $O_2$ and $\frac{2}{3}$ of the elements of $O_3$. \
	Consider a discrete assignment that is weak SD envy-free or possible envy-free or SD envy-free. We can make the following observations:

	\begin{enumerate}
	\item Agents in $N_2$ are allocated all objects from $O_2$ and none from $O\backslash O_2$.
	To show this, first consider the case where $30l$ or more objects from $O_2$ are assigned to $N \backslash N_2$. In this case, at least one agent in $N_2$ is envious of an agent from $N\backslash N_2$: there will be an agent $b_1$ in $N \backslash N_2$ with three or more objects from $O_2$, and there will be an agent $b_2$ in $N_2$ with at most three elements, at most two of which are from $O_2$. This is because if an agent has more than three objects, another has at most two
	and if they all have three, some of those will be objects from $O_1$, and at least one agent from $N_2$ will have a second choice object. For all considered notions of envy-freeness $b_2$ will be envious of $b_1$.

	If $0<z_1<30l$ objects from $O_2$ are assigned to $N \backslash N_2$, we have three cases:
		\begin{enumerate}%[i)]
	\item $z_2 < z_1$ objects from $O\backslash O_2$ are assigned to $N_2$. In this case an agent from $N_2$ has two or less objects, which implies he will be envious of others in $N_2$.
	\item $z_2=z_1$ objects from $O\backslash O_2$ are assigned to $N_2$. To not be envious of each other agents from $N_2$ will each receive two first choice objects and one second choice object. At least one agent from $N_1$ will receive at least three objects from $O_2$, making agents in $N_2$ envious of him.
	\item $z_2 > z_1$ objects from $O\backslash O_2$ are assigned to $N_2$. In this case all agents from $N_2$ are given three or four objects. If an agent has two, he will be envious as before. There are not enough objects left for each agent in $N\backslash N_2$ to receive three or more objects. Therefore one of these agents, labelled $b_1$ only has two items. Even if those two items are most preferred items, he will be envious of at least one agent in $N_1$ because to any agent in $N\backslash N_2$ the ratio of most preferred items assigned to $N_1$ is higher than $\frac{1}{3}$.
	This implies at least one agent in $N_1$ will have two most preferred items according to $b_1$, and since all in $N_2$ have at least three objects, $b_1$ is envious of that agent.

	\end{enumerate}

	\item Each agent in $N_2$ is allocated exactly three objects. Since as shown above all and only $O_2$ objects go to $N_2$, not all agents in $N_2$ can have four objects. Therefore if one has four, those without four objects will envy him since they value all objects from $O_2$ the same.
	\item Each agent in $N_2 \cup N_1$ has three objects. This is because if an agent in $N_2 \cup N_1$ has four or more objects, another has two or less. The argument in the first observation still applies, and therefore this agent will be envious of at least one agent from $N_2$.
	\item Agents in $N_1$ will not be assigned any objects from $O_3$ since they all consider them to be second choices. To not envy agents in $N_2$ agents in $N_1$ have three of their preferred choices. 
	\item Each agent in $N_2 \cup N_3$ are given two of $N_3$'s common preferred choices, and one of their second choices. This is the only way to avoid envy from an element of $N_3$ to at least one element of $N_2 \cup N_3$: if an element of $N_2$ has two or three of $N_3$ second choices, then another has three preferred choices, and therefore at least one of $N_3$ will be envious of him. If an agent in $N_3$ has three preferred choices, then at least one has only one preferred choice, and will be envious of the agent with three preferred choices.
	\item An agent from $N_1$ does not have objects from $O^S_1$ and also $O^B_1$, since otherwise at least one agent from $N_3$ will be envious of him. This is because of the conditions satisfied by $f$. There are at least three agents in $N_3$ who see the one or two selected elements from the three-set associated to the $N_1$ agent as first choice objects. For any set of elements of size two or less in $O^B_1$, at least one of these three agents considers said set to be composed of first choice object. Therefore, there is at least one agent in $N_3$ who will be envious of an agent in $N_1$ who selects both from $O^S_1$ and $O^B_1$, since he sees this agent as having three preferred choices whilst he only has two (according to the previous observation).
	\end{enumerate}
	%\item 

	If there exists an exact cover of $S$ by a subset of $C$, then there is an SD envy-free discrete assignment since agents corresponding to elements of $C$ used for the cover will be given their preferred items from $O^S_1$ and the others will be given items from $O^B_1$. 
	
	If there does not exist an exact cover of $S$ by a subset of $C$ then there does not exist a weak SD envy-free discrete assignment (equivalently an assignment in which no agent strictly prefers another agent's allocation with respect to responsive preferences). This is because even if all the previous conditions are respected, at least one agent from $N_1$ gets a second choice object and is envious of agents from $N_2$. This follows from the fact that no matter which agents of $N_1$ we assign buffer objects to, the remaining agents are not able to cover $O^S_1$ with their sets of most preferred objects.
	This completes the proof.
	\end{proof}

In view of Theorem~\ref{prop:sd,rs,util}, the proof above also shows that when agents have responsive preferences over sets of objects, then checking whether there exists an envy-free (weak or strong) allocation is NP-complete. Another corollary is that is if agents have cardinal utilities $1$ or $0$ for objects, then checking whether there exists an envy-free assignment is NP-complete. 

\section{Extensions}

In this section, we consider two extensions of our results: \emph{(i)} additionally requiring Pareto optimality and \emph{(ii)} handling varying entitlements. We show that our algorithmic results can be extended in both cases.

\subsection{Additionally requiring Pareto optimality}

We focussed on fairness and only required a weak form of efficiency that each object is allocated. In this subsection, we seek discrete assignments that are both fair and Pareto optimal.

Let $(N,O,\pref)$ be an assignment problem.
A discrete assignment $q$ \emph{Pareto dominates} a discrete assignment $p$
if $q(i) \succsim_i^{SD} p(i)$ for all $i\in N$ and $q(i) \succ_i^{SD} p(i)$ for some $i\in N$.
We also say that $q$ is a \emph{Pareto improvement} over $p$.
A discrete assignment $p$ is \emph{Pareto optimal} if there exists no discrete assignment $q$ that Pareto dominates it.
%
%is \emph{SD-efficient} is there exists no assignment $q$ such that $q(i) \succsim_i^{SD} p(i)$ for all $i\in N$ and $q(i) \succ_i^{SD} p(i)$ for some $i\in N$. We will say that a deterministic assignment $p$ is \emph{Pareto optimal} if it is SD-efficient. 

\begin{example}
	Consider the following assignment problem:
	\begin{align*}
		1:&\quad \{a,b,c\}, \{d,e,f\}\\
		2:&\quad \{d,e,f\}, \{a,b,c\}
	\end{align*}
	The discrete assignment that gives $\{b,c,f\}$ to agent $1$ and $\{d,e,a\}$ to agent $2$ is SD proportional. However it is not Pareto optimal since it is SD-dominated by the assignment in which agent $1$ gets $\{a,b,c\}$ and agent $2$ gets $\{d,e,f\}$.
\end{example}

% A Pareto improvement over a weak SD  proportional or SD  proportional assignment does not lose its proportionality and can be implemented easily via trading cycles~\citep{ShSc74a}. 

Let $(N,O,\pref)$ be an assignment problem and $p$ be a discrete assignment.
We will create an auxiliary assignment problem and assignment where each agent is allocated exactly one object.
The \emph{clones} of an agent $i\in N$ are the agents in $N_i'=\{i_o \midd o \in O \text{ and } p(i)(o)=1\}$.
The \emph{cloned assignment problem} corresponding to assignment problem $(N,O,\pref)$ and assignment $p$ is $(N',O,\pref')$ such that 
$N'= \bigcup_{i\in N} N_i'$.
and for each $i_o\in N'$, $\pref'_{i_o}=\pref_i$. The \emph{cloned assignment} of $p$ is the discrete assignment $p'$ in which $p'(i_o)(o)=1$ if $p(i)(o)=1$ and $p'(i_o)(o)=0$ otherwise.

A cloned assignment can easily be transformed back into the original assignment where each agent $i\in N$ is allocated all the objects assigned by $p'$ to the clones of $i$.

% 
% 
% is one in which for each object $o$ that is owned by agent $i$, we make a copy $i_o$ of agent $i$ so that each agent copy owns exactly one object. Each copy $i_o$ has exactly the same preferences as agent $i$.

% The assignment in which copies of agents get a single object is called the \emph{cloned transformation} of the original assignment.

\begin{lemma}\label{lemma:cloned}
	A discrete assignment is Pareto optimal if and only if its cloned assignment is Pareto optimal for the cloned assignment problem. 
\end{lemma}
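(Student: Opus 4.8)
The plan is to prove both directions of the equivalence by relating Pareto improvements in the original problem to Pareto improvements in the cloned problem, using the fact that the SD relation over sets of objects is captured by the responsive set extension (Theorem~\ref{prop:sd,rs,util}). The key observation is that a clone $i_o$ of agent $i$ holds a single object, so $q'(i_o) \succsim_{i_o}^{SD} p'(i_o)$ simply means the clone's object is at least as preferred (under $\pref_i$) as $o$, and strict preference means it is strictly more preferred. The task is to show that assembling the clones' individual improvements gives an improvement for $i$, and conversely that any set-level improvement for $i$ can be decomposed across the clones.

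First I would prove the contrapositive of the harder direction: if the cloned assignment $p'$ is \emph{not} Pareto optimal, then neither is $p$. Suppose $q'$ Pareto dominates $p'$. By the definition of clones, each clone $i_o$ owns exactly one object in $p'$, but $q'$ need not respect the clone structure. The idea is to aggregate: for each original agent $i$, let $q(i) = \bigcup_{o : p(i)(o)=1} q'(i_o)$, the union of objects held by $i$'s clones under $q'$. Since each clone holds exactly one object under $p'$ and $q'(i_o) \succsim_{i_o}^{SD} p'(i_o)$, by Theorem~\ref{prop:sd,rs,util}\ref{item:rs} the single object of clone $i_o$ under $q'$ is mapped by an injection to something at least as good, which for a singleton just means it is weakly preferred to $o$. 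Collecting these injections across all clones of $i$ yields an injection from $p(i)$ into $q(i)$ witnessing $q(i) \succsim_i^{RS} p(i)$, hence $q(i) \succsim_i^{SD} p(i)$; and if some clone strictly improves, the corresponding agent $i$ gets a strict SD improvement. Thus $q$ Pareto dominates $p$ in the original problem.

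For the reverse direction I would argue similarly: if $q$ Pareto dominates $p$ in the original problem, then since each $q(i) \succsim_i^{SD} p(i)$ is equivalent to $q(i) \succsim_i^{RS} p(i)$, there is an injection $f$ from $p(i)$ to $q(i)$ with $f(o) \succsim_i o$ for each $o \in p(i)$. This injection assigns to each clone $i_o$ (who held $o$) the object $f(o)$, which is weakly preferred, defining a cloned assignment $q'$ with $q'(i_o) \succsim_{i_o}^{SD} p'(i_o)$ for every clone; and the agent $i$ with strict improvement yields at least one clone with a strictly more preferred object. Hence $q'$ Pareto dominates $p'$. The main obstacle is the bookkeeping in the forward aggregation step, where $q'$ may shuffle objects among clones of the same agent in a way that does not obviously respect per-clone injections — the resolution is that Theorem~\ref{prop:sd,rs,util} lets me pass freely between SD dominance and the injective responsive-set characterization, so I only ever need the existence of an injection at the level of each agent's whole bundle, and the singleton structure of each clone's holdings under $p'$ makes the per-clone injections automatic once the bundle-level injection is fixed.
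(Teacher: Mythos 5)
Your proof is correct and follows essentially the same route as the paper's: both directions convert a Pareto improvement in one problem into a Pareto improvement in the other, with your injection-based bookkeeping via Theorem~\ref{prop:sd,rs,util} making explicit what the paper's terser argument leaves implicit. The only detail to add is that in the reverse direction the objects of $q(i)\setminus f(p(i))$ must still be handed to some clone of $i$ so that the cloned assignment remains complete, which only strengthens that clone's improvement.
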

\begin{proof}
	Let $(N,O,\pref)$ be an assignment problem and $p$ be a discrete assignment. We will prove that $p$ is not Pareto optimal if and only if its cloned assignment $p'$ is not Pareto optimal for the cloned assignment problem $(N',O,\pref')$.
	
	For the backwards direction, assume that $p'$ is not Pareto optimal for $(N',O, \allowbreak \pref')$. Then, there exists another discrete assignment $p'^*$ in which each of the cloned agents get at least as preferred an object and at least one agent gets a strictly more preferred object. But if $p'^*$ is transformed into the discrete assignment $p^*$ for the original assignment problem, then $p^*$ Pareto dominates $p$.
	
	For the forward direction, assume that $p$ is not Pareto optimal. Then, there exists another discrete assignment $p^*$ that Pareto dominates it. But this implies that the cloned assignment of $p^*$ also Pareto dominates $p'$ in $(N',O,\pref')$ (modulo name changes among clones).
	\end{proof}
	
	% \begin{lemma}
	% 
	% \end{lemma}

% \begin{lemma}\label{lemma:ttc}
% 	It can be checked in polynomial time whether a given assignment is Pareto optimal (with respect to responsive preferences).
% \end{lemma}
% \begin{proof}
% 	By Lemma~\ref{lemma:cloned}, the problem is equivalent to checking whether the cloned transformation of the assignment is Pareto optimal in the cloned setting.
% 	Firstly, for each object $o$ that is owned by agent $i$, we make a copy $i_o$ of agent $i$ so that each agent copy owns exactly one object. Each copy $i_o$ has exactly the same preferences as agent $i$. Based on the ownership information of each the $m$ agent copies, and the preferences of the agent copies, we make a top trading cycle graph in which each copy $i_o$ points to each of the objects more preferred than $o$. Also each $o$ points to its owner $i_o$. Then it can be proved that the original assignment is Pareto optimal (with respect to responsive preferences) if and only if there does not exist a cycle in the top trading cycle graph. 
% \end{proof}

\begin{lemma}\label{lemma:po-improve}
	If a discrete assignment is not Pareto optimal, a Pareto improvement that is Pareto optimal can be computed in polynomial time.
\end{lemma}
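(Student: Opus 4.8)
The plan is to connect Pareto optimality of discrete assignments to Pareto optimality of their cloned (one-object-per-agent) counterparts via Lemma~\ref{lemma:cloned}, and then to invoke the known theory of Pareto optimal one-sided matchings. Specifically, I would first apply Lemma~\ref{lemma:cloned}: if a discrete assignment $p$ is not Pareto optimal, then its cloned assignment $p'$ is not Pareto optimal for the cloned assignment problem $(N',O,\pref')$, in which every (cloned) agent receives exactly one object. So it suffices to show that for a one-object-per-agent assignment that is not Pareto optimal, we can compute in polynomial time a Pareto optimal assignment that Pareto dominates it; transforming back through the clone correspondence (each agent $i$ collects the objects given to its clones) then yields the desired Pareto improvement for the original problem, again by Lemma~\ref{lemma:cloned}.

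For the one-object-per-agent setting, the standard tool is the \emph{Top Trading Cycles (TTC)} / trading-cycles approach on the assignment (or, equivalently, the characterization of Pareto optimality via the absence of trading cycles in the associated preference digraph). The key step is to build a directed graph on the cloned agents where an arc from $i_o$ to $j_{o'}$ indicates that $i_o$ would weakly (and somewhere strictly) prefer to trade for the object currently held by $j_{o'}$; a Pareto improvement corresponds precisely to a set of disjoint cycles along which objects can be reassigned so that everyone is at least as well off and someone is strictly better off. I would start from the given non-Pareto-optimal $p'$, detect such an improving structure (which exists because $p'$ is not Pareto optimal), and perform the reassignment to obtain an assignment $q'$ that Pareto dominates $p'$. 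Because each improving step strictly increases some agent's position in the preference order while weakly increasing all others, and because there are only polynomially many objects and agents, iterating the detection-and-trade procedure terminates in a polynomial number of rounds; the endpoint $q'$ is Pareto optimal since no further improving cycle exists.

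Some care is needed to handle indifferences, since with ties the improving-cycle structure is more delicate than in the strict case: a cycle must leave every participant weakly better off while making at least one strictly better off, and one must ensure the search for such cycles (rather than merely weakly-improving cycles, which could loop without progress) can still be carried out efficiently. The cleanest way is to argue that a one-object assignment is Pareto optimal if and only if its preference digraph admits no cycle that is improving in this weak-with-one-strict sense, and that any such cycle can be found in polynomial time, e.g.\ by a reachability computation that seeks a strictly-improving arc reachable along weakly-improving arcs and closable into a cycle. I would phrase the reassignment so that the amount of strict improvement is monotone across iterations, guaranteeing termination. I expect the main obstacle to be precisely this tie-handling: ensuring that the cycle-cancellation procedure makes genuine progress (a strict improvement for at least one agent each round) rather than cycling through indifferent swaps, and confirming that each round runs in polynomial time and that only polynomially many rounds are required before no improving cycle remains.
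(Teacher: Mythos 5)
Your proposal is correct and follows essentially the same route as the paper: reduce to the cloned one-object-per-agent problem via Lemma~\ref{lemma:cloned}, compute a Pareto optimal Pareto improvement there, and transform back. The only difference is that the paper black-boxes the middle step by citing \citet{AzKe12a} for housing markets with indifferences, whereas you sketch the improving-cycle algorithm (including the tie-handling) by hand; that sketch is workable since the symmetric difference of two perfect matchings decomposes into trading cycles and each round strictly improves some agent's rank, bounding the number of iterations polynomially.
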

\begin{proof}
	We first take the assignment problem and the given discrete assignment and construct the corresponding cloned assignment problem and cloned assignment. For such a cloned discrete assignment, a Pareto optimal Pareto improvement can be computed in polynomial time~\citep[see \eg][]{AzKe12a}. % even if there are indifferences.
The updated cloned assignment is then transformed back into an assignment for the original assignment problem.		
		\end{proof}

		\begin{remark}\label{remark:ef-po}
			A Pareto improvement over a weak SD  proportional or SD proportional discrete assignment is weak SD proportional or SD proportional, respectively. Therefore, if a (weak) SD proportional discrete assignment exists then there also exists a Pareto optimal and (weak) SD proportional discrete assignment.
		\end{remark}
	
Based on Lemma~\ref{lemma:po-improve} and Remark~\ref{remark:ef-po} we obtain the following theorems.

\begin{theorem}\label{th:SD  prop-inP-PO}
If a Pareto optimal and SD proportional discrete assignment exists, it can be computed in polynomial time.	
\end{theorem}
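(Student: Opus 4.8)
The plan is to assemble three earlier results: the \emph{constructive} existence test of \thmref{th:SD prop-inP}, the Pareto-improvement procedure of \lemref{lemma:po-improve}, and the monotonicity statement \remref{remark:ef-po}. The guiding observation is that, by \remref{remark:ef-po}, a Pareto optimal and SD proportional discrete assignment exists if and only if \emph{some} (not necessarily Pareto optimal) SD proportional discrete assignment exists. The forward direction is immediate, and the reverse direction is precisely the content of the remark. Consequently, the existence of the desired assignment can be decided in polynomial time via \thmref{th:SD prop-inP}, and if that test fails we simply report that no Pareto optimal SD proportional assignment exists.

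First I would actually \emph{compute} an SD proportional assignment, not merely decide existence. The proof of \thmref{th:SD prop-inP} reduces the question to finding a feasible $b$-matching in a bipartite graph, and whenever such a matching is found it yields an explicit SD proportional discrete assignment $p$ (allocate an object $o$ to agent $i$ exactly when the matched edge at $o$ is incident to a class vertex $v_i^{\ell}$ of $i$). Since a feasible $b$-matching is obtainable in strongly polynomial time, $p$ is produced in polynomial time.

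Next I would hand $p$ to the Pareto-improvement machinery. If $p$ is already Pareto optimal, return it; otherwise \lemref{lemma:po-improve} computes, in polynomial time (via the cloned assignment problem of \lemref{lemma:cloned}), a Pareto optimal assignment $p^*$ that Pareto dominates $p$, so $p^*(i) \succsim_i^{SD} p(i)$ for every $i\in N$. Since $p$ is SD proportional, $p(i)\succsim_i^{SD}(1/n,\ldots,1/n)$ for each $i$, and transitivity of $\succsim_i^{SD}$ gives $p^*(i)\succsim_i^{SD}(1/n,\ldots,1/n)$; hence $p^*$ is SD proportional as well, which is exactly \remref{remark:ef-po}. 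Thus $p^*$ is both Pareto optimal and SD proportional, and the entire procedure runs in polynomial time.

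I do not expect a genuine obstacle, since the substance has been pushed into the earlier results; the argument is essentially an assembly of them. The only points requiring care are that \thmref{th:SD prop-inP} must be invoked constructively rather than merely as a decision procedure, and that the improvement step preserves SD proportionality — both of which are already guaranteed by \lemref{lemma:po-improve} and \remref{remark:ef-po}, respectively.
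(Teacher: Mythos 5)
Your proof is correct and follows exactly the route the paper intends: the paper derives this theorem in one line from Lemma~\ref{lemma:po-improve} and Remark~\ref{remark:ef-po}, and your write-up simply fills in the same assembly — constructively extract an SD proportional assignment from the feasible $b$-matching of Theorem~\ref{th:SD  prop-inP}, then apply the polynomial-time Pareto improvement, which preserves SD proportionality by transitivity of $\succsim_i^{SD}$. No gaps; your version is just a more explicit rendering of the paper's argument.
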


\begin{theorem}\label{th:weak-SD  prop-inP-PO}
For a constant number of agents, if a Pareto optimal and weak SD proportional discrete assignment exists, it can be computed in polynomial time.
\end{theorem}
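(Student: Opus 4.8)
The plan is to assemble three results that are already available: the constructive existence algorithm for weak SD proportional assignments (\thmref{th:weak-SD  prop-inP}), the polynomial-time Pareto-improvement routine (\lemref{lemma:po-improve}), and the preservation property of \remref{remark:ef-po}. The key structural observation is that, by \remref{remark:ef-po}, a Pareto optimal weak SD proportional discrete assignment exists if and only if some (not necessarily Pareto optimal) weak SD proportional discrete assignment exists. This reduces the task to producing any weak SD proportional assignment and then ``cleaning it up'' into a Pareto optimal one while retaining weak SD proportionality.

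First I would run the algorithm underlying \thmref{th:weak-SD  prop-inP}, which for a constant number of agents decides in polynomial time whether a weak SD proportional discrete assignment exists, even under indifferences. I would emphasize that this algorithm is \emph{constructive}: when a feasible $b$-matching is found for one of the $\prod_{i=1}^{n}(k_i+1)$ candidate combinations of satisfaction conditions, that matching translates directly into an explicit allocation. Hence, whenever a weak SD proportional assignment exists, we obtain an actual such assignment $p$ in polynomial time.

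Next I would feed $p$ into \lemref{lemma:po-improve}. If $p$ is already Pareto optimal we are done; otherwise the lemma produces, in polynomial time (via the cloned assignment problem), a Pareto optimal discrete assignment $q$ that Pareto dominates $p$. By the definition of Pareto domination we have $q(i)\succsim_i^{SD}p(i)$ for every agent $i\in N$, so \remref{remark:ef-po} guarantees that $q$ inherits weak SD proportionality from $p$. Thus $q$ is simultaneously Pareto optimal and weak SD proportional, and since each of the two stages runs in polynomial time for constant $n$, the overall procedure is polynomial.

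I do not expect a substantial obstacle here; the content lies entirely in composing the cited results correctly. The one point requiring care is verifying that weak SD proportionality of $p$ is genuinely inherited by the improvement $q$, which is exactly the transitivity argument behind \remref{remark:ef-po}: from $q(i)\succsim_i^{SD}p(i)$ and $\neg[(1/n,\ldots,1/n)\succ_i^{SD}p(i)]$ one deduces $\neg[(1/n,\ldots,1/n)\succ_i^{SD}q(i)]$, using that $\succsim_i^{SD}$ is transitive. A secondary point to state explicitly is that \thmref{th:weak-SD  prop-inP} returns an assignment rather than a mere yes/no decision, which it does precisely because its feasible $b$-matchings encode allocations.
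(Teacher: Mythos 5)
Your proposal is correct and follows exactly the route the paper intends: the paper derives this theorem directly from \lemref{lemma:po-improve} and \remref{remark:ef-po}, applied to an assignment produced by the constructive algorithm of \thmref{th:weak-SD  prop-inP}. Your added transitivity argument for why the Pareto improvement inherits weak SD proportionality is a correct filling-in of the detail the paper leaves implicit in \remref{remark:ef-po}.
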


% In contrast to Remark~\ref{remark:ef-po}, it may be the case that there exists an SD envy-free assignment but there exists no Pareto optimal and SD envy-free assignment~\citep[Example 13, ][]{BrKi05a}. 
% 
% 
% For strict preferences, there is always exists a Pareto optimal and weak SD envy-free assignment given that each player receives at least two items~\citep[Proposition 9, ][]{BrKi05a}.

%WRONG
% Next we show that if preferences are not strict, it may be the case there exists an  SD envy-free assignment but no Pareto optimal and weak SD envy-free assignment. This highlights a huge tradeoff between efficiency and fairness.
% \begin{example}
% 	\begin{align*}
% 		1:\{a,c\}, \{b,d,e,f\}\\
% 		2:\{b,d\}, \{a,c,e,f\}\\
% 		3:\{e,a,c\},\{b,d,f\}
% 	\end{align*}
% 	The following is an SD envy-free assignment:
% 	\begin{align*}
% 		1:\{a,d\}\\
% 		2:\{b,c\}\\
% 		3:\{e,f\}
% 	\end{align*}
% 	Let us try to construct an Pareto optimal and weak SD envy-free assignment.
% 	
% \end{example}

% This may not be the case for different notions of envy-freeness when $n>2$ since envy-freeness involves comparisons with other agents' allocations.
% 
% 
% \begin{theorem}\label{th:envy-free-is-hard}
% 	The following problems are NP-complete. 
% 	\begin{enumerate}%[i)]
% 		\item Check whether there exists a weak SD  envy-free and Pareto optimal assignment.
% 		\item Check whether there exists a possible envy-free and Pareto optimal assignment.
% 		\item Check whether there exists an SD envy-free  and Pareto optimal assignment.
% 	\end{enumerate}
% \end{theorem}

	\subsection{Unequal entitlements}

Throughout this paper, we assumed that each agent has the same entitlement to the objects. However, it could be the case that an agent $i\in N$ has entitlement $e_i$. There can be various reasons for unequal entitlements. An agent may be given more entitlement for the resources to reward his contributions and effort in obtaining the objects for the set of objects. Entitlements can also be used to model justified demand. For example, if an agent represents a different number of sub-agents, the agent who represents more sub-agents may have more entitlement. Unequal entitlements have been considered in the fair division literature (see, \eg \citep[][page 44]{BrTa96a}.

In the case of unequal entitlements, proportionality and envy-freeness can be redefined:
\[u_i(p(i))\geq \frac{e_i}{\sum_{j\in N}e_j}\sum_{o\in O}{u_i(o)} \text{ for each } i\in N \] for proportionality and 
\[u_i(p(i))\geq \frac{e_i}{e_j}{u_i(p(j))} \text{ for each } i,j\in N\] for envy-freeness. 
For envy-freeness, the idea is that if agent $i$ has half the entitlement of agent $j$, then $i$ will only be envious of agent $j$ if agent's $j$ allocation gives agent $i$ more than twice the utility agent $i$ has for his own allocation.
Just like possible and necessary fairness is defined for equal entitlements, the definitions can be extended for the case of unequal entitlements. Hence possible and necessary proportionality and envy-freeness are natural ordinal notions that can also take into account entitlements.
Our two algorithms for possible and necessary proportionality can also be modified to cater for entitlements by replacing $1/n$ with $\frac{e_i}{\sum_{j\in N}e_j}$ whenever a matching lower bound is specified for a vertex.

	\begin{theorem}\label{th:weak-SD  prop-inP-entitle}
	For a constant number of agents, it can be checked in polynomial time whether a possible proportional discrete assignment exists even if agents have different entitlements.
	\end{theorem}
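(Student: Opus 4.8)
The plan is to mirror the strategy already developed for weak SD proportionality in Theorem~\ref{th:weak-SD  prop-inP}, adapting it to the unequal-entitlements setting via the characterization of possible proportionality given in Theorem~\ref{thm:equiv}. Recall that possible proportionality is equivalent to weak SD proportionality, so the natural first step is to reformulate the entitlement-weighted possible proportionality condition as a statement about how many objects from each prefix of equivalence classes an agent must receive. Concretely, for agent $i$ with entitlement $e_i$, the proportional share of the top $\ell$ equivalence classes becomes $\frac{e_i}{\sum_{j\in N}e_j}\card{\bigcup_{j=1}^{\ell}E_i^j}$ rather than $\card{\bigcup_{j=1}^{\ell}E_i^j}/n$. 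First I would state the analogue of Lemma~\ref{lemma:weaksdprop}: a discrete assignment satisfies (entitlement-weighted) possible proportionality if and only if for each agent $i$, either there is some prefix of equivalence classes in which $i$ strictly exceeds his weighted share, or he exactly meets the weighted share at every cutoff.

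Next I would reproduce the enumeration-plus-$b$-matching construction. Since $n$ is constant, there are $\prod_{i=1}^n(k_i+1)$ ways to choose, for each agent, which of the $k_i+1$ satisfaction conditions is used; this product is polynomial in the input size. For each such choice I would build the bipartite graph $G=(V,E)$ exactly as in the proof of Theorem~\ref{th:weak-SD  prop-inP}, with one vertex $v_i^{\ell}$ per agent using a strict-inequality condition $\ell$, and vertices $B_i^j$ per equivalence class when the exact-share (``$(k_i+1)$-st'') condition is used. The only modification is in the lower and upper bounds. Following the prescription in the paragraph preceding the statement, I would replace $1/n$ by $\frac{e_i}{\sum_{j\in N}e_j}$ wherever a matching lower bound appears: the lower bound for $v_i^{\ell}$ becomes
\[a(v_i^{\ell})=\floor{\frac{e_i}{\sum_{j\in N}e_j}\card{\textstyle\bigcup_{j=1}^{\ell}E_i^j}}+1,\]
and the exact constraint for $B_i^j$ becomes $a(B_i^j)=b(B_i^j)=\frac{e_i}{\sum_{j\in N}e_j}\card{E_i^j}$ when this quantity is integral (when it is not, the $(k_i+1)$-st condition is simply unavailable for that agent). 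The object vertices keep $a(o)=b(o)=1$ and every edge keeps capacity $1$.

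The correctness argument then runs identically to Theorem~\ref{th:weak-SD  prop-inP}: a feasible $b$-matching for the graph of some condition-tuple corresponds to a discrete assignment in which every agent meets his chosen condition, and conversely any possible proportional discrete assignment induces, for an appropriate choice of conditions, a feasible $b$-matching. Since feasible $b$-matchings with lower and upper bounds in bipartite graphs can be computed in strongly polynomial time (Chapter~35 of~\citep{Schr03a}), and we test only polynomially many condition-tuples, the overall procedure runs in polynomial time.

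The main obstacle I anticipate is handling the floor/ceiling and integrality subtleties introduced by the weighted shares. With equal entitlements the quantity $\card{E_i^j}/n$ had a clean meaning and the rounding in Lemma~\ref{lemma:weaksdprop} was transparent; with arbitrary rationals $\frac{e_i}{\sum_j e_j}$ one must be careful that the strict-inequality and exact-equality cases still exhaust the definition of possible proportionality and that the exact-share condition is correctly declared infeasible precisely when $\frac{e_i}{\sum_{j\in N}e_j}\card{E_i^j}\notin\natz$. Once the weighted analogue of Lemma~\ref{lemma:weaksdprop} is established with these rounding conventions, the rest of the proof is a routine transcription of the unweighted argument.
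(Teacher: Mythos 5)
Your proposal is correct and matches the paper's approach exactly: the paper gives no separate proof of this theorem, relying only on the remark that the algorithm of Theorem~\ref{th:weak-SD  prop-inP} is modified by replacing $1/n$ with $\frac{e_i}{\sum_{j\in N}e_j}$ in the matching lower bounds, which is precisely the construction you carry out. Your additional care about the integrality of the exact-share condition is a sound elaboration of a detail the paper leaves implicit.
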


	\begin{theorem}\label{th:SD  prop-inP-entitle}
	It can be checked in polynomial time whether a necessary proportional discrete assignment exists.
	\end{theorem}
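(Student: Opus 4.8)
The plan is to adapt the $b$-matching reduction from the proof of Theorem~\ref{th:SD  prop-inP}, following the recipe stated just before this theorem: replace the equal share $1/n$ by the entitlement share $e_i/\sum_{j\in N}e_j$ in every matching lower bound. First I would fix the combinatorial content of necessary proportionality under entitlements. By the entitlement version of the SD--utility equivalence (Theorem~\ref{prop:sd,rs,util}, where the comparison assignment is now the fractional assignment giving agent $i$ a share $e_i/\sum_{j\in N}e_j$ of every object), the requirement $u_i(p(i))\geq \frac{e_i}{\sum_{j\in N}e_j}u_i(O)$ for all $u_i\in\mathcal{U}(\pref_i)$ is equivalent to $p(i)$ SD-dominating that fractional assignment. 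Unfolding the SD relation over the upper sets $\bigcup_{j=1}^{\ell}E_i^j$ and using that $p$ is discrete, this is in turn equivalent to
\[\card{p(i)\cap \bigcup_{j=1}^{\ell}E_i^j}\geq \ceil{\frac{e_i}{\sum_{j\in N}e_j}\sum_{j=1}^{\ell}\card{E_i^j}} \text{ for each } \ell\in\{1,\ldots,k_i\}.\]

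Next I would keep the bipartite graph $G=(V,E)$ of Theorem~\ref{th:SD  prop-inP} unchanged --- a vertex $v_i^{\ell}$ per equivalence class and a vertex per object, an edge $\{v_i^{\ell},o\}$ exactly when $o\in\bigcup_{j=1}^{\ell}E_i^j$, object bounds $a(o)=b(o)=1$, and unit edge capacities --- but set
\[a(v_i^{\ell})=\ceil{\frac{e_i}{\sum_{j\in N}e_j}\sum_{j=1}^{\ell}\card{E_i^j}}-\sum_{j=1}^{\ell-1}a(v_i^j), \qquad b(v_i^{\ell})=\infty.\]
A telescoping computation gives $a(v_i^{\ell})=\ceil{c_{\ell}}-\ceil{c_{\ell-1}}$ with $c_{\ell}=\frac{e_i}{\sum_{j\in N}e_j}\sum_{j=1}^{\ell}\card{E_i^j}$ nondecreasing and $c_0=0$, so each bound is a nonnegative integer and the accumulated lower bound on $v_i^1,\ldots,v_i^{\ell}$ equals $\ceil{c_{\ell}}$. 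Since these vertices only reach objects in $\bigcup_{j=1}^{\ell}E_i^j$, a feasible $b$-matching forces agent $i$ to receive at least $\ceil{c_{\ell}}$ objects from that upper set, which is exactly the characterization above; conversely any necessary proportional assignment induces a feasible $b$-matching by routing each allocated object through its equivalence-class vertex and distributing the unconstrained remaining objects arbitrarily. As a feasible bipartite $b$-matching with lower and upper bounds is computable in strongly polynomial time, the whole test is polynomial.

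The construction is essentially routine once the characterization is in place, so I do not anticipate a deep obstacle; the step needing the most care is the passage from the fractional entitlement-share comparison to the integer ceiling conditions. In the equal-entitlement case Theorem~\ref{th:m=nc} let the proof discard the non-divisible instances at the outset, whereas here no such clean divisibility criterion is available. Instead I would let $b$-matching feasibility absorb it: if $\sum_{i\in N}\ceil{\frac{e_i}{\sum_{j\in N}e_j}m}>m$, then the lower bounds on the top vertices $v_i^{k_i}$ already exceed the $m$ available objects and no feasible $b$-matching exists, correctly certifying that no necessary proportional assignment exists. Confirming that this automatic infeasibility detection coincides exactly with genuine nonexistence is the one point I would verify carefully.
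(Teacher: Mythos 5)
Your proposal is correct and follows essentially the same route as the paper, which proves this theorem only by the one-line remark that the $b$-matching algorithm of Theorem~\ref{th:SD  prop-inP} is modified by replacing $1/n$ with $\frac{e_i}{\sum_{j\in N}e_j}$ in the matching lower bounds. Your writeup simply supplies the details the paper leaves implicit (the ceiling characterization, the telescoping of the lower bounds, and the observation that $b$-matching infeasibility subsumes the divisibility pre-check of Theorem~\ref{th:m=nc}), all of which check out.
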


	% \section{Identical Preferences}
	% 
	% In this section we consider the case when agents have identical preferences.

\section{Fairness Concepts that Guarantee the Existence of Fair Outcomes}
\label{sec:guaranteeexist}

We observed that there are instances where even the weakest fairness notions such as weak SD proportionality cannot be guaranteed. Hence the fairness notions considered are not proper solution concepts. In this section, we propose fairness concepts that always suggest a non-empty set of assignments with meaningful fairness properties. 

	\subsection{Maximal and Maximum Fairness}
	\label{sec:maxfair}

We first seek a way out by considering corresponding solution concepts that maximize the number of agents being satisfied with their allocation. The idea has been used in matching theory where for example if a stable matching does not exist, then one may aim to minimize the number of unstable pairs (see, \eg \citep[][]{BMM12a}). 
For each fairness notion $X\in \{\text{SD envy-freeness}, \text{weak SD envy-freeness}, \allowbreak\text{possible }\allowbreak \text{envy-freeness}, \text{SD proportionality}, \text{weak SD proportionality}\}$, we define the following concepts:
	\begin{enumerate}
	\item \emph{Maximum $X$}: a discrete assignment $p$ satisfies Maximum $X$ if it maximizes the total number of agents for which the fairness condition according to $X$ is satisfied.
	\item \emph{Maximal $X$}: a discrete assignment $p$ satisfies Maximal $X$ if the fairness condition according to $X$ cannot be be satisfied for any more agents while maintaining the fairness condition for agents who are satisfied by $p$.
	\end{enumerate}

	The following lemmas are useful in relating the complexity of fairness concept $X$ with \emph{Maximum $X$} and \emph{Maximal $X$}. The proofs are straightforward.

	\begin{lemma}
	If there exists a polynomial-time algorithm to compute a discrete assignment that is \emph{Maximum $X$} then there exists a polynomial-time algorithm to compute a discrete assignment satisfying $X$ if one exists.
	\end{lemma}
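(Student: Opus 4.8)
The statement to prove is essentially an observation: if \emph{Maximum $X$} can be computed in polynomial time, then checking whether a fully $X$-fair assignment exists (and computing it) can also be done in polynomial time.

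\medskip

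\textbf{Proof plan.} The plan is to observe that a discrete assignment satisfying $X$ for \emph{all} $n$ agents is exactly an assignment in which the count of satisfied agents equals $n$, the maximum possible value. First I would invoke the hypothesized polynomial-time algorithm for \emph{Maximum $X$} and run it once on the given instance $(N,O,\pref)$ to produce a discrete assignment $p$ together with the number of agents satisfied by $p$ under notion $X$. By definition of \emph{Maximum $X$}, this count is the largest achievable over all discrete assignments. Next I would check, in polynomial time (possible by Remark~\ref{remark:verify}, which guarantees that fairness verification for each notion considered is polynomial), how many of the $n$ agents have their $X$-constraint satisfied in $p$; equivalently this number is already output by the \emph{Maximum $X$} routine.

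\medskip

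The key logical step is the following equivalence: an assignment satisfying $X$ for every agent exists if and only if the maximum number of simultaneously satisfiable agents equals $n$. The forward direction is immediate, since any fully $X$-fair assignment satisfies all $n$ agents and hence witnesses that the maximum equals $n$. For the converse, if the \emph{Maximum $X$} value equals $n$, then the assignment $p$ returned by the algorithm already satisfies all $n$ agents, so $p$ itself is the desired $X$-fair assignment and can simply be returned. If instead the maximum is strictly less than $n$, then no discrete assignment can satisfy all agents, so we correctly report that no $X$-fair assignment exists.

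\medskip

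\textbf{Main obstacle.} There is essentially no technical obstacle here; the argument is a reduction that calls the \emph{Maximum $X$} oracle a single time and performs one comparison against $n$. The only point requiring minor care is ensuring that the notion of ``satisfied agent'' used in \emph{Maximum $X$} coincides exactly with the per-agent fairness constraint of $X$, so that reaching the count $n$ genuinely certifies global $X$-fairness; this is immediate from the definition of \emph{Maximum $X$} as maximizing the number of agents whose $X$-condition holds. Since both the oracle call and the comparison run in polynomial time, the overall procedure is polynomial, completing the proof.
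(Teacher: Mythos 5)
Your proposal is correct and follows exactly the same argument as the paper: run the \emph{Maximum $X$} algorithm once, verify (via Remark~\ref{remark:verify}) whether all $n$ agents are satisfied, and return the assignment if so or report non-existence otherwise. The paper states this in two sentences; your write-up is just a more detailed version of the same reduction.
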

	\begin{proof}
	Simply compute the \emph{Maximum $X$} and check whether the fairness condition is satisfied for each agent. If not, then the fairness condition cannot be satisfied for each agent and hence no $X$ discrete assignment exists.
	\end{proof}
	\begin{corollary}
	Computing a maximum SD envy-free discrete assignment is NP-hard. 
	\end{corollary}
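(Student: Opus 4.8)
The plan is to obtain the statement as an immediate consequence of the lemma stated just above, combined with the NP-completeness established in Theorem~\ref{th:envy-free-is-hard}. I would argue by contraposition, instantiating the preceding lemma with the fairness notion $X$ taken to be SD envy-freeness.

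First I would recall what the preceding lemma delivers in this instance: if a \emph{Maximum} SD envy-free discrete assignment could be computed in polynomial time, then one could decide in polynomial time whether an SD envy-free discrete assignment exists at all. The reduction is transparent — compute the maximum assignment, and then, using the polynomial-time verification guaranteed by Remark~\ref{remark:verify}, check whether the SD envy-freeness condition is met for \emph{every} agent. Since the maximum assignment satisfies the fairness condition for as many agents as possible, it satisfies it for all $n$ agents precisely when a fully SD envy-free assignment exists.

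Next I would invoke Theorem~\ref{th:envy-free-is-hard}, which asserts that deciding the existence of an SD envy-free discrete assignment is NP-complete. Chaining the two facts, a polynomial-time procedure for the optimization problem would yield a polynomial-time decision procedure for an NP-complete problem, forcing $\mathrm{P}=\mathrm{NP}$; hence computing a maximum SD envy-free discrete assignment is NP-hard. I expect no genuine obstacle here, as the entire content has already been isolated into the preceding lemma and Theorem~\ref{th:envy-free-is-hard}. The only point worth double-checking is that the decision problem reached through the lemma is \emph{exactly} the NP-complete one — namely existence of a fully SD envy-free assignment — rather than some relaxed variant; verifying this matching of target problems is the single subtlety.
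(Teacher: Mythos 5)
Your proposal is correct and matches the paper's own reasoning exactly: the corollary is obtained by instantiating the preceding lemma with $X$ being SD envy-freeness and combining it with the NP-completeness of the existence problem from Theorem~\ref{th:envy-free-is-hard}. Your added check that the decision problem reached is precisely the existence of a fully SD envy-free assignment is sound and requires nothing beyond the lemma's statement.
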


	\begin{lemma}
	If there exists a polynomial-time algorithm to check whether a discrete assignment satisfying $X$ exists, then the problem of computing a \emph{Maximal $X$} discrete assignment can also be solved in polynomial time.
	\end{lemma}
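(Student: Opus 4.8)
The plan is to show that a \emph{Maximal $X$} assignment can be built by a greedy augmentation procedure that repeatedly attempts to satisfy one more agent, invoking the assumed polynomial-time existence checker as a subroutine. The key observation is that Maximal $X$ does not ask us to maximize the number of satisfied agents globally (that is the harder Maximum $X$ notion); it only asks for a locally optimal assignment in which no currently-satisfied agent becomes unsatisfied while some additional agent gets satisfied. So I would reduce the construction of a Maximal $X$ assignment to a bounded sequence of feasibility questions, each of which is a mild variant of the one the hypothesis lets us answer efficiently.

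First I would set up the augmentation step as a constrained existence problem. Suppose we currently have a discrete assignment $p$ and a set $S\subseteq N$ of agents whose fairness condition under $X$ is already satisfied. The step is to decide whether there exists a discrete assignment $p^*$ such that every agent in $S$ is still satisfied under $p^*$ and at least one agent in $N\setminus S$ becomes satisfied. The crucial point is that the existence checker guaranteed by the hypothesis can be adapted to enforce that a \emph{designated subset} of agents be satisfied: for the proportionality notions this amounts to imposing the corresponding lower-bound constraints (as in the $b$-matching formulations of Theorem~\ref{th:SD  prop-inP} and Theorem~\ref{th:weak-SD  prop-inP}) only for the agents in $S\cup\{t\}$ for a chosen target $t\in N\setminus S$, leaving the remaining agents unconstrained. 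Iterating over the (at most $n$) choices of $t$, and using the existence checker on each resulting constrained instance, lets us decide in polynomial time whether any augmentation is possible.

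The algorithm then runs as follows: start from any complete discrete assignment, compute the set $S$ of satisfied agents (possible in polynomial time by Remark~\ref{remark:verify}), and repeatedly apply the augmentation step; each successful step strictly increases $\card{S}$, replacing $p$ by the witnessing $p^*$ and recomputing the satisfied set. Since $\card{S}\le n$, at most $n$ augmentation steps occur, and each step invokes the checker at most $n$ times, so the whole procedure is polynomial whenever the checker is. When no augmentation succeeds, the current assignment satisfies no additional agent without desatisfying a currently-satisfied one, which is precisely the defining condition of Maximal $X$.

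The main obstacle I expect is justifying that ``enforce that the agents in a designated set are satisfied'' is itself polynomial-time decidable for each concept $X$. For SD proportionality and weak SD proportionality this is immediate because satisfaction decomposes per agent into the matching lower-bound constraints already used in the earlier proofs, so restricting to a subset only drops some lower bounds. For the envy-freeness notions, satisfaction of one agent depends on the allocations of the others, so the per-agent decomposition is less clean; here I would either appeal to the two-agent equivalences of Theorem~\ref{remark:imply2} (where envy-freeness collapses to a proportionality-style condition) or simply note that the statement is used in regimes where the existence checker for $X$ already handles the relevant constraint structure. This is why I would phrase the lemma as a black-box reduction to the existence checker rather than re-deriving a combinatorial characterization, and why the authors' remark that ``the proofs are straightforward'' is apt: the content is the greedy-augmentation template, with the per-$X$ feasibility question inheriting tractability directly from the hypothesis.
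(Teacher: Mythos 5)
Your proposal is correct and follows essentially the same route as the paper: a greedy augmentation loop that repeatedly queries a subset-restricted version of the existence checker to grow the set $S$ of satisfied agents, terminating in at most $n$ rounds of at most $n$ queries each. You are in fact more careful than the paper's own (very terse) proof in flagging that the hypothesis must implicitly be read as providing a checker that can enforce the fairness condition for a \emph{designated subset} of agents, which is exactly how the paper uses it in its corollaries for SD and weak SD proportionality.
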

	\begin{proof}
		We describe the reduction. 
	Initialize $S$, the set of agents for which the fairness condition can be met, to the empty set.
	% Start with any one agent whose preferences are real but consider other agents in $N\setminus \{S\}$ to be dummies who have no demands.
Check whether there exists an agent $j\in N\setminus S$ that can be moved to $S$ such that there still exists a discrete assignment that satisfies the fairness condition according to $X$ for agents in $S$. If yes, then move $j$ to $S$. Repeat the process until the set $S$ cannot be grown. Hence $S$ is the maximal set of agents that can be satisfied.
	\end{proof}
	\begin{corollary}
	A maximal SD proportional discrete assignment can be computed in polynomial time. 
	\end{corollary}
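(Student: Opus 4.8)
The plan is to instantiate the preceding lemma with $X$ set to SD proportionality. That lemma reduces the computation of a \emph{Maximal $X$} discrete assignment to repeated feasibility checks: starting from $S=\emptyset$, it greedily tries to move each agent $j\in N\setminus S$ into $S$, keeping $j$ whenever a discrete assignment still exists that satisfies the $X$-condition for every agent in $S\cup\{j\}$. Since at most $n$ candidate agents are tested at each of at most $n$ enlargements of $S$, a polynomial bound on the whole procedure follows as soon as each individual feasibility check runs in polynomial time.

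First I would argue that such a feasibility check is available for SD proportionality. \thmref{th:SD  prop-inP} already shows, via a reduction to feasible $b$-matching, that one can decide in polynomial time whether a discrete assignment exists in which \emph{every} agent's allocation SD-dominates the uniform assignment. The only adaptation needed here is to enforce the SD proportionality lower bounds solely for the agents currently in $S$, leaving the remaining agents unconstrained. In the $b$-matching construction of \thmref{th:SD  prop-inP}, this amounts to dropping the vertices $v_i^{\ell}$ and their lower bounds for agents $i\notin S$, so that the objects they receive may be allocated arbitrarily; the resulting instance is again a bipartite feasible $b$-matching problem and is therefore solvable in strongly polynomial time.

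Combining these two observations yields the result: the greedy enlargement of $S$ terminates with a maximal satisfiable set, and the final $b$-matching produces a discrete assignment that is SD proportional exactly for the agents in $S$, with no further agent addable without violating the condition for some already-satisfied agent. The only point requiring care is the subset-restricted feasibility check, but this is immediate from the structure of the matching reduction, so no genuine obstacle arises.
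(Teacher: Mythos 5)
Your proposal is correct and follows essentially the same route as the paper, which derives this corollary by combining the greedy-enlargement lemma for \emph{Maximal $X$} with the $b$-matching algorithm of Theorem~\ref{th:SD  prop-inP}. Your explicit remark that the feasibility check must be restricted to the agents currently in $S$ (by dropping the lower-bound vertices $v_i^{\ell}$ for agents outside $S$) is a detail the paper leaves implicit, and it is handled correctly.
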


	\begin{corollary}
	A maximal weak SD proportional discrete assignment can be computed in polynomial time if $n$ is a constant.
	\end{corollary}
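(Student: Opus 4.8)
The plan is to apply the preceding lemma with $X$ taken to be weak SD proportionality. That lemma reduces computing a \emph{Maximal $X$} assignment to a polynomial number of existence checks: starting from $S=\emptyset$, one repeatedly tests, for a candidate agent $j\in N\setminus S$, whether there is a discrete assignment meeting the weak SD proportionality condition for all agents in $S\cup\{j\}$, and grows $S$ whenever the answer is yes. Since $|S|$ increases by at least one each successful round and there are $n$ agents, the loop performs $O(n^2)$ such existence checks. It therefore suffices to show that the \emph{subset} existence problem --- does there exist a discrete assignment satisfying weak SD proportionality for every agent in a prescribed set $S$? --- can be solved in polynomial time when $n$ is constant.

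For this I would adapt the $b$-matching construction in the proof of \thmref{th:weak-SD  prop-inP}. The only change is that the lower-bound (fairness) vertices $v_i^{\ell}$ and $B_i^j$, together with their associated constraints, are introduced only for agents $i\in S$; agents in $N\setminus S$ contribute no vertices and hence no fairness requirement, so they may absorb any leftover objects. Because the object vertices still carry $a(o)=b(o)=1$, any feasible $b$-matching still yields a complete discrete assignment, and by \lemref{lemma:weaksdprop} it satisfies weak SD proportionality exactly for the agents in $S$. The number of condition-combinations to enumerate is now $\prod_{i\in S}(k_i+1)\le \prod_{i=1}^{n}(k_i+1)$, which is polynomial when $n$ is constant, and each combination is tested by a single strongly-polynomial feasible $b$-matching computation.

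Combining the two parts yields the result: each of the $O(n^2)$ existence checks runs in polynomial time for constant $n$, so the greedy procedure of the preceding lemma terminates in polynomial time and outputs a Maximal weak SD proportional discrete assignment.

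The main obstacle is the mismatch between the hypothesis of the preceding lemma and \thmref{th:weak-SD  prop-inP}: the lemma needs existence relative to an arbitrary subset $S$ of agents, whereas \thmref{th:weak-SD  prop-inP} is phrased for all agents simultaneously. The crux is thus verifying that the $b$-matching reduction degrades gracefully when fairness is demanded only for $S$ --- which it does, since dropping an agent's lower-bound vertices merely relaxes constraints while the completeness constraints on the objects are left untouched.
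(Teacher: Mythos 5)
Your proposal is correct and follows the same route as the paper: the corollary is obtained by combining the preceding lemma (greedily growing the set $S$ of satisfied agents via repeated existence checks) with the algorithm of Theorem~\ref{th:weak-SD  prop-inP}. Your additional observation --- that the lemma's proof really needs existence checks relative to an arbitrary subset $S$, and that the $b$-matching construction handles this by simply omitting the lower-bound vertices of agents outside $S$ while keeping $a(o)=b(o)=1$ on object vertices --- correctly fills in a detail the paper leaves implicit, and is exactly the right way to do it.
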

	
	Figure~\ref{fig:tnfigure} illustrates the polynomial-time reductions between computational problems for fairness concept $X\in \{$envy-freeness, proportionality, SD envy--freeness, weak SD envy-freeness, possible envy-freeness, weak SD proportionality, SD proportionality$\}$.

		\begin{figure}%{r}{3cm}
					    \centering
					    \scalebox{0.6}{
							\large
						\begin{tikzpicture}
						%\Large
							\tikzstyle{pfeil}=[->,>=angle 60, shorten >=1pt,draw]
							\tikzstyle{onlytext}=[]

						\node[onlytext] (MaximalX) at (0,2) {\Large{\sc MaximalX}};
							\node[onlytext] (ExistsX) at (0,0) {\Large{\sc ExistsX}};
							\node[onlytext] (MaximumX) at (0,-2) {\Large{\sc MaximumX}};

							\draw[pfeil] (MaximalX) to (ExistsX);
							\draw[pfeil] (ExistsX) to (MaximumX);

						\end{tikzpicture}
						}
						\caption{Polynomial-time reductions between computational problems for fairness concept $X\in \{$envy-freeness, proportionality, SD envy--freeness, weak SD envy-freeness, possible envy-freeness, weak SD proportionality, SD proportionality$\}$.}\label{fig:tnfigure}
					  \end{figure}
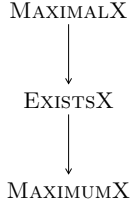

	\subsection{Optimal proportionality}

	A possible criticism of maximal and maximum fairness is that the fairness constraint of each agent is strong enough so that it is not possible to satisfy it for each agent. Hence those agents that do not have their fairness constraints satisfied may not view the assignment as fair from their perspective. To counter this criticism, we weaken the fairness constraint in a uniform way which leads to attractive fairness concepts called \emph{optimal proportionality} and \emph{optimal weak proportionality}. The concepts are similar to egalitarian equivalence rule for continuous resource settings~\citep{PaSc78a}. For continuous settings, an allocation satisfies egalitarian equivalence if each agent is indifferent between his allocation and the reference resource bundle. Since we consider only ordinal preferences, we exploit the SD relations to define suitable concepts. Moreover, since we consider discrete assignments, we relax the requirement of each agent's allocation being equivalent to the reference allocation. 
	
	We say that	an assignment satisfies \emph{$1/\alpha$ proportionality} if 
	\[p(i)\pref_i^{SD} (1/\alpha,\ldots,1/\alpha) \text{ for all } i\in N.\] We note that $1/n$ proportionality is equivalent to SD proportionality.
An assignment satisfies \emph{optimal proportionality} if
	\[p(i)\pref_i^{SD} (1/\alpha,\ldots,1/\alpha) \text{ for all } i\in N.\] for the smallest possible $\alpha$. We will refer to the smallest such $\alpha$ as $\alpha^*$ and call $1/\alpha^*$ as the \emph{optimal proportionality value}.
	
	We point out that Theorem~\ref{th:SD  prop-inP} can be generalized from $1/n$ proportionality to $1/\alpha$ proportionality for any value of $\alpha$:

	\begin{theorem}\label{th:alphaprop-inP}
	It can be checked in polynomial time whether a discrete $1/\alpha$  proportional assignment exists even if agents are allowed to express indifference between objects.
	\end{theorem}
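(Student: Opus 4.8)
The plan is to generalize the proof of \thmref{th:SD  prop-inP} essentially verbatim, since the only role that $n$ played there was in setting the lower-bound capacities of the equivalence-class vertices. Recall that $1/\alpha$ proportionality requires $p(i)\pref_i^{SD}(1/\alpha,\ldots,1/\alpha)$ for all $i\in N$, which by the definition of the SD relation unpacks into the family of cardinality constraints: for each agent $i$ and each prefix of equivalence classes $\bigcup_{j=1}^{\ell}E_i^j$, agent $i$ must receive at least $\ceil{\sum_{j=1}^{\ell}\card{E_i^j}/\alpha}$ objects from that prefix. This is exactly the condition appearing in the proof of \thmref{th:SD  prop-inP} with $n$ replaced by $\alpha$.

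First I would set up the same bipartite graph $G=(V,E)$ as in \thmref{th:SD  prop-inP}: one vertex $v_i^{\ell}$ for each agent $i$ and each equivalence class $\ell\in\{1,\ldots,k_i\}$, one vertex for each object $o\in O$, and an edge $\{v_i^{\ell},o\}$ whenever $o\in\bigcup_{j=1}^{\ell}E_i^j$. The object vertices again get $a(o)=b(o)=1$ and each edge gets capacity $c(e)=1$. The only change is in the lower bounds of the class vertices, which become
\[a(v_i^{\ell})=\ceil{\frac{\sum_{j=1}^{\ell}\card{E_i^j}}{\alpha}}-\sum_{j=1}^{\ell-1}a(v_i^j),\qquad b(v_i^{\ell})=\infty.\]
The telescoping definition ensures that the total lower demand on the first $\ell$ class vertices of agent $i$ equals $\ceil{\sum_{j=1}^{\ell}\card{E_i^j}/\alpha}$, which is precisely what the SD constraint for the $\ell$-th prefix demands. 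As before, I would then check whether a feasible $b$-matching with these lower and upper bounds exists, using the strongly polynomial algorithm for maximum-weight feasible $b$-matchings on bipartite graphs cited from \citep[Chapter 35,][]{Schr03a}.

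The correctness argument transfers directly: a feasible $b$-matching yields a complete discrete assignment (every object vertex is saturated because $a(o)=1$) in which each agent receives at least $\ceil{\sum_{j=1}^{\ell}\card{E_i^j}/\alpha}$ objects from each prefix, hence $p(i)\pref_i^{SD}(1/\alpha,\ldots,1/\alpha)$; conversely, any discrete $1/\alpha$ proportional assignment induces a feasible $b$-matching, with any objects beyond those needed to meet the lower bounds assigned arbitrarily. One caveat worth flagging, which is the closest thing to an obstacle: in \thmref{th:SD  prop-inP} the case $m\not\equiv 0\pmod n$ was ruled out at the start via \thmref{th:m=nc}, but for general $\alpha$ there is no analogous divisibility obstruction, so no preliminary feasibility case analysis is needed --- the $b$-matching formulation simply returns infeasibility when the demands cannot be met. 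I expect no genuine difficulty here; the result is a routine parametrization of the earlier argument, and the main thing to verify carefully is that the telescoping lower bounds correctly encode the SD condition for every prefix, which they do by construction.
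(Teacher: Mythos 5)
Your proposal is correct and is essentially identical to the paper's own proof, which likewise states that the algorithm of Theorem~\ref{th:SD  prop-inP} carries over verbatim with the lower bounds $a(v_i^{\ell})=\ceil{\frac{\sum_{j=1}^\ell \card{E_i^{j}}}{\alpha}} - \sum_{j=1}^{\ell-1} a(v_i^j)$ replacing the $1/n$ versions. Your observation that the divisibility precheck of Theorem~\ref{th:m=nc} is no longer needed (infeasibility being detected by the $b$-matching itself) is a sensible extra remark that the paper leaves implicit.
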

	\begin{proof}
		The algorithm and proof is identical to that of the algorithm in the proof of Theorem~\ref{th:SD  prop-inP}. The only difference is that for each $i\in N$ and $\ell\in \{1,\ldots, k_i\}$, the lower bound of each vertex is set to $a(v_i^{\ell})=\ceil{\frac{\sum_{j=1}^\ell \card{E_i^{j}}}{\alpha}} - \sum_{j=1}^{\ell-1} a(v_i^j)$ instead of $\ceil{\frac{\sum_{j=1}^\ell \card{E_i^{j}}}{n}} - \sum_{j=1}^{\ell-1} a(v_i^j)$.
		\end{proof}

The algorithm in the proof of Theorem~\ref{th:alphaprop-inP} can be used to check the existence of a $1/\alpha$ proportional assignment for different values of $\alpha$. However, among other cases, if $m<n$, then we know that a $1/\alpha$ proportional assignment does not exist for any finite value of $\alpha$. We first characterize the settings that admit a $1/\alpha$ proportional assignment for some finite $\alpha$.

	\begin{lemma}
		$\alpha^*$ is finite iff there exists an assignment in which each agent gets one of his most preferred objects.
		\end{lemma}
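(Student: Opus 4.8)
The plan is to prove the biconditional directly, handling each direction separately. The statement characterizes when $\alpha^*$ is finite, i.e., when a $1/\alpha$ proportional assignment exists for \emph{some} finite $\alpha$. The key observation is that SD proportionality with respect to $(1/\alpha,\ldots,1/\alpha)$ requires, for each agent $i$, that $\sum_{o' \succsim_i o} p(i)(o') \geq |\{o' \mid o' \succsim_i o\}|/\alpha$ for all $o \in O$. The most binding constraint comes from the top equivalence class: taking $o$ to be a most preferred object, the right-hand side is $|E_i^1|/\alpha$, while the left-hand side is the number of most-preferred objects agent $i$ receives.

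For the backward direction, I would assume there is a discrete assignment $q$ in which each agent gets one of his most preferred objects, and show $\alpha^*$ is finite. The idea is that receiving a most-preferred object gives each agent a strictly positive fraction in his top class, so for sufficiently large $\alpha$ the lower bound $|E_i^1|/\alpha$ (and all the lower bounds $\sum_{j=1}^\ell |E_i^j|/\alpha$ for the other prefixes) drops below what $q$ already guarantees. Concretely, since agent $i$ holds at least one object at least as preferred as any threshold $o$, we have $\sum_{o' \succsim_i o} q(i)(o') \geq 1$ for every $o$; choosing $\alpha \geq m$ makes each requirement $|\{o' \mid o' \succsim_i o\}|/\alpha \leq m/\alpha \leq 1$, so $q$ is $1/\alpha$ proportional and hence $\alpha^*$ is finite.

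For the forward direction, I would prove the contrapositive: if there is \emph{no} assignment in which every agent gets a most preferred object, then $\alpha^*$ is infinite. The crux is that in any discrete assignment, some agent $i$ receives zero objects from his top equivalence class $E_i^1$. Taking $o \in E_i^1$ in the SD condition forces $0 = \sum_{o' \succsim_i o} p(i)(o') \geq |E_i^1|/\alpha > 0$, which is impossible for any finite $\alpha$ (since $|E_i^1| \geq 1$). Thus no finite $\alpha$ admits a $1/\alpha$ proportional assignment, so $\alpha^*$ is infinite. The main obstacle is arguing cleanly that ``no assignment gives everyone a top object'' is exactly equivalent to ``in every assignment some agent gets nothing from his top class''; this is essentially the failure of a system of distinct representatives, and I would phrase it via the bipartite graph matching agents to their most-preferred objects, noting a perfect matching on the agent side exists if and only if every agent can simultaneously be given a top object.
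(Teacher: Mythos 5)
Your proof is correct and takes essentially the same route as the paper's: the forward direction follows from the SD constraint on the first equivalence class, which forces each agent to receive at least one most preferred object, and the backward direction follows by choosing any sufficiently large finite $\alpha$ (you use $\alpha \geq m$, the paper uses $\alpha = mn$). The closing worry about systems of distinct representatives is unnecessary---``no assignment gives everyone a top object'' and ``in every assignment some agent gets nothing from his top class'' are just logical negations of one another---but it is harmless.
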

		\begin{proof}
			In case each agent gets one of his most preferred objects, proportionality is satisfied for $\alpha^*=mn$.
			If $\alpha^*$ is finite, then each agent's proportionality constraint with respect to the first equivalence class is satisfied. Hence for some  finite $\alpha^*$, 
			\[\left|p(i)\cap E_i^1\right|\geq \frac{|E_i^1|}{\alpha^*}.\]
			Hence each $i$ gets at least one of his most preferred objects. 
			\end{proof}

Since $\alpha$ is a positive real in the interval $(0,\infty]$, it appears that even binary search cannot be used to find the optimal proportional assignment in polynomial time. Next, we show that interestingly we only need to check a polynomial number of values of $\alpha$ to find the optimal proportional assignment.´
	
	\begin{theorem}
		An optimal proportional assignment can be computed in polynomial time even if agents are allowed to express indifference between objects.
		\end{theorem}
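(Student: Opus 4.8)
The plan is to reduce the optimization over the continuous parameter $\alpha$ to a search over a discrete, polynomially-sized set of candidate values, and then to invoke Theorem~\ref{th:alphaprop-inP} on each candidate. The key observation is that the feasibility of $1/\alpha$ proportionality depends on $\alpha$ only through the integer lower bounds $a(v_i^{\ell})=\ceil{\sum_{j=1}^{\ell}\card{E_i^{j}}/\alpha} - \sum_{j=1}^{\ell-1}a(v_i^j)$ used in the $b$-matching reduction. As $\alpha$ ranges over $(0,\infty]$, each ceiling $\ceil{\sum_{j=1}^{\ell}\card{E_i^{j}}/\alpha}$ is a monotone step function of $\alpha$ that takes only integer values between $1$ and $\sum_{j=1}^{\ell}\card{E_i^{j}}\le m$. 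Hence the entire vector of lower bounds, and therefore the entire matching instance, changes value only at the finitely many thresholds where some $\sum_{j=1}^{\ell}\card{E_i^{j}}/\alpha$ crosses an integer.

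First I would enumerate these breakpoints explicitly. For each agent $i$ and each prefix sum $s_i^{\ell}=\sum_{j=1}^{\ell}\card{E_i^{j}}$, the value $\ceil{s_i^{\ell}/\alpha}$ changes exactly at the points $\alpha = s_i^{\ell}/t$ for integers $t\in\{1,\ldots,s_i^{\ell}\}$. Collecting these over all $i$ and all $\ell$ yields a set of candidate values of size at most $\sum_{i\in N}\sum_{\ell=1}^{k_i}s_i^{\ell} = O(nm\cdot\max_i k_i)=O(nm^2)$, which is polynomial in the input. On each maximal open interval between consecutive breakpoints the feasibility of a $1/\alpha$ proportional assignment is constant, so it suffices to test one representative $\alpha$ from each interval (equivalently, to test the breakpoint values themselves, adopting the convention that the relevant lower bounds are right-continuous in $\alpha$).

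Next I would establish monotonicity: if a $1/\alpha$ proportional assignment exists, then a $1/\alpha'$ proportional assignment exists for every $\alpha'\ge\alpha$, since increasing $\alpha$ can only decrease each lower bound $\ceil{s_i^{\ell}/\alpha}$ and hence relaxes every constraint in the $b$-matching instance. This monotonicity means the set of feasible $\alpha$ is an up-set, so $\alpha^*$ is precisely the smallest candidate breakpoint at which Theorem~\ref{th:alphaprop-inP} reports feasibility. I would therefore sort the $O(nm^2)$ candidate values, run the polynomial-time feasibility check of Theorem~\ref{th:alphaprop-inP} at each (or, using monotonicity, binary search over the sorted list), and output the assignment returned at the smallest feasible value. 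The total running time is the number of candidates times the cost of one feasible $b$-matching computation, which is polynomial.

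The main obstacle I anticipate is the careful bookkeeping around the ceilings and the subtracted partial sums $\sum_{j=1}^{\ell-1}a(v_i^j)$ in the definition of $a(v_i^{\ell})$: because each lower bound is defined cumulatively in terms of earlier ones, I must verify that the combined quantity is still a piecewise-constant function of $\alpha$ whose breakpoints are among the enumerated $s_i^{\ell}/t$, and that feasibility genuinely depends on $\alpha$ only through these integers (rather than through $\alpha$ directly). A secondary subtlety is the boundary behaviour at the breakpoints themselves and at $\alpha=\infty$, where the convention on ceilings determines which side of each threshold is feasible; I would fix the right-continuous convention so that $\alpha^*$ is attained and confirm that the minimal feasible candidate is exactly $\alpha^*$ rather than an infimum that is not achieved.
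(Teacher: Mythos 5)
Your proposal is correct and follows essentially the same route as the paper: both arguments reduce the continuous search over $\alpha$ to the polynomially many rational candidates $s_i^{\ell}/t$ (the points where some prefix-sum proportionality constraint becomes tight, equivalently where a ceiling $\ceil{s_i^{\ell}/\alpha}$ changes value) and then invoke the $1/\alpha$-proportionality feasibility check of Theorem~\ref{th:alphaprop-inP} at each candidate. Your added monotonicity observation and binary search are a harmless refinement, and your handling of the $\alpha^*=\infty$ case matches the paper's preliminary lemma that $\alpha^*$ is finite iff every agent can receive a most preferred object.
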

	\begin{proof}
		If there exists no assignment in which each agent gets one of his most preferred objects, then $\alpha^*$ is infinite. This means the first proportionality constraint of agents cannot be simultaneously satisfied for a finite $\alpha^*$.
In that case any arbitrary assignment satisfies optimality proportionality with $\alpha^*=\infty$.	
In case there exists an assignment in which each agent gets one of his most preferred objects, then $\alpha^*$ is finite. We show how to compute such an $\alpha^*$ as well as the assignment corresponding to it. 
An assignment $p$ satisfies $1/\alpha$ proportionality if for each $i\in N$ and each $k\in \{1,\ldots, k_i\}$,
\begin{equation}\label{eq:tight}
\left|p(i)\cap \bigcup_{j=1}^k E_i^j\right|\geq \frac{|\bigcup_{j=1}^k E_i^j|}{\alpha}.
\end{equation}
Since $k_i\leq m$, there are in total $mn$ such constraints.
Since the left hand side of each such constraint is an integer, the overall constraint is tight if 
\[\left|p(i)\cap \bigcup_{j=1}^k E_i^j\right|= \ceil{\frac{|\bigcup_{j=1}^k E_i^j|}{\alpha}}.\]

If the value of $\alpha$ is such that no proportionality constraint is tight then this means that $\alpha$ is not optimal. Therefore, we can restrict our attention to those values of $\alpha$ for which at least one of the constraints of the type in \eqref{eq:tight} is tight. When a constraint is tight, both sides of the constraint take one of the values from the set $\{1,\ldots, m\}.$ No constraint can take value $0$ because we know that $\alpha^*$ is finite and that there exists an assignment in which each agent gets one of his most preferred objects. 
For the tight constraint 

\[\left|p(i)\cap \bigcup_{j=1}^k E_i^j\right|= \frac{|\bigcup_{j=1}^k E_i^j|}{\alpha}\in (\ell,\ell+1]\]
for some $\ell\in \{0,\ldots, m-1\}$. This constraint is tight for one of the following values of $\alpha$:

\[\left\{\frac{|\bigcup_{j=1}^k E_i^j|}{\ell+1}\midd \ell\in \{0,\ldots, m-1\}\right\}\]

It follows that if we restrict $\alpha$ to those values for which at least one proportionality constraint is tight, then we just need to consider at most $nm^2$ values of $\alpha$ all of which are rationals. 
For each of these values, we check whether a $1/\alpha$ proportional assignment exists or not. The smallest $\alpha$ for which a $1/\alpha$ proportional assignment exists is the optimal $\alpha=\alpha^*$. The assignment is the optimal proportional assignment. 
\end{proof}

We note that if an SD proportional assignment exists, then it is also an optimal proportional assignment.

\begin{theorem}\label{th:sdprop=optimal}
%A discrete SD proportional assignment is a discrete optimal proportional assignment.
SD proportionality implies optimal proportionality irrespective of whether the assignments are discrete or not.
	\end{theorem}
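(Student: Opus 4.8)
The plan is to show that $\alpha^*$, the smallest value of $\alpha$ for which a $1/\alpha$ proportional assignment exists, satisfies $\alpha^* \le n$ whenever an SD proportional assignment exists, and that any SD proportional assignment already witnesses $1/\alpha^*$ proportionality. Recall that $1/n$ proportionality is by definition exactly SD proportionality. So if an assignment $p$ is SD proportional, then $p$ is $1/\alpha$ proportional for $\alpha = n$, which immediately shows $\alpha^* \le n$.

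The key observation I would establish is a monotonicity fact: if $p(i) \pref_i^{SD} (1/\alpha,\ldots,1/\alpha)$ and $\alpha' \ge \alpha$, then $p(i) \pref_i^{SD} (1/\alpha',\ldots,1/\alpha')$. This is because the reference allocation $(1/\alpha,\ldots,1/\alpha)$ assigns each object with weight $1/\alpha$, so for every upper contour set $\{o' \midd o' \succsim_i o\}$ the cumulative mass under the uniform reference is $|\{o'\midd o'\succsim_i o\}|/\alpha$, which is weakly decreasing in $\alpha$. Hence being $1/\alpha$ proportional is a weaker requirement as $\alpha$ grows, and an assignment that is $1/\alpha$ proportional is automatically $1/\alpha'$ proportional for all $\alpha' \ge \alpha$.

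Combining these, if $p$ is SD proportional, then $p$ is $1/n$ proportional, and by the definition of $\alpha^*$ as the smallest feasible value we have $\alpha^* \le n$. By the monotonicity observation, since $n \ge \alpha^*$, the same assignment $p$ is also $1/\alpha^*$ proportional, i.e., $p$ satisfies optimal proportionality. The argument nowhere uses that the assignment is discrete: the stochastic dominance comparison between $p(i)$ and the uniform reference allocation is defined identically for fractional assignments, and the monotonicity of the reference mass in $\alpha$ holds verbatim. This is what gives the concluding phrase ``irrespective of whether the assignments are discrete or not.''

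I do not anticipate a serious obstacle here; the statement is essentially an unpacking of definitions together with the elementary monotonicity of the uniform reference allocation in $\alpha$. The one point that requires a little care is making sure the direction of the inequality is right — that larger $\alpha$ corresponds to a \emph{weaker} proportionality demand — and phrasing the definition of $\alpha^*$ so that ``SD proportional exists'' feeds cleanly into ``$\alpha^* \le n$ and the witness is reused.'' Since optimal proportionality is defined via the smallest $\alpha$, I would be explicit that existence of an SD proportional assignment guarantees the feasible set of $\alpha$ values is nonempty and bounded above by $n$, so $\alpha^*$ is well-defined and the SD proportional assignment itself certifies optimal proportionality.
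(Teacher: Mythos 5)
There is a genuine gap, and it sits exactly at the step you flagged as requiring "a little care" about the direction of the inequality. Your monotonicity observation is correct: $1/\alpha$ proportionality becomes a \emph{weaker} demand as $\alpha$ grows, so an assignment that is $1/\alpha$ proportional is also $1/\alpha'$ proportional for every $\alpha' \ge \alpha$. But you then invoke it backwards. You establish $\alpha^* \le n$ and that $p$ is $1/n$ proportional, and conclude "since $n \ge \alpha^*$, the same assignment $p$ is also $1/\alpha^*$ proportional." Monotonicity only lets you pass from $1/n$ proportionality to $1/\alpha'$ proportionality for $\alpha' \ge n$; passing down to $\alpha^* \le n$ is moving to a \emph{stronger} requirement, which monotonicity does not license. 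If it were possible that $\alpha^* < n$, your argument would give no reason why the SD proportional assignment $p$ satisfies the strictly stronger $1/\alpha^*$ proportionality, and the theorem would fail for exactly such a $p$.

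The missing ingredient is a lower bound $\alpha^* \ge n$, which is the content of the paper's proof. Suppose some assignment $q$ were $1/\alpha$ proportional with $\alpha < n$. Applying the proportionality constraint to the full upper contour set $O$ gives, for every agent $i$, a total allocated mass of at least $|O|/\alpha > |O|/n$; summing over the $n$ agents yields total mass strictly greater than $|O|$, contradicting the fact that a complete assignment distributes exactly $|O|$ units of mass. Hence no $1/\alpha$ proportional assignment exists for $\alpha < n$, so $\alpha^* = n$, and your $1/n$ proportional witness $p$ is trivially $1/\alpha^*$ proportional. This counting argument is what makes the theorem true, and it works verbatim for fractional assignments, which is what justifies the phrase "irrespective of whether the assignments are discrete or not." Once you add it, the rest of your writeup (the reduction of SD proportionality to $1/n$ proportionality and the well-definedness of $\alpha^*$) is fine.
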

	\begin{proof}
		Assume that assignment $p$ is SD proportional not an optimal proportional assignment. Then, we know that it is $1/n$ proportional. Assume for contradiction that $p$ is $1/\alpha$ proportional for $\alpha<n$. But this means that for each $i\in N$, the following constraint is satisfied.
		
		\begin{equation*}
		\left|p(i)\cap O\right|\geq \frac{|O|}{\alpha}.
		\end{equation*}
		
		Since $\alpha<n$, it follows that $\frac{|O|}{\alpha}>\frac{|O|}{n}$. But if each agent gets more than $\frac{|O|}{n}$ objects, then the number of objects is more than $|O|$ which is a contradiction.\end{proof}
		
We show that even if an SD proportional assignment does not exist, an optimal proportional assignment suggests a desirable allocation of objects.
		
				\begin{example}
					Assume that the preferences of the agents are as follows.
						\begin{align*}
						1:&\quad \{o_1,o_2,o_3\}\\
						2:&\quad \{o_1,o_2,o_3\}
					\end{align*}
					Since $m$ is not a multiple of $n$, an SD proportional assignment does not exist. Now consider the assignment $p$ that gives two objects to one agent and one object to the other. It is an optimal proportional assignment where the optimal proportionality value is $1/3$.
					\end{example}

Optimal proportionality seems to be a useful fairness concept for ordinal settings. It guarantees the existence of an assignment that satisfies a fairness notion along the lines of proportionality. If each agent cannot get a most preferred object, then the optimal proportionality value reached is  $1/\infty=0$. If the optimal proportionality value is $0$, then one can modify the preference profile by gradually merging the first few equivalence classes of the agents. If $m>n$, then after merging enough equivalence classes of the agents, one can ensure that the optimal proportional value for the modified profile is finite. An optimal proportional assignment for the modified preference profile stills seems to constitute a desirable and fair assignment for the original preference profile. We note that in contrast to the ordinal setting, when agents have cardinal utilities, even checking whether there exists a proportional assignment is NP-complete~\citep{BoLe14a,LMMS04a}.

		\subsection{Optimal weak proportionality}

		Just like the concept of SD proportionality can be used to define optimal proportionality, weak SD proportionality can be used to define \emph{optimal weak proportionality}.
			We say that	an assignment satisfies \emph{$1/\beta$ weak proportionality} if 
			\[(1/\beta,\ldots,1/\beta) \nsucc_i^{SD}  p(i) \text{ for all } i\in N.\] We note that $1/n$ weak proportionality is equivalent to weak SD proportionality.
		% An assignment satisfies \emph{optimal proportionality} if
		% 	\[(1/\beta,\ldots,1/\beta) \nsucc_i^{SD}  p(i) \text{ for all } i\in N.\] for the smallest possible $\beta$. We will refer to the smallest such  $\beta$ as $\beta^*$ and call $1/\beta^*$ as the \emph{optimal weak proportionality value}.
An assignment satisfies \emph{optimal weak proportionality} if
\[(1/\beta,\ldots,1/\beta) \nsucc_i^{SD}  p(i) \text{ for all } i\in N.\] for the infimum of the set  $\{\beta\ |\ \text{ $\exists$ a $1/\beta$ weak proportional assignment}\}$.  
We will refer to the infimum as $\beta^*$ and call $1/\beta^*$ as the \emph{optimal weak proportionality value}.
 
	We point out that Theorem~\ref{th:weak-SD  prop-inP} can be generalized from $1/n$ proportionality to $1/\beta$ proportionality for any value of $\beta$:
			
			\begin{theorem}\label{th:alphaweak-SDprop-inP}
			For a constant number of agents, it can be checked in polynomial time whether a $1/\beta$ weak proportional discrete assignment exists even if agents are allowed to express indifference between objects.
			\end{theorem}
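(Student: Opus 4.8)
The plan is to follow exactly the proof strategy of Theorem~\ref{th:weak-SD  prop-inP}, which handles the case $\beta=n$, and verify that the only place where the specific value $n$ enters is in the vertex lower bounds of the $b$-matching instance. By Lemma~\ref{lemma:weaksdprop}, a discrete assignment $p$ is $1/\beta$ weak proportional if and only if for each agent $i\in N$ either one of the $k_i$ strict conditions
\[
\sum_{o\in \bigcup_{j=1}^{\ell} E_i^j} p(i)(o) > \frac{\card{\bigcup_{j=1}^{\ell} E_i^j}}{\beta}
\]
holds for some $\ell\in\{1,\ldots,k_i\}$, or else the $(k_i+1)$-st equality condition $p(i)\sim_i^{SD}(1/\beta,\ldots,1/\beta)$ holds (which requires each $\card{E_i^j}$ to be a multiple of $\beta$, and hence $\beta$ to divide the relevant class sizes). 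As before, there are $\prod_{i=1}^n(k_i+1)$ combinations of per-agent conditions, and since $n$ is constant this product is polynomial in the input size.

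For each such combination I would construct the bipartite graph $G=(V,E)$ exactly as in the proof of Theorem~\ref{th:weak-SD  prop-inP}: introduce a vertex $v_i^{\ell}$ when agent $i$ is assigned condition $\ell\le k_i$ (with edges to all $o\in\bigcup_{j=1}^{\ell}E_i^j$), introduce the vertices $B_i^j$ when agent $i$ is assigned the $(k_i+1)$-st condition (with edges to all $o\in E_i^j$), and add one vertex per object. The only modification is in the lower bounds: I would set
\[
a(v_i^{\ell}) = \floor{\frac{\card{\bigcup_{j=1}^{\ell} E_i^j}}{\beta}} + 1,
\qquad
a(B_i^j) = b(B_i^j) = \frac{\card{E_i^j}}{\beta},
\]
with $b(v_i^{\ell})=\infty$, and $a(o)=b(o)=1$, $c(e)=1$ as before. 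The correctness argument is verbatim that of Theorem~\ref{th:weak-SD  prop-inP}: a feasible $b$-matching for the graph of some combination corresponds to a complete discrete assignment satisfying the chosen condition for every agent, and conversely every $1/\beta$ weak proportional assignment gives rise to such a feasible $b$-matching for at least one combination. Since a feasible $b$-matching with lower and upper bounds on a bipartite graph can be found in strongly polynomial time, and only polynomially many combinations need to be checked, the whole procedure runs in polynomial time.

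The only point requiring a moment's care — and the natural candidate for the main obstacle — is well-definedness of the integral bounds when $\beta$ is not itself an integer. The bound $a(v_i^{\ell})=\floor{\card{\bigcup_{j=1}^{\ell}E_i^j}/\beta}+1$ is always a genuine integer, so the strict-inequality conditions pose no difficulty for any real $\beta$. The equality condition, however, demands $a(B_i^j)=\card{E_i^j}/\beta\in\natz$, which is meaningful only when $\beta$ divides $\card{E_i^j}$; I would simply discard (treat as infeasible) any combination assigning the $(k_i+1)$-st condition to an agent for whom some relevant class size is not a multiple of $\beta$, exactly as the remark after the $(k_i+1)$-st condition in Theorem~\ref{th:weak-SD  prop-inP} already anticipates. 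With this caveat recorded, the proof is identical to that of Theorem~\ref{th:weak-SD  prop-inP} with $n$ replaced by $\beta$ throughout, so I would state precisely this and refer the reader back to that argument rather than reproduce it in full.
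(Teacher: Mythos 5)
Your proposal is correct and follows essentially the same route as the paper: the paper's proof likewise declares the algorithm identical to that of Theorem~\ref{th:weak-SD  prop-inP} and only replaces $n$ by $\beta$ in the vertex lower bounds $a(v_i^{\ell})$ and $a(B_i^j)$. Your additional remark about discarding combinations where $\card{E_i^j}/\beta$ is not an integer is a sensible point of care that the paper leaves implicit.
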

			\begin{proof}
				The algorithm and proof is identical to that of the algorithm in the proof of Theorem~\ref{th:weak-SD  prop-inP}. The only difference in the algorithm is in the lower bounds of vertices.
					\begin{itemize}
					\item $a(v_i^{\ell})=\floor{\frac{\card{\bigcup_{j=1}^{\ell} E_i^j}}{\beta}}+1$ and $b(v_i^{\ell})=\infty$ for each $i\in N$ and $\ell\in \{1,\ldots, k_i\}$;
					\item $a(B_i^j)=b(B_i^j)=\frac{\card{E_i^j}}{\beta}.$ for each $B_i^j$;
					\item $a(o)=b(o)=1$ for each $o\in O$.
				\end{itemize}
				\end{proof}

			The algorithm in the proof of Theorem~\ref{th:alphaweak-SDprop-inP} can be used to check the existence of a $1/\beta$ weak proportional assignment for different values of $\beta$. However, among other cases, if $m<n$, then we know that a $1/\beta$ weak proportional assignment does not exist for any finite value of $\beta$. 
			
			\begin{lemma}
				For any assignment setting, $\beta^*\geq 1$.
				\end{lemma}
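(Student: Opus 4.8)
The plan is to show that $\beta^* \geq 1$ by arguing that for any $\beta < 1$, the $1/\beta$ weak proportionality constraint is unsatisfiable, so no value strictly below $1$ can belong to the set whose infimum defines $\beta^*$. First I would unwind the definitions: a $1/\beta$ weak proportional assignment requires that for every agent $i$, the uniform-style reference allocation $(1/\beta,\ldots,1/\beta)$ does \emph{not} strictly SD-dominate $p(i)$. By the characterization of the SD relation (as used throughout, \cf Lemma~\ref{lemma:weaksdprop}), strict SD-domination of $p(i)$ by $(1/\beta,\ldots,1/\beta)$ means that the reference allocation weakly dominates $p(i)$ on every prefix of equivalence classes and strictly dominates on at least one.

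The key observation is what happens to the reference allocation $(1/\beta,\ldots,1/\beta)$ when $\beta < 1$: here each entry $1/\beta > 1$ exceeds $1$. I would then compare, for an arbitrary agent $i$ and an arbitrary discrete assignment $p$, the two sides of the SD comparison on the top equivalence class $E_i^1$. The reference allocation assigns to the prefix $E_i^1$ a total mass of $|E_i^1|/\beta$, whereas the most that $p(i)$ can give on this prefix is $|E_i^1|$ itself (agent $i$ cannot receive more copies of his top objects than exist). Since $\beta < 1$ gives $|E_i^1|/\beta > |E_i^1| \geq |p(i) \cap E_i^1|$, the reference allocation strictly exceeds $p(i)$ already on the first prefix, and it is routine to check that this propagates to strict SD-domination on the full preference order. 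Hence $(1/\beta,\ldots,1/\beta) \succ_i^{SD} p(i)$ for every agent, so the constraint fails for every discrete assignment and no $1/\beta$ weak proportional assignment exists when $\beta < 1$.

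Consequently the set $\{\beta \mid \exists \text{ a } 1/\beta \text{ weak proportional assignment}\}$ contains no value below $1$, so its infimum $\beta^*$ is at least $1$. The main obstacle—though a minor one—is being careful about the direction of the inequalities in the SD definition and about the case of fractional versus discrete allocations: the statement is claimed for \emph{any} assignment setting, so I would phrase the prefix comparison so that it works whether $p$ is discrete or fractional, using only that the total mass any single agent can hold on the objects of $E_i^1$ is at most $|E_i^1|$. This upper bound holds because the column sums of a fractional assignment are $1$, so $\sum_{o \in E_i^1} p(i)(o) \leq \sum_{o \in E_i^1} 1 = |E_i^1|$, which again is strictly less than $|E_i^1|/\beta$ for $\beta < 1$.
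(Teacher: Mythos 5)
Your proof is correct and rests on the same counting argument as the paper's: demanding mass $\lvert E\rvert/\beta > \lvert E\rvert$ on a prefix $E$ of equivalence classes is infeasible since an agent can hold at most $\lvert E\rvert$ there, so no $1/\beta$ weak proportional assignment exists for $\beta<1$. If anything, your phrasing (ruling out every $\beta<1$ from the feasible set and then taking the infimum) is slightly cleaner than the paper's, which argues directly about $\beta^*$ and implicitly assumes the infimum is attained.
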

				\begin{proof}
					Assume for contradiction that  $\beta^*< 1$. This means that there exists a discrete assignment $p$ such that for each agent $i\in N$, either $p(i)\sim_i^{SD} (1/\beta^*,\ldots, 1/\beta^*)$  or
$\left|p(i)\cap \bigcup_{j=1}^k E_i^j\right|> \frac{|\bigcup_{j=1}^k E_i|}{\beta^*}$ for some $k$. 
Note that the former condition is not feasible because it means that $p(i)\succ_i^{RS} O$. For the latter condition, since $\beta^*<1$, therefore $\left|p(i)\cap \bigcup_{j=1}^k E_i^j\right|>|\bigcup_{j=1}^k E_i^j|$. But this is a contradiction.
\end{proof}

Next, we characterize those assignment settings for which $\beta^*$ is finite.

	\begin{lemma}\label{lemma:weakpropvaluecharac}
		$\beta^*$ is finite if and only if  $m\geq n$.
		\end{lemma}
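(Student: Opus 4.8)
The plan is to prove both directions of the iff characterization of when $\beta^*$ is finite. The claim is that a $1/\beta$ weak proportional discrete assignment exists for some finite $\beta$ precisely when $m\geq n$.

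First I would dispose of the easy direction, namely that $m<n$ forces $\beta^*=\infty$. If $m<n$, then in any discrete assignment at least one agent $i$ receives no object at all, so $p(i)(o)=0$ for all $o\in O$. For such an agent, $\sum_{o'\succsim o} p(i)(o')=0$ for every $o\in O$, while $(1/\beta,\ldots,1/\beta)$ gives $\sum_{o'\succsim o}(1/\beta)=\card{\{o'\midd o'\succsim o\}}/\beta>0$ for any finite $\beta$. Hence $(1/\beta,\ldots,1/\beta)\succ_i^{SD} p(i)$, so no finite $\beta$ admits a $1/\beta$ weak proportional assignment. This shows $\beta^*$ finite implies $m\geq n$ (contrapositive).

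For the converse, I would assume $m\geq n$ and exhibit a concrete discrete assignment together with a finite $\beta$ witnessing $1/\beta$ weak proportionality. The natural choice is to give each agent at least one object: since $m\geq n$, one can always construct a complete discrete assignment in which every agent receives at least one object (for instance, allocate one distinct object to each agent and dump the remaining $m-n$ objects arbitrarily). Once every agent holds at least one object, each agent $i$ satisfies $\sum_{o'\succsim o} p(i)(o')\geq 1$ for $o$ ranging over $O$, in particular for his most preferred object. I would then argue that a sufficiently large but finite $\beta$ makes the weak proportionality test in Lemma~\ref{lemma:weaksdprop} pass: taking $\beta$ large enough that $\card{\{o'\midd o'\succsim o\}}/\beta<1$ for all relevant $o$, the agent's holding of at least one object strictly exceeds the uniform benchmark on the top equivalence class, so condition~(\textit{i}) of Lemma~\ref{lemma:weaksdprop} holds for that agent. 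Concretely, $\beta=m+1$ (or any $\beta>m$) ensures $\card{\{o'\midd o'\succsim o\}}/\beta\leq m/\beta<1$, so every agent who holds at least one object is $1/\beta$ weak proportional.

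The main obstacle, though largely bookkeeping rather than conceptual, is making the strict-inequality argument uniform across the equivalence-class structure when agents express indifferences. I would need to check that having one object in the top class $E_i^1$ already yields $\sum_{o'\in E_i^1} p(i)(o')\geq 1 > \card{E_i^1}/\beta$ for $\beta>m$, which invokes the "some $o$" clause of Lemma~\ref{lemma:weaksdprop} and does not require anything about the lower classes. This confirms that each agent holding at least one object is $1/\beta$ weak proportional for finite $\beta>m$, so $\beta^*\leq m$ is finite whenever $m\geq n$, completing the equivalence.
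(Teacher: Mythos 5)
Your forward direction is fine and coincides with the paper's: if $m<n$ some agent receives no object, and the uniform vector $(1/\beta,\ldots,1/\beta)$ strictly SD-dominates an empty allocation for every finite $\beta$, so $\beta^*$ finite forces $m\geq n$. The gap is in the converse. You evaluate the strict inequality of Lemma~\ref{lemma:weaksdprop} at the agent's \emph{most} preferred objects, i.e.\ you rely on $\sum_{o'\in E_i^1}p(i)(o')\geq 1$, which holds only if agent $i$ actually receives an object from his top equivalence class $E_i^1$. Your construction (give each agent one arbitrary distinct object and dump the remaining $m-n$ objects) does not guarantee this, and no construction can: if two agents share the same unique top object, one of them holds nothing from his top class, so for that agent the left-hand side is $0$ while $\card{E_i^1}/\beta>0$, and the first condition of Lemma~\ref{lemma:weaksdprop} fails at every $o\in E_i^1$ no matter how large $\beta$ is. Your intermediate claim that $\sum_{o'\succsim o}p(i)(o')\geq 1$ ``for $o$ ranging over $O$'' is likewise false for $o$ strictly preferred to whatever object the agent holds.

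The repair is to move the witness $o$ to the other end of the ranking, which is what the paper's proof does: take $o\in\min_{\pref_i}(O)$, so that $\{o'\midd o'\succsim_i o\}=O$ and hence $\sum_{o'\succsim_i o}p(i)(o')=\card{p(i)}\geq 1>m/\beta$ for any finite $\beta>m$. This makes the first condition of Lemma~\ref{lemma:weaksdprop} hold for every agent who holds at least one object, with no assumption about where in his ranking that object sits, and gives $\beta^*\leq m$ whenever $m\geq n$. With that one change of witness, the rest of your argument goes through.
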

		\begin{proof}
			If $\beta^*$ is finite, then at least one of each agent's weak $\beta^*$ proportionality constraint is satisfied. This implies that each agent gets at least one object which means that $m\geq n$.

Assume that $m\geq n$. Hence there exists an assignment in which each agent gets one object. In the worst case, some agent $i\in N$ gets only one object that is also his least preferred. Even then $p(i)\succ_i^{SD} (1/\beta,\ldots, 1/\beta)$ if $1/\beta <1/m$. Hence $p$ is weak $1/\beta$ weak proportional for any finite $\beta>m$.			
	
			\end{proof}

Next, we show that if the number of agents is constant, an optimal weak proportional assignment can be computed in polynomial time. 
\begin{theorem}
	If the number of agents is constant, a discrete optimal weak proportional assignment can be computed in polynomial time even if agents are allowed to express indifference between objects.
\end{theorem}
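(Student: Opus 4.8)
The plan is to mirror the proof of the optimal proportionality theorem: reduce the search over the continuum of possible values of $\beta$ to a polynomial-size set of candidate values, test each with the algorithm of Theorem~\ref{th:alphaweak-SDprop-inP}, and read off $\beta^*$ together with a witnessing assignment. First I would dispose of the degenerate case: if $m<n$ then $\beta^*=\infty$ by Lemma~\ref{lemma:weakpropvaluecharac}, so $1/\beta^*=0$, and since the reference allocation $(0,\ldots,0)$ never strictly SD-dominates any $p(i)$, every discrete assignment is trivially optimal weak proportional and I output an arbitrary one. Henceforth assume $m\geq n$, so that $\beta^*$ is finite and, by the lemma immediately preceding this theorem, $\beta^*\geq 1$.

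The key structural observation is that, by the weak-proportionality characterization, an agent $i$'s constraint at parameter $\beta$ depends on $\beta$ only through the comparisons of $\card{p(i)\cap\bigcup_{j=1}^{\ell}E_i^j}$ with $\card{\bigcup_{j=1}^{\ell}E_i^j}/\beta$ for $\ell\in\{1,\ldots,k_i\}$. Since $\card{p(i)\cap\bigcup_{j=1}^{\ell}E_i^j}$ is an integer in $\{0,\ldots,m\}$ and $\card{\bigcup_{j=1}^{\ell}E_i^j}$ is a fixed integer in $\{1,\ldots,m\}$, the truth value of each comparison changes only as $\beta$ crosses a rational of the form $\card{\bigcup_{j=1}^{\ell}E_i^j}/s$ with $s\in\{1,\ldots,m\}$. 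Collecting these over all $i$ and $\ell$ yields a candidate set $C$ of at most $nm^2$ rationals. I would then argue that the feasibility predicate $g(\beta):=[\text{a }1/\beta\text{ weak proportional assignment exists}]$ is monotone non-decreasing (the feasible set of $\beta$ for any fixed assignment is upward closed, hence so is the union over assignments) and is constant on every open interval between consecutive elements of $C$; consequently $\beta^*$, the point where $g$ jumps from $0$ to $1$, must be one of the candidate values in $C$.

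To locate $\beta^*$ in polynomial time I would run the algorithm of Theorem~\ref{th:alphaweak-SDprop-inP} at every candidate in $C$ and at one representative point inside each gap between consecutive candidates (and at one point above $\max C$, which is feasible since $g(\beta)=1$ for all $\beta>m$ by Lemma~\ref{lemma:weakpropvaluecharac}). Because $n$ is constant, each such call is polynomial and there are only polynomially many calls. Monotonicity and piecewise-constancy of $g$ then pin down $\beta^*$ exactly as the infimum of the feasible region.

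The main obstacle is attainment: unlike optimal proportionality, whose defining inequality is non-strict and therefore closed, the weak-proportionality condition is strict, so the infimum $\beta^*$ need not be attained (\eg two agents with identical strict preferences over two objects give $\beta^*=2$, yet no $1/2$ weak proportional assignment exists). I would handle this by distinguishing two cases discovered by the tests above: if $g(\beta^*)=1$, the witness assignment returned at $\beta^*$ is output directly; if $g(\beta^*)=0$ while $g$ equals $1$ just above $\beta^*$, I output the witness assignment computed at the representative point immediately above $\beta^*$. By upward-closedness of the feasible set of a fixed assignment, this witness is $1/\beta$ weak proportional for every $\beta>\beta^*$ and hence realises the optimal weak proportionality value $1/\beta^*$, which is the best any discrete assignment can achieve. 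Together with the polynomial bound on the number of $b$-matching computations, this completes the argument.
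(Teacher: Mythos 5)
Your proposal is correct and follows essentially the same route as the paper: reduce the continuum of $\beta$ values to the $O(nm^2)$ candidates $\card{\bigcup_{j=1}^{\ell}E_i^j}/s$ at which some proportionality comparison can change truth value, and test each with the $b$-matching algorithm of Theorem~\ref{th:alphaweak-SDprop-inP}. The one place you genuinely improve on the paper is the non-attainment of the infimum: the paper handles the strictness of the weak-proportionality inequalities informally via a ``$\beta-\epsilon$'' perturbation and a separate case for agents with $p(i)\sim_i^{SD}(1/\beta^*,\ldots,1/\beta^*)$, whereas your explicit appeal to upward-closedness of the feasible set in $\beta$ and to piecewise constancy on the gaps between candidates, together with sampling a representative point in each gap, makes the identification of $\beta^*$ and of a witnessing assignment rigorous.
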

	\begin{proof}
%In some cases such as when as $m<n$, $\beta^*$ may be infinite. 

If $m<n$, then by Lemma~\ref{lemma:weakpropvaluecharac}, $\beta^*$ is infinite and at least one agent cannot get an object. In that case, any assignment satisfies $1/\infty$ weak proportionality.
If $m\geq n$, then by Lemma~\ref{lemma:weakpropvaluecharac}, $\beta^*$ is finite, 
we show how to compute such a $\beta^*$ as well as the assignment corresponding to it. 
Assume that $\beta^*$ is such that there exists at least one $i\in N$ such that $p(i)\sim_i^{SD} (1/\beta^*,\ldots, 1/\beta^*)$.
In this case, for each $j\in \{1,\ldots, k_i\}$ $i$ gets $\frac{|E_i^j|}{\beta^*}$ objects from $p(i)\cap E_i^j$. Since $|E_i^j|\in \{1,\ldots, m\}$ and $|p(i)\cap E_i^j|\in \{1,\ldots, m\}$,
$\beta^*$ takes one of at most $O(m^2)$ values from set $\{a/b\midd a,b \in \{1,\ldots, m\}, a>b\}$.

Now let us assume that $\beta^*$ is such that there exists no $i\in N$ such that $p(i)\sim_i^{SD} (1/\beta^*,\ldots, 1/\beta^*)$. In that case for each agent $i\in N$, 
an assignment $p$ satisfies $1/\beta$ proportionality if for each $i\in N$ and some $k\in \{1,\ldots, k_i\}$,
\begin{equation}
\left|p(i)\cap \bigcup_{j=1}^k E_i^j\right|> \frac{|\bigcup_{j=1}^k E_i^j|}{\beta}.
\end{equation}

For an arbitrarily small rational $\epsilon>0$, this constraint is equivalent to 
\[\left|p(i)\cap \bigcup_{j=1}^k E_i^j\right|\geq \ceil{\frac{|\bigcup_{j=1}^k E_i^j|}{\beta-\epsilon}}.\]

% \begin{equation}\label{eq:tight2}
% \left|p(i)\cap \bigcup_{j=1}^k E_i^j\right|\geq \frac{|\bigcup_{j=1}^k E_i^j|}{\beta + \epsilon}.
% \end{equation}

Since $k_i\leq m$, there are in total $mn$ such constraints.
Since the left hand side of each such constraint is an integer, the overall constraint is tight if 
\begin{equation}\label{eq:tight2}
\left|p(i)\cap \bigcup_{j=1}^k E_i^j\right|= \ceil{\frac{|\bigcup_{j=1}^k E_i^j|}{\beta-\epsilon}}.
\end{equation}

If the value of $\beta-\epsilon$ is such that no proportionality constraint is tight then this means that $\beta-\epsilon$ is not optimal. Therefore, we can restrict our attention to those values of $\beta-\epsilon$ for which at least one of the constraints of the type in \eqref{eq:tight2} is tight. When a constraint is tight, both sides of the constraint take one of the values from the set $\{1,\ldots, m\}.$ 
For the tight constraint, 

\[\left|p(i)\cap \bigcup_{j=1}^k E_i^j\right|= \frac{|\bigcup_{j=1}^k E_i^j|}{\beta-\epsilon}\in (\ell,\ell+1]\]
for some $\ell\in \{0,\ldots, m-1\}$. This constraint is tight for one of the following values of $\beta-\epsilon$:

\[\left\{\frac{|\bigcup_{j=1}^k E_i^j|}{\ell+1}\midd \ell\in \{0,\ldots, m-1\}\right\}\]

It follows that if we restrict $\beta-\epsilon$ to those values for which at least one proportionality constraint is tight, then we just need to consider at most $nm^2$ values of $\beta-\epsilon$ all of which are rationals. We can informally consider these values as the values of $\beta$ because $\beta-\epsilon$ is a tiny perturbation of $\beta$.
For each of the values, we check whether a $1/\beta$ proportional assignment exists or not. 
Similarly, for each of the values from the set $\beta\in \{a/b\midd a,b \in \{1,\ldots, m\}, a>b\}$, we check whether a $\beta$ weak proportional assignment exists or not. The smallest $\beta$ for which a $1/\beta$ assignment exists is $\beta^*$. 
%If a $1/\beta$ assignment does not exist for any of such $\beta$ values, then we know that $\beta^*=\infty$. In that case, any assignment satisfies $1/\infty$ weak proportionality.
\end{proof}

It remains open whether the theorem above can be generalized to the case where the number of agents is not constant.
We note that whereas an SD proportional assignment is an optimal proportional assignment (Theorem~\ref{th:sdprop=optimal}), a weak SD proportional assignment may not be an optimal weak proportional assignment.

% We note that whereas $\alpha^*\geq n$, it need not be that $\beta^*\geq n$.
% Take for example the assignment setting with two agents and two objects where each agent strictly prefers a different object. In that case the assignment that gives each agents his most preferred object satisfies $0.99$ weak proportionality.

\begin{example}
	Assume that the preferences of the agents are as follows.
		\begin{align*}
		1:&\quad o_1,o_2,o_3,o_4,o_5\\
		2:&\quad \{o_2,o_3\},\{o_1,o_4,o_5\}
	\end{align*}
Note that the discrete assignment $p$ that gives $\{o_2,o_3\}$ to agent $1$ and the other objects to agent $2$ is weak SD proportional. In fact it is not only $1/2$ weak proportional but $(3/5-\epsilon)$
weak proportional where $\epsilon>0$ is arbitrarily small. It is not $1/\beta$ weak proportional for $1/\beta <3/5$.
	
We now consider a discrete assignment $q$, that gives $\{o_1\}$ to agent $1$ and the other objects to agent $2$.
It is easily seen that $q$ is $(1-\epsilon)$ weak proportional where $\epsilon>0$ is arbitrarily small.
Since the weak SD proportional assignment $p$ is not $1/\beta$ weak proportional for $1/\beta <3/5$ but there exists another discrete assignment $q$ that is $(1-\epsilon)$ weak proportional, it shows that 
a weak SD proportional discrete assignment may not be an optimal weak proportional assignment.
	\end{example}

Next, we provide further justification for optimal weak proportionality.
Optimal weak proportionality is equivalent to an established fairness concept called \emph{maximin} defined for a restricted domain. When $n=m$ and preferences of agents are strict, \citet{BrKi05a} defined an assignment satisfying maximin if it maximizes the minimum rank of items that any player receives. We show that for $n=m$ and strict preferences, maximin is equivalent to  optimal weak proportional.

\begin{theorem}
	For $n=m$ and strict preferences, maximin is equivalent to optimal weak proportional.
	\end{theorem}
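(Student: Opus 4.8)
The plan is to reduce both notions to the same combinatorial quantity --- the minimum over all perfect matchings of the worst rank-position handed out --- and then read off the equivalence. First I would observe that when $n=m$ and preferences are strict, every assignment that is $1/\beta$ weak proportional for some finite $\beta$ must give each agent exactly one object: an agent receiving no object is strictly SD-dominated by the reference bundle, so each agent gets at least one object, and since there are $m=n$ objects, each gets precisely one. Hence it suffices to range over perfect matchings $M$. For such an $M$, write $s_i^M$ for the rank-position (counting from the top, so $1$ is the most preferred) of the object agent $i$ receives.

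The technical heart is to compute, for a fixed matching $M$, exactly which $\beta$ make $M$ a $1/\beta$ weak proportional assignment. Fixing agent $i$ and writing $s=s_i^M$, for the $t$-th most preferred object the cumulative weight of the reference bundle $(1/\beta,\ldots,1/\beta)$ equals $t/\beta$, whereas the cumulative weight of $p(i)$ is $0$ for $t<s$ and $1$ for $t\ge s$. Thus $(1/\beta,\ldots,1/\beta)\succsim_i^{SD} p(i)$ holds if and only if $t/\beta\ge 1$ for all $t\ge s$, i.e.\ iff $\beta\le s$; and whenever $\beta\le s$ this domination is strict, because the reverse relation $p(i)\succsim_i^{SD}(1/\beta,\ldots,1/\beta)$ fails (at the cut $t=s-1$ for $s\ge 2$, and for $s=1$ because it would force $\beta\ge m$). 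Consequently agent $i$'s constraint $(1/\beta,\ldots,1/\beta)\nsucc_i^{SD} p(i)$ holds precisely when $\beta>s_i^M$, so $M$ is $1/\beta$ weak proportional for all agents if and only if $\beta>\max_i s_i^M$.

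From this the infimum characterization follows directly: the set of $\beta$ admitting some $1/\beta$ weak proportional assignment is $\bigcup_M(\max_i s_i^M,\infty)=(\min_M\max_i s_i^M,\infty)$, so $\beta^*=\min_M\max_i s_i^M$, and the optimal weak proportional assignments are exactly the matchings attaining $\min_M\max_i s_i^M$ (the infimum is not attained at $\beta^*$ itself, which matches the $\epsilon$-perturbation phenomenon already visible in the preceding examples). Finally I would translate into the language of maximin: expressing the rank of the object agent $i$ receives as $m+1-s_i^M$, the minimum rank handed out by $M$ equals $m+1-\max_i s_i^M$, so maximizing the minimum rank is equivalent to minimizing $\max_i s_i^M$. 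Hence the maximin assignments coincide exactly with the optimal weak proportional assignments.

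I expect the main obstacle to be the SD computation in the second paragraph: handling the strict-versus-weak domination at the boundary value $\beta=s_i^M$ and the degenerate top-choice case $s=1$, together with being careful that ``optimal'' here means attaining the infimum $\beta^*$ rather than literally satisfying $1/\beta^*$ weak proportionality. Matching the rank convention of \citet{BrKi05a} to the position-from-the-top count $s_i^M$ is the only other point requiring care, but it is a purely notational relabeling.
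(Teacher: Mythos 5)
Your proof is correct and follows essentially the same route as the paper's: both rest on the observation that, under strict preferences with one object per agent, a matching is $1/\beta$ weak proportional exactly when $\beta$ exceeds the worst position-from-the-top handed out, so optimizing $\beta$ coincides with optimizing the rank of the worst-placed object. Your version simply makes explicit the prefix-sum computation, the boundary case $\beta = s_i^M$, and the characterization $\beta^* = \min_M \max_i s_i^M$ that the paper's two-sentence proof leaves implicit.
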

\begin{proof}
	Assume that each agent gets an object that is at least his $k$-th ranked object. Then the assignment satisfies $1/(k-\epsilon)$ weak proportionality for some arbitrarily small $\epsilon>0$.
	We now assume that an assignment satisfies $1/(k-\epsilon)$ weak proportionality for some arbitrarily small $\epsilon>0$. In that case we know that each agent gets an object that is $k$-th or higher ranked. 
	\end{proof}

	\section{Conclusions}

	We have presented a taxonomy of fairness concepts under ordinal preferences, and identified
	the relationships between the concepts.
	Compared to refinements of the responsive set extension  to define fairness concepts~\citep{BEL10a}, using cardinal utilities and the SD  relation to define fairness concepts not only gives more flexibility (for example reasoning about entitlements) but can also be convenient for algorithm design.

	A problem with the fairness concepts presented in the paper was that the set of fair outcomes is not guaranteed to be non-empty. In view of this
	we highlighted in Section~\ref{sec:guaranteeexist} how alternative notions of maximal and maximum fairness are useful and how the respective computational problems are related to each other. Another possible way to circumvent non-existence of SD proportional or weak SD proportional assignments is to define less stringent notions by replacing the reference vector $(\frac{1}{n}, \ldots, \frac{1}{n})$ with a reference vector with $1/n$ replaced by some constant. We show that an optimal proportional discrete assignment can be computed in polynomial time whereas an optimal weak proportional discrete assignment can be computed in polynomial time if the number of agents is bounded. Both fairness notions appear to promising solution concepts for fair division of indivisible objects. They are also compatible with Pareto optimality. The optimal notions can be be further refined with respect to leximin refinements.

	% We assumed that each agent has the same entitlement to the objects. However, it could also be the case if an agent $i$ has entitlement $e_i$, then $u_i(p_i)\geq \frac{e_i}{\sum_{j\in N}e_j}\frac{\sum_{u_i(o)}}{n}$ for proportionality and $u_i(p_i)\geq \frac{e_i}{e_j}{u_i(p(j))}$ for each $j\in N$ for envy-freeness. In the same way, possible and necessary fairness can also be defined. 
%Our two algorithms for possible and necessary proportionality can also be modified to cater for entitlements by replacing $1/n$ with $\frac{e_i}{\sum_{j\in N}e_j}$ whenever a matching lower bound is specified for a vertex. 

 % Finanone of the fairness concepts are guaranteed to be non-empty. However for existence, one can simply try to find the maximal (with respect to size or set inclusion) set of agents for which the fairness condition is satisfied. For maximal (SD or weak SD) proportionality (with respect to set inclusion), our algorithms can again be adapted.

There are number of directions for future research. 
	The complexity of finding a weak SD proportional discrete assignment remains open for an unbounded number of agents with non-strict preferences. The complexity of checking whether an SD envy-free assignment exists for a bounded number of agents is open. 
	%The framework of ordinal fairness concepts also leads to a number of new directions. 
	It will be interesting to see how various approximation algorithms in the literature designed to reduce envy or maximize welfare fare in terms of satisfying ordinal notions of fairness (see, \eg \citep[][]{BeDa05a, BoLa08a, LMMS04a}). Examining other dominance notions may also be fruitful~\citep{GoPe10a}. Another avenue for positive algorithmic results is to consider parameterized algorithms. 
Finally, strategic aspects of ordinal fairness is another interesting direction for future research.

	\section*{Acknowledgments}
	% \paragraph{Acknowledgment}
	The authors thank the anonymous reviewers of the 13th International Conference on Autonomous Agents and Multiagent Systems (AAMAS 2014) \citep{AGMW14a} as well as the anonymous reviewers of the Artificial Intelligence journal. They also thank Steven Brams for comments and pointers.
	% as well as those of the \emph{Artificial Intelligence Journal} for their helpful feedback.
	NICTA is funded by the Australian Government through the Department of Communications and the Australian Research Council through the ICT Centre of Excellence Program. Serge Gaspers is the recipient of an Australian Research Council Discovery Early Career Researcher Award (project number DE120101761)
	and a Future Fellowship (project number FT140100048).

	%NRR13a

\renewcommand\refname{\vskip -1cm}
\section*{References}

	% \renewcommand{\bibfont}{\normalfont\small}

  %
  % \bibliography{../../../research/papers/pamas/abb,../../../research/papers/pamas/pamas,../../../research/papers/pamas/aziz,../../../research/papers/pamas/brandt}
  %
  %

\end{document}